\newtheorem{theorem}{Theorem}[section]
\newtheorem{corollary}[theorem]{Corollary}
\newtheorem{lemma}[theorem]{Lemma}
\theoremstyle{definition}
\newtheorem{definition}[theorem]{Definition}
\newenvironment{fminipage}%
  {\begin{Sbox}\begin{minipage}}%
  {\end{minipage}\end{Sbox}\fbox{\TheSbox}}
\newenvironment{algbox}[0]{\vskip 0.2in
\noindent 
\begin{fminipage}{6.3in}
}{
\end{fminipage}
\vskip 0.2in
}
\newcommand{\chat}{\widehat{c}}
\newcommand{\Ehat}{\widehat{E}}
\newcommand{\Fhat}{\widehat{F}}
\newcommand{\Ghat}{\widehat{G}}
\newcommand{\Shat}{\widehat{S}}
\newcommand{\Vhat}{\widehat{V}}
\newcommand{\Mincut}{\textsc{Mincut}}
\newcommand{\hc}{\hat{c}}
\renewcommand{\O}{\widetilde{O}}
\newcommand{\bs}{\backslash}
\newcommand{\localcut}{\mathrm{LocalCut}}
\newcommand{\cupdot}{\mathbin{\mathaccent\cdot\cup}}
\newcommand{\poly}{\mathrm{poly}}
\def\abs#1{\left|#1  \right|}
\newcommand{\vol}{\mathit{vol}}
\begin{document}

\title{Vertex Sparsifiers for $c$-Edge Connectivity}

\author{
	Yang P. Liu\\
	Stanford University\\
	\texttt{yangpliu@stanford.edu}
	\and
	Richard Peng\\
	Georgia Tech\footnote{Part of this work was done while visiting MSR Redmond}\\
	\texttt{richard.peng@gmail.com}
	\and
	Mark Sellke\\
	Stanford University\footnotemark[1]\\
	\texttt{msellke@stanford.edu}
}

\maketitle

\begin{abstract}
We show the existence of $O(f(c) \cdot k)$ sized vertex sparsifiers
that preserve all edge-connectivity values up to $c$ between
a set of $k$ terminal vertices, where $f(c)$ is a
function that only depends on $c$, the edge-connectivity value.
This construction is algorithmic: we also provide an algorithm whose
running time depends linearly on $k$, but exponentially in $c$.
It implies that for constant values of $c$,
an offline sequence of edge insertions/deletions and $c$-edge-connectivity
queries can be answered in polylog time per operation.
These results are obtained by combining structural results about
minimum terminal separating cuts in undirected graphs with recent
developments in expander decomposition based methods for finding
small vertex/edge cuts in graphs.
\end{abstract}

\section{Introduction}

Primitives that reduce the sizes of graphs while retaining key
properties are central to the design of efficient graphs algorithms.
Among such primitives, one of the most intriguing is the problem
of vertex sparsification: given a set of $k$ terminal vertices $S$,
reduce the number of non-terminal vertices while preserving key
information between the terminals.
This problem has been extensively studied in approximation
algorithms~\cite{MakarychevM10,CharikarLLM10,EnglertGKRTT14,
KrauthgamerR17,Kratsch12,AssadiKYV15,FafianieKQ16,FafianieHKQ16,
GoranciR16,GoranciHP17b}.
Recently, vertex sparsifiers were also shown to be
closely connected with dynamic graph
data structures~\cite{GoranciHP17a,PengSS19,GoranciHP18,DurfeeGGP19}.

Motivated by the problem of dynamic $c$-edge-connectivity,
which asks whether a pair of vertices have at least $c$ edge-disjoint
paths between them, we study vertex sparsifiers suitable these problems.
\begin{definition}
	\label{def:CLimitedVertexSparsifier}
	Two graphs $G$ and $H$ that both contain a subset of terminals
	$S$ are $(S, c)$-cut-equivalent if for any partition of $S$
	into $S = S_1 \cup S_2$, we have
	\[
	\min\left\{c, \min_{\substack{
			\Vhat \subseteq V\left( G \right)\\
			S_1 \subseteq \Vhat, 
			\Vhat \cap S_2 = \emptyset}}
		\abs{\partial_{G}\left( \Vhat \right)} \right\}
	=
	\min\left\{c, \min_{\substack{
			\Vhat \subseteq V\left( H \right)\\
			S_1 \subseteq \Vhat, 
			\Vhat \cap S_2 = \emptyset}}
	\abs{\partial_{H}\left( \Vhat \right)} \right\}.
	\]
	Here $\partial_G(\Vhat)$ denotes the set
	edges leaving $\Vhat$ in $G$, and is the same as
	$E(\Vhat, V \bs \Vhat)$.
\end{definition}
Our main result is that for any graph $G$ and terminals $S$, there is a
graph $H$ that is $(S,c)$-equivalent to $G$, where the size of $H$
depends linearly on the size of $S$, but exponentially on $c$. We call $H$
a $(S,c)$-\emph{vertex sparsifier} for $G$ on terminals $S$ as $H$ has far smaller size
than $G$ while maintaining the same $c$-edge connectivity information on the terminals $S$.

Furthermore, we utilize ideas from recent works on $c$-edge
connectivity~\cite{ForsterY19:arxiv,SaranurakW19,NanongkaiSY19b:arxiv}
to obtain efficient algorithms for the case where $c$ is a constant.
\begin{theorem}
	\label{thm:Main}
	Given any graph $G$ with $n$ vertices, $m$ edges,
	along with a subset of $k$ terminals $S$ and a value $c$,
	we can construct a graph $H$ which is
	$(S, c)$-cut-equivalent to $G$
	\begin{enumerate}
		\item \label{part:Main1} with $O(k \cdot O(c)^{c})$ edges
		in $O(m \cdot c^{O(c^2)} \cdot \log^{O(c)}n)$ time,
		\item \label{part:Main2} with $O(k \cdot O(c)^{2c})$ edges
		in $\O(m \cdot c^{O(c)})$ time \footnote{We use $\O$ to hide $\poly(\log n)$ factors. In particular, $\O(1)$ denotes $\poly(\log n)$.}.
	\end{enumerate}
\end{theorem}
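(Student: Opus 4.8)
The plan is to first prove a purely structural statement — that \emph{some} graph of the claimed size is $(S,c)$-cut-equivalent to $G$ — and then to show that this graph can be produced within the stated running times by replacing exhaustive mincut computations with expander-decomposition and local-cut primitives.

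\textbf{Structural step.} Call $\Vhat$ a \emph{cheap terminal cut} if $\abs{\partial_G(\Vhat)} < c$ and $\Vhat$ separates $S \cap \Vhat \ne \emptyset$ from $S \setminus \Vhat \ne \emptyset$. If no cheap terminal cut exists, every bipartition of $S$ already has value $\ge c$ in $G$ and only this fact must be recorded; this ``leaf'' case is handled below. Otherwise, pick a cheap terminal cut $\Vhat$, set $G_1 = G/(V \setminus \Vhat)$ and $G_2 = G/\Vhat$ (contracting the other side to one new vertex), promote to terminals in each side the $< c$ endpoints of $\partial_G(\Vhat)$ lying on that side, and recurse on $G_1$ and $G_2$. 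Working with the \emph{source-minimal} minimum $(S_1,S_2)$-cut for each bipartition — unique by submodularity of the cut function — makes the relevant family of cuts uncrossable, which is what lets one (i) bound the recursion tree by $O(g(c)\cdot k)$ leaves, since each split spawns at most $c-1$ interface terminals per side and a charging argument against the induced partition of $S$ pays for the subtree, and (ii) verify that gluing the leaf sparsifiers back together along the shared interface vertices, following the tree pattern, yields a graph $H$ that is $(S,c)$-cut-equivalent to $G$: contractions only destroy cuts, and each contracted set is never the small side of a $c$-bounded terminal cut, so every quantity $\min\{c,\cdot\}$ in Definition~\ref{def:CLimitedVertexSparsifier} is preserved. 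A final pass caps all edge multiplicities at $c$ and suppresses non-terminal vertices of degree $\le 2$; since each leaf contributes only $O(f(c))$ edges, $\abs{E(H)} = O(f(c)\cdot k)$.

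\textbf{Expander-decomposition / leaf case and algorithm.} When $G$ has no cheap terminal cut, the $c$-bounded cuts still needing to be tracked are the \emph{internal} ones. Compute a $\phi$-boundary-linked expander decomposition of $G$ as in~\cite{SaranurakW19,NanongkaiSY19b:arxiv}, with $\phi$ as small as the target running time permits: $\phi = 1/\log^{\Theta(c)} n$ for part~\ref{part:Main1} and $\phi = 1/c^{\Theta(c)}$ (up to $\poly\log n$ factors) for part~\ref{part:Main2}. Inside a $\phi$-expander every cut of size $\le c$ isolates a vertex set of volume $\le c/\phi$, so all $c$-bounded cuts of a cluster are confined to $\poly(c/\phi)$ vertices near its boundary and the complementary ``core'' can be contracted to a single vertex without affecting any $c$-bounded terminal cut. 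This reduces the instance to the cluster cores, their boundary vertices, and the at most $\phi \cdot 2m$ inter-cluster edges — a $(1-\poly(\phi))$-type shrinkage of the edge count — and one recurses. After $O(\log_{1/\phi} m)$ levels, which is $\O(1)$ when $\phi$ is a constant-degree polynomial in $c$ (giving part~\ref{part:Main2}) and $\log^{O(c)} n$ when $\phi$ is as small as $1/\log^{\Theta(c)} n$ (giving part~\ref{part:Main1}), the instance has $O(f(c)\cdot k)$ edges. The $\localcut$/local-cut subroutines of~\cite{ForsterY19:arxiv,NanongkaiSY19b:arxiv} are what keep this near-linear: the small internal cuts and the cheap terminal cuts are enumerated in time proportional to their size, times $c^{O(c^2)}$ (resp.\ $c^{O(c)}$), rather than to $m$, while the expander decompositions cost $O(m\,\phi^{-1}\,\log^{O(1)} n)$ in total.

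\textbf{Main obstacle.} The crux is the counting in the structural step: a priori there could be up to $2^k$ relevant cuts, or $n^{O(c)}$ by a Karger-type bound, and one must instead bound the recursion tree — equivalently, the number of ``atoms'' of $V$ never separated by a source-minimal $c$-bounded terminal cut — by $O(g(c)\cdot k)$ with \emph{no} dependence on $n$. This demands a careful amortized/uncrossing argument: equal-value cuts that cross are uncrossed without increasing the count, each split is charged against the partition of $S$ it induces together with the $\le c-1$ interface terminals it creates, and the exponential-in-$c$ blow-up per level must be shown to multiply out to exactly the claimed $O(c)^c$ and $O(c)^{2c}$ factors. A secondary obstacle is interfacing the local-cut primitives with the contraction/recursion so that the running-time bookkeeping closes at $O(m\cdot c^{O(c^2)}\cdot\log^{O(c)} n)$ and $\O(m\cdot c^{O(c)})$ respectively — in particular choosing $\phi$ small enough that internal $c$-cuts become trivial, yet large enough that the recursion depth, and hence the edge blow-up, stays under control.
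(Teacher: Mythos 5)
Your high-level skeleton — recurse on a small terminal cut, contract the far side, promote boundary endpoints to terminals, and use expander decomposition plus local-cut primitives to find the cuts fast — matches the spirit of the paper. But there is a genuine gap in the structural step that the paper handles with a dedicated lemma, and without it the recursion-tree bound you assert does not close.

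The gap is the treatment of \emph{isolating} cuts, i.e.\ cheap terminal cuts with a single terminal of $S$ on one side. You do not distinguish these from the ``balanced'' case, and a charging argument ``against the induced partition of $S$'' has nothing to charge when the partition is $\{s\} \cup (S\setminus\{s\})$: the recursion on the large side is on a graph with essentially the same terminal set, and without an explicit progress measure the tree can have depth $\Theta(n)$ (e.g.\ a length-$n$ path with the two endpoints as terminals and $c=1$). The paper's fix has two halves. First, \emph{Lemma~\ref{lem:OPLemma}} proves that when no \emph{non-trivial} $S$-separating cut of size $\le c$ exists (Definition~\ref{def:Nontrivial}: both sides connected and each containing $\ge 2$ terminals), the contracted graph $\Ghat = G/(E\setminus\Ehat)$ must be a star, and therefore the union of the $s$-isolating minimum cuts already contains every $(S,c)$-cut — this is the true leaf case, and it is not the case ``no cheap terminal cut at all,'' which you use. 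Second, algorithmically (Section~\ref{sec:Local}, Lemma~\ref{lem:IsolateContract} and Figure~\ref{fig:GetIntersectingEdgesSteiner}), when the Steiner minimum cut is an $s$-isolating cut one contracts across a \emph{maximal} $s$-isolating minimum cut repeatedly until the value increases; by cut monotonicity (Lemma~\ref{lem:CutMonotone}) this can happen at most $c$ times per terminal, which is exactly what bounds $|F|$ by $O(|S|c^2)$. Your phrase ``source-minimal minimum cut, unique by submodularity'' is gesturing at this — the right choice is the cut that pushes $s$'s side as far out as possible — but the resulting potential argument (each terminal sees its isolating-cut value rise at most $c$ times) is the crux and has to be stated; without it the $O(g(c)\cdot k)$ leaf count is an assertion, not a proof.

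Two smaller issues. (a) The paper separates ``a set that \emph{intersects} all $(S,c)$-cuts'' (Definition~\ref{def:Intersect}, found by a single recursion with only the contracted super-vertex added as a terminal) from ``a set that \emph{contains} all $(S,c)$-cuts'' (Definition~\ref{def:Contain}, built by iterating $\hat c = c,\dots,1$ and only then absorbing $V(F)$ into $S$, Figure~\ref{fig:GetContainingEdges}); the $O(c)^c$ growth comes from this outer loop. You fold the two levels together by adding all $<c$ boundary endpoints as terminals at every split, which changes the recurrence and would need a fresh induction to recover $O(c)^c$. (b) Your description of the expander step is muddled: if there is genuinely no cheap terminal cut then no internal cuts need ``tracking'' (they don't separate $S$, and $\min\{c,\cdot\}$ caps everything at $c$), so that sentence does not support anything. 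In the paper, expander decomposition is not a leaf case — it is run once up front, intercluster edges are removed and their endpoints added to $S$, and then either an exhaustive DFS enumeration of small connected cuts (Lemma~\ref{lem:EnumerateCuts}, giving Part~\ref{part:Slower}) or the local-cut routine (Theorem~\ref{thm:LocalFlow} and Lemma~\ref{lem:ExpanderCutFaster}, giving Part~\ref{part:Faster}) is used inside each expander piece, with $\phi$ chosen so that the $\phi m \log^4 n$ contribution to $|F|$ is dominated by $n/5$ and the whole construction is repeated $O(\log n)$ times until the vertex count stabilizes at $O(c)^c k$.
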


Both components require algorithms for computing
expander decompositions (Lemma~\ref{lem:ExpanderDecompose}).
The first uses observations made in vertex cut
algorithms~\cite{NanongkaiSY19a,
NanongkaiSY19b:arxiv,ForsterY19:arxiv}, while the second
uses local cut algorithms developed from such studies.

The more general problem of multiplicatively preserving all edge connectivities
has been extensively studied.
Here an upper bound with multiplicative approximation factor
$O(\log{k} / \log\log{k})$~\cite{CharikarLLM10,MakarychevM10}
can be obtained without using additional vertices.
It's open whether this bound can be improved when additional
vertices are allowed, but without them,
a lower bound of $\Omega( (\log{k} / \log{\log{k}})^{1/4})$
is also known~\cite{MakarychevM10}.
For our restricted version of only considering values up to $c$,
the best existential bound for
larger values of $c$ is $O(k^3 c^2)$ vertices~\cite{Kratsch12,FafianieHKQ16}.
However, the construction time of these
vertex sparsifiers are also critical for their use in data structures~\cite{PengSS19}.
For a moderate number of terminals (e.g. $k = n^{0.1}$),
nearly-linear time constructions of vertex sparisifers with $\poly(k)$
vertices were known only when $c \leq 5$
previously~\cite{PengSS19, MolinaS18:unpublished}. 

Vertex sparsification is closely connected
with dynamic graph data structures, and directly plugging in these
sparsifiers as described in Theorem~\ref{thm:Main}
into the divide-and-conquer on time framework proposed by
Eppstein~\cite{Eppstein94} (a more general form of it
can be found in~\cite{PengSS19}) gives an efficient offline algorithm for
supporting fully dynamic $c$-connectivity queries.
\begin{theorem}[Dynamic offline connectivity]
An offline sequence of edge insertions, deletions, and $c$-connectivity queries on a $n$-vertex graph $G$ can be answered
in $\O(c^{O(c)})$ time per query.
\end{theorem}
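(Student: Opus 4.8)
The plan is to feed the sparsifier construction of Theorem~\ref{thm:Main}, part~\ref{part:Main2}, into the standard divide-and-conquer-over-time framework of Eppstein~\cite{Eppstein94} (in the vertex-sparsifier form of~\cite{PengSS19}). Let $T$ be the total number of operations; by prepending at most $m$ insertions we may assume the graph starts empty, so $T=\Omega(m)$ and it suffices to bound the \emph{total} running time by $\widetilde{O}(T\cdot c^{O(c)})$. Build a balanced segment tree over the time axis $[1,T]$. For a node $v$ with interval $I_v$, say an edge is \emph{stable at $v$} if it is present at every moment of $I_v$ and modified at no moment of $I_v$; let $F_v$ be the graph of edges stable at $v$, and let the \emph{active set} $A_v$ consist of the endpoints of edges modified within $I_v$ together with all vertices appearing in a query within $I_v$. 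The invariant I would maintain while walking the tree top-down is: at node $v$ we store a graph $H_v$ that is $(A_v,c)$-cut-equivalent to $F_v$ and has $O(|A_v|\cdot c^{O(c)})$ edges.

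To maintain the invariant: at the root $F_{\mathrm{root}}=\emptyset$, so $H_{\mathrm{root}}=\emptyset$. Given $H_p$ at a node $p$ and a child $v$, note $F_p\subseteq F_v$, and every edge of $F_v\setminus F_p$ --- an edge modified somewhere in $I_p$ but stable throughout $I_v$ --- has both endpoints in $A_p$. Since adding edges whose endpoints lie in the terminal set preserves $(S,c)$-cut-equivalence, the graph $G_v'\defeq H_p\cup(F_v\setminus F_p)$ is $(A_p,c)$-cut-equivalent to $F_p\cup(F_v\setminus F_p)=F_v$; and because $A_v\subseteq A_p$ (as $I_v\subseteq I_p$), $G_v'$ is also $(A_v,c)$-cut-equivalent to $F_v$. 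Now apply Theorem~\ref{thm:Main}, part~\ref{part:Main2}, to $G_v'$ with terminals $A_v$ to obtain $H_v$, which by transitivity of cut-equivalence is $(A_v,c)$-cut-equivalent to $F_v$ and has $O(|A_v|\cdot c^{O(c)})$ edges, restoring the invariant. At a leaf $[t,t]$ carrying a query between $u$ and $w$, the current graph is exactly $F_{[t,t]}$ and $u,w\in A_{[t,t]}$, so by the invariant the answer $\min\{c,\lambda_G(u,w)\}$ equals the $c$-capped $u$--$w$ mincut in $H_{[t,t]}$, which we read off by $c$ rounds of unit-capacity augmenting-path search in $O(c\cdot|E(H_{[t,t]})|)=c^{O(c)}$ time; modification leaves require no work beyond having determined the $F_v$'s and $A_v$'s.

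For the running time I would amortize. Each operation occupies one time step, which lies in exactly one tree node per level, so $\sum_v|A_v|=\widetilde{O}(T)$, and likewise each modification is responsible for placing its edge into $F_v\setminus F_{p(v)}$ for at most one node $v$ per level (the sibling of the node whose interval contains that modification), so $\sum_v|F_v\setminus F_{p(v)}|=\widetilde{O}(T)$. Hence $\sum_v|E(G_v')|\le\sum_v\left(|E(H_{p(v)})|+|F_v\setminus F_{p(v)}|\right)=\widetilde{O}(T)\cdot c^{O(c)}$, and since Theorem~\ref{thm:Main}, part~\ref{part:Main2}, turns $G_v'$ into $H_v$ in $\widetilde{O}(|E(G_v')|\cdot c^{O(c)})$ time, summing over all nodes and adding the $c^{O(c)}$-per-leaf query cost gives total time $\widetilde{O}(T\cdot c^{O(c)})$. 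Dividing by $T$ yields the claimed $\widetilde{O}(c^{O(c)})$ per operation.

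The obstacle is not the recursion, which is routine segment-tree bookkeeping of which edges are stable and which vertices are active at each node, but the three algebraic facts about $(S,c)$-cut-equivalence that make the pipeline legal: (i) it is transitive; (ii) it is inherited by every terminal subset $S'\subseteq S$; and (iii) it is preserved when one adds, to both graphs, any edge with both endpoints in $S$ (so the variable edges $F_v\setminus F_p$ can be re-attached at each node). I would establish these first. Facts (i) and (ii) follow from Definition~\ref{def:CLimitedVertexSparsifier} upon noting that, for a partition $S'=S_1'\cup S_2'$, the capped separating-cut value is the minimum of the capped separating-cut values over all partitions of $S$ whose restriction to $S'$ is $(S_1',S_2')$; fact (iii) follows because an edge with both endpoints in $S$ crosses a cut separating $S_1$ from $S_2$ exactly when its endpoints are split by $(S_1,S_2)$, so it adds the \emph{same} fixed amount to every such cut in $G$ and in $H$ (the only subtlety being that, once this amount is added, the capping at $c$ still yields equal values because $G$ and $H$ already agree up to the cap).
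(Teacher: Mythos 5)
Your proposal is correct and implements exactly the plan the paper sketches: plugging the sparsifiers of Theorem~\ref{thm:Main} into the Eppstein~\cite{Eppstein94}/Peng--Sandlund--Sleator~\cite{PengSS19} divide-and-conquer-over-time framework, with the invariant maintenance justified by Lemmas~\ref{lem:Smaller} and~\ref{lem:AddEdges} together with the (immediate) transitivity of $(S,c)$-cut-equivalence. The paper leaves these details as a citation, so your write-up fills in precisely what that citation points to.
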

In previously published works,
the study of fully dynamic connectivity has been limited to
the $c \leq 3$ setting~\cite{GalilI91,HolmLT98,LackiS13,KaczmarzL15,Kopeliovich12:thesis},

To our knowledge, the only results
for maintaining exact $c$ connectivity for $c \geq 4$
are an incremental algorithm
for $c = 4, 5$~\cite{DinitzV94,DinitzV95,DinitzW98},
and an unpublished offline fully dynamic algorithm
for $c = 4, 5$ by
Molina and Sandlund~\cite{MolinaS18:unpublished}. These algorithms all require about $\sqrt{n}$ time per query.

Furthermore, our algorithms gives a variety of connections
between graph algorithms, structural graph theory, and data structures:
\begin{enumerate}
\item The vertex sparsifiers we constructed can be viewed
as the analog of Schur complements (vertex sparsifiers for effective resistance)
for $c$-edge-connectivity, and raises the possibility that
algorithms motivated by Gaussian elimination~\cite{KyngLPSS16,KyngS16}
can in fact work for a much wider range of graph problems.
\item Our dependence on $c$ is highly suboptimal:
we were only able to construct instances that require
at least $2ck$ edges in the vertex sparsifier,
and are inclined to believe that an upper bound of $O(ck)$
is likely.
Narrowing this gap between upper and lower bounds is an interesting
question in combinatorial graph theory that directly affect the
performances of data structures and algorithms that utilize
such sparsifiers.
\item Finally, the recent line of work on turning offline Schur
complement based algorithms into online data structures~\cite{AbrahamDKKP16,DurfeeKPRS17,DurfeeGGP19}
suggest that our construction may also be useful
in online data structures for dynamic $c$-edge connectivity.
\end{enumerate}

\subsection{Paper Organization}
In Section \ref{sec:Prelim} we give preliminaries for our algorithm. In Section \ref{sec:Outline} we give an outline for our algorithms.
In Section \ref{sec:Existence} we show the existence of good $(S,c)$-cut vertex sparsifiers. In Section \ref{sec:Local} we give a polynomial time
construction of $(S,c)$-cut vertex sparsifiers whose size is slightly larger than those given in Section \ref{sec:Existence}. In Section \ref{sec:Expander} we use expander decomposition to make our algorithms run in nearly linear time.

\section{Preliminaries}
\label{sec:Prelim}

\subsection{General Notation}
All graphs that we work with are undirected and unit-weighted,
but our treatement of cuts and contractions naturally
require (and lead to) multi-edges.
We will refer to cuts as both subsets of edges, $F \subseteq E$,
or the boundary of a subset of vertices
\[
\partial\left(\Vhat\right)
=
E\left( \Vhat, V \bs \Vhat\right)
=
\left\{ e= (u, v) \in E \mid u \in \Vhat, v \notin \Vhat \right\}.
\]
For symmetry, we will also denote cuts using the notation
$(V_1, V_2)$, with $V = V_1 \cupdot V_2$,
where $\cupdot$ is disjoint union.

We will use $S \subseteq V$ to denote a subset of terminals, and define $k = |S|.$
Note that each cut $(V_1, V_2)$ naturally induces a partition
of the terminals into $S_1 = S \cap V_1$
and $S_2 = S \cap V_2$.
For the reverse direction, given
two subsets of terminals $S_1, S_2 \subseteq V$,
we use $\Mincut(G, S_1, S_2)$ to denote an arbitrary
minimum cut between $S_1$ and $S_2$, and we let $|\Mincut(G, S_1, S_2)|$ denote its size.
Note that if $S_1$ and $S_2$ overlap, this value is
infinite: such case does not affect
Definition~\ref{def:CLimitedVertexSparsifier} because
it naturally takes the minimum with $c$.
On the other hand, it leads us to focus more on disjoint
splits of $S$, and we denote such splits using
$S = S_1 \cupdot S_2$.
For a set of terminals $S$, we refer to the set of cuts
separating them with size at most $c$ as the
$(S, c)$-cuts.

A \emph{terminal cut} is any cut that has at least one terminal on both sides of the cut. We also
sometimes refer to these as \emph{Steiner cuts}, as this language has been used in the past work
of Cole and Hariharan \cite{ColeH03}. The minimum terminal cut or minimum Steiner cut is the terminal cut
with the smallest number of edges.

\subsection{Contractions and Cut Monotonicity}

Our algorithm will use the concept of contractions. For a graph $G$ and edge $e \in E(G)$,
we let $G/e$ denote the graph with the endpoints of $e$
identified as a single vertex, and we say that we have \emph{contracted} edge $e$.
The new vertex is made a terminal if at least
one of the endpoints was a terminal. For any subset $F \subseteq E$, we let $G/F$
denote the graph where all edges in $F$ are contracted.
We can show that for any split of terminals, the value
of the min-cut between them is monotonically increasing under such contractions.
\begin{lemma}
	\label{lem:CutMonotone}
	For any split of termainls $S = S_1 \cup S_2$,
	and any set of edges $F$, we have
	\[
	\left| \Mincut\left(G, S_1, S_2\right) \right|
	\leq
	\left| \Mincut\left(G/F, S_1, S_2\right) \right|.
	\]
\end{lemma}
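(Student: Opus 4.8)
The plan is to prove this by showing that any cut in the contracted graph $G/F$ pulls back to a cut in $G$ separating the same terminal sets, with no larger size. Concretely, let $\Vhat$ be a subset of vertices of $G/F$ achieving $|\Mincut(G/F, S_1, S_2)|$, so $S_1 \subseteq \Vhat$, $\Vhat \cap S_2 = \emptyset$, and $|\partial_{G/F}(\Vhat)| = |\Mincut(G/F, S_1, S_2)|$. I would define the preimage $\Vhat' \subseteq V(G)$ to be the set of all vertices of $G$ that get identified into some vertex of $\Vhat$ under the contraction map $\pi \colon V(G) \to V(G/F)$, i.e. $\Vhat' = \pi^{-1}(\Vhat)$.

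The key steps are then: (1) verify that $\Vhat'$ is a valid separator, namely $S_1 \subseteq \Vhat'$ and $\Vhat' \cap S_2 = \emptyset$. This follows from the rule that a contracted vertex is a terminal iff one of its constituents was, together with the hypothesis that $S_1$ and $S_2$ are disjoint: every terminal of $G$ maps to a terminal of $G/F$ lying on the correct side, and conversely the preimage of a non-$S_2$ vertex contains no vertex of $S_2$ (here disjointness of $S_1, S_2$ is used; if they overlapped the right-hand side would be infinite and the inequality is trivial anyway, as the excerpt notes). (2) Exhibit an injection from $\partial_G(\Vhat')$ into $\partial_{G/F}(\Vhat)$. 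An edge $e = (u,v) \in E(G)$ lies in $\partial_G(\Vhat')$ exactly when $\pi(u) \in \Vhat$ and $\pi(v) \notin \Vhat$ (or vice versa); in particular $\pi(u) \neq \pi(v)$, so $e$ survives in $G/F$ as an edge $(\pi(u), \pi(v))$, and that edge lies in $\partial_{G/F}(\Vhat)$. Since contraction only identifies endpoints and never merges two distinct edges, this map $e \mapsto (\pi(u),\pi(v))$ is injective on $\partial_G(\Vhat')$. Hence $|\partial_G(\Vhat')| \leq |\partial_{G/F}(\Vhat)|$, and combining with step (1), $|\Mincut(G, S_1, S_2)| \leq |\partial_G(\Vhat')| \leq |\partial_{G/F}(\Vhat)| = |\Mincut(G/F, S_1, S_2)|$.

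By induction on $|F|$ it suffices to handle the single-edge case $F = \{e\}$, where the contraction map and its effect on edges are most transparent, and the general case follows by composing contractions; alternatively the argument above works verbatim for $G/F$ directly since $\pi$ is still a well-defined surjection on vertex sets. I would phrase it with the general $\pi$ to avoid an extra induction.

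The only mildly delicate point — and the one I would be most careful about — is the bookkeeping with multi-edges: contraction of $F$ can create self-loops (from edges with both endpoints in the same part) and parallel edges, so one must be sure that the edge map is well-defined on the actual (multi-)edge sets and that self-loops, which never cross any cut, do not interfere. Once one fixes the convention that $G/F$ retains all non-loop edges as distinct multi-edges, the injection in step (2) is immediate and there is no real obstacle; the lemma is essentially a restatement of the fact that contraction is a quotient operation that can only coarsen, never refine, the family of cuts.
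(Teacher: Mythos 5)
Your proof is correct. The paper actually states Lemma~\ref{lem:CutMonotone} without proof (treating it as a standard fact), and your argument is the natural one: pull back a minimum cut $\Vhat$ of $G/F$ along the contraction quotient $\pi$ to $\Vhat' = \pi^{-1}(\Vhat)$, check that $\Vhat'$ still separates $S_1$ from $S_2$ (the case where $S_1$, $S_2$ get merged, or overlap to begin with, makes the right side infinite and is vacuous), and observe that every edge of $\partial_G(\Vhat')$ has endpoints in distinct fibers, hence survives into $\partial_{G/F}(\Vhat)$ injectively. Your handling of the multi-edge/self-loop bookkeeping is also right: self-loops arise only from edges internal to a fiber, which never cross $\Vhat'$, so they cannot interfere with the count; in fact the injection you describe is a bijection between $\partial_G(\Vhat')$ and $\partial_{G/F}(\Vhat)$, so you could even state equality for this particular preimage cut, though only the inequality is needed.
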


\subsection{Observations about $(S,c)$-cut equivalence}

We start with several observations about the notion
of $(S,c)$-cut-equivalence given in Definition~\ref{def:CLimitedVertexSparsifier}.

\begin{lemma}
\label{lem:Smaller}
	If $G$ and $H$ are $(S,c)$-cut equivalent, then
	for any subset $\Shat$ of $S$,
	and any $\chat \leq c$, $G$ and $H$ are also
	$(\Shat, \chat)$-equivalent
\end{lemma}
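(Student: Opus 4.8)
The plan is to reduce an arbitrary terminal split of the smaller set $\Shat$ to a family of terminal splits of the full set $S$, apply $(S,c)$-cut-equivalence to each member of the family, and then reassemble. Throughout, write $m_G(A,B) \defeq \sizeof{\Mincut(G,A,B)}$ for the minimum of $\abs{\partial_G(\Vhat)}$ over $\Vhat$ with $A \subseteq \Vhat$ and $\Vhat \cap B = \emptyset$ (and $+\infty$ when $A \cap B \neq \emptyset$); then $(S,c)$-cut-equivalence says exactly that $\min\{c, m_G(S_1,S_2)\} = \min\{c, m_H(S_1,S_2)\}$ for every split $S = S_1 \cup S_2$. Fix $\Shat \subseteq S$, $\chat \le c$, and a split $\Shat = \Shat_1 \cup \Shat_2$. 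If $\Shat_1 \cap \Shat_2 \neq \emptyset$ both sides of the desired identity equal $\chat$ and there is nothing to prove, so I would assume $\Shat = \Shat_1 \cupdot \Shat_2$ is a disjoint split.

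The key step is the combinatorial identity
\[
m_G(\Shat_1, \Shat_2) \;=\; \min_{T_1 \cupdot T_2 = S \bs \Shat}\ m_G\!\left(\Shat_1 \cup T_1,\ \Shat_2 \cup T_2\right),
\]
where $(T_1,T_2)$ ranges over all $2$-colorings of $S \bs \Shat$, and the analogous identity for $H$. To see this, note that a set $\Vhat$ satisfies $\Shat_1 \subseteq \Vhat$, $\Vhat \cap \Shat_2 = \emptyset$ if and only if it satisfies $\Shat_1 \cup T_1 \subseteq \Vhat$, $\Vhat \cap (\Shat_2 \cup T_2) = \emptyset$ for the unique choice $T_1 = \Vhat \cap (S \bs \Shat)$, $T_2 = (S \bs \Shat) \bs \Vhat$. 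Thus the feasible region on the left is the disjoint union, over $(T_1,T_2)$, of the feasible regions on the right, and minimizing $\abs{\partial_G(\Vhat)}$ over it gives the stated identity; each right-hand region is nonempty (it contains $\Vhat = \Shat_1 \cup T_1$, since the disjointness of $\Shat_1,\Shat_2$ and of $T_1,T_2$ forces $(\Shat_1 \cup T_1) \cap (\Shat_2 \cup T_2) = \emptyset$).

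Now truncate at $c$ and use that $\min\{c,\cdot\}$ commutes with minima:
\[
\min\{c,\, m_G(\Shat_1,\Shat_2)\} \;=\; \min_{T_1 \cupdot T_2 = S \bs \Shat}\ \min\{c,\, m_G(\Shat_1 \cup T_1,\ \Shat_2 \cup T_2)\}.
\]
For each $(T_1,T_2)$ the set $S = (\Shat_1 \cup T_1) \cupdot (\Shat_2 \cup T_2)$ is a disjoint split of $S$, so by $(S,c)$-cut-equivalence the term equals $\min\{c,\, m_H(\Shat_1 \cup T_1,\ \Shat_2 \cup T_2)\}$; reassembling via the identity for $H$ gives $\min\{c,\, m_G(\Shat_1,\Shat_2)\} = \min\{c,\, m_H(\Shat_1,\Shat_2)\}$. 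Finally, apply $\min\{\chat,\cdot\}$ to both sides; since $\chat \le c$ we have $\min\{\chat, \min\{c, x\}\} = \min\{\chat, x\}$ for all $x$, so $\min\{\chat,\, m_G(\Shat_1,\Shat_2)\} = \min\{\chat,\, m_H(\Shat_1,\Shat_2)\}$, which is precisely $(\Shat,\chat)$-cut-equivalence. The only real content is the identity of the second paragraph — that terminals outside $\Shat$ are unconstrained and can be absorbed by minimizing over which side they fall on; the rest is bookkeeping with $\min\{c,\cdot\}$, and I do not anticipate any genuine obstacle, only the need to treat the degenerate cases ($\Shat$, $\Shat_1$, or $\Shat_2$ empty) uniformly, which the same argument does automatically.
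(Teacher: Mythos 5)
The paper states Lemma~\ref{lem:Smaller} without proof, treating it as an immediate observation, so there is no paper argument to compare against. Your proof is correct and is the natural way to make the observation precise: the feasible sets for a $\Shat$-split are the disjoint union, over assignments of the free terminals $S \setminus \Shat$ to one side or the other, of the feasible sets for the corresponding $S$-splits; truncating at $c$ commutes with the outer minimum; and $\min\{\chat,\cdot\}$ absorbs $\min\{c,\cdot\}$ when $\chat \le c$. The handling of the degenerate case $\Shat_1 \cap \Shat_2 \neq \emptyset$ (both sides trivially equal $\chat$) and the verification that each sub-region is nonempty are also done correctly; nothing is missing.
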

This notion is also robust to the addition of edges.
\begin{lemma}
\label{lem:AddEdges}
	If $G$ and $H$ are $(S, c)$-cut-equivalent, then for
	any additional set of edges with endpoints in $S$,
	$G \cup F$ and $H \cup F$
	are also $(S, c)$-cut-equivalent.
\end{lemma}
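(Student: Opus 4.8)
The plan is to argue directly from Definition~\ref{def:CLimitedVertexSparsifier}, by showing that adding a fixed set $F$ of edges with both endpoints in $S$ shifts every terminal mincut value in $G$ and in $H$ by exactly the same amount, so the capped equalities are preserved. Fix a partition $S = S_1 \cupdot S_2$. The key observation is that for \emph{any} graph $J$ containing $S$, and any $\Vhat \subseteq V(J)$ with $S_1 \subseteq \Vhat$ and $\Vhat \cap S_2 = \emptyset$, the edges of $F$ that cross $(\Vhat, V\bs\Vhat)$ are precisely those $e = (u,v) \in F$ with $u \in S_1$, $v \in S_2$ (or vice versa): since both endpoints of each edge of $F$ lie in $S$, and $S_1 \subseteq \Vhat$, $S_2 \cap \Vhat = \emptyset$, the side of $\Vhat$ on which each endpoint of $e$ falls is completely determined by whether that endpoint is in $S_1$ or $S_2$. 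Hence
\[
\abs{\partial_{J \cup F}(\Vhat)} = \abs{\partial_{J}(\Vhat)} + \abs{F(S_1, S_2)},
\]
where $F(S_1,S_2)$ denotes the set of edges of $F$ with one endpoint in $S_1$ and the other in $S_2$, a quantity that depends only on $F$ and the partition $(S_1, S_2)$, not on $J$ or on $\Vhat$.

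Given this, I would take the minimum over all valid $\Vhat$ on both sides. Writing $\mu_J \defeq \min_{\Vhat} \abs{\partial_J(\Vhat)}$ over the constrained family (equivalently $\abs{\Mincut(J, S_1, S_2)}$), the identity above gives $\mu_{J \cup F} = \mu_J + \abs{F(S_1,S_2)}$ for $J \in \{G, H\}$; note the constrained family of sets $\Vhat$ is the same for $J$ and $J \cup F$ since adding edges does not change the vertex set. It remains to pass through the truncation at $c$. If $\abs{F(S_1,S_2)} \geq c$ then both sides of the desired equality equal $c$ and we are done. Otherwise, let $t = \abs{F(S_1,S_2)} < c$; the $(S,c)$-cut-equivalence of $G$ and $H$ applied at level $c$ gives $\min\{c,\mu_G\} = \min\{c,\mu_H\}$, but I actually want $\min\{c, \mu_G + t\} = \min\{c, \mu_H + t\}$, i.e. $\min\{c - t, \mu_G\} = \min\{c-t, \mu_H\}$. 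This is exactly $(S, c-t)$-cut-equivalence of $G$ and $H$ for the partition $(S_1, S_2)$, which follows from Lemma~\ref{lem:Smaller} (taking $\Shat = S$, $\chat = c - t \leq c$). Adding $t$ back to both sides and taking the minimum with $c$ yields the claim for this partition, and since the partition was arbitrary, $G \cup F$ and $H \cup F$ are $(S,c)$-cut-equivalent.

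I do not expect a serious obstacle here; the only point requiring care is the interaction between the additive shift by $\abs{F(S_1,S_2)}$ and the cap at $c$, which is handled cleanly by invoking Lemma~\ref{lem:Smaller} at the reduced threshold $c - t$. One should also double-check the degenerate case where $F$ itself might create a situation making $S_1$ and $S_2$ "overlap" — but this cannot happen, since $F$ only adds edges, never identifies vertices, so $S_1$ and $S_2$ remain disjoint subsets of $V(G\cup F) = V(G)$ and of $V(H \cup F) = V(H)$, and the mincut values stay finite whenever they were finite before.
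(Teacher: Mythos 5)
Your proof is correct. The paper states Lemma~\ref{lem:AddEdges} without a written proof, so there is no "official" argument to compare against, but the route you take is the natural one: since every edge of $F$ has both endpoints in $S = S_1 \cupdot S_2$, and the constrained family of sets $\Vhat$ (those with $S_1 \subseteq \Vhat$, $\Vhat \cap S_2 = \emptyset$) forces every $S_1$-endpoint inside and every $S_2$-endpoint outside, the $F$-contribution to $\abs{\partial_{J\cup F}(\Vhat)}$ is the constant $t = \abs{F(S_1,S_2)}$, uniform over $\Vhat$ and over $J \in \{G, H\}$. This correctly reduces the claim to comparing $\min\{c, \mu_G + t\}$ with $\min\{c, \mu_H + t\}$ given $\min\{c,\mu_G\} = \min\{c,\mu_H\}$ and $t \ge 0$.

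One small remark: the detour through Lemma~\ref{lem:Smaller} at threshold $c-t$ is valid but slightly more than is needed. Once you have $\mu_{J\cup F} = \mu_J + t$ with $t \ge 0$, the implication $\min\{c,\mu_G\}=\min\{c,\mu_H\} \Rightarrow \min\{c,\mu_G+t\}=\min\{c,\mu_H+t\}$ is elementary: the hypothesis forces either $\mu_G = \mu_H$ (when at least one is below $c$) or both $\mu_G, \mu_H \ge c$, and each case is immediate after adding $t$. Either way, the argument is sound, and the endgame observation that $F$ only adds edges (never merges vertices, so $S_1$ and $S_2$ remain disjoint) is a worthwhile sanity check.
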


When used in the reverse direction, this lemma says that we
can remove edges, as long as we include their endpoints as
terminal vertices.
\begin{corollary}
	\label{lem:RemoveEdges}
	Let $F$ be a set of edges in $G$ with endpoints
	$V(F)$, and $S$ be a set of terminals in $G$.
	If $H$ is $(S \cup V(F), c)$-cut equivalent
	to $G \bs F$, then $H \cup F$, which is
	$H$ with $F$ added, is $(S, c)$-cut equivalent to $G$.
\end{corollary}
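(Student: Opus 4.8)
The plan is to reduce the statement directly to Lemma~\ref{lem:AddEdges}. We are given that $H$ is $(S \cup V(F), c)$-cut equivalent to $G \bs F$. Note that every edge of $F$ has both endpoints in $V(F) \subseteq S \cup V(F)$, so $F$ qualifies as a legal "additional set of edges with endpoints in $S \cup V(F)$" in the sense of Lemma~\ref{lem:AddEdges} (applied with terminal set $S \cup V(F)$ in place of $S$). Applying that lemma to the pair $G \bs F$ and $H$ therefore yields that $(G \bs F) \cup F$ and $H \cup F$ are $(S \cup V(F), c)$-cut equivalent.

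The next step is to observe that $(G \bs F) \cup F = G$: deleting the edge set $F$ and then re-adding exactly those edges recovers $G$ edge-for-edge (all graphs here are unit-weighted multigraphs, so there is no subtlety about multiplicities — each deleted copy is restored). Hence $G$ and $H \cup F$ are $(S \cup V(F), c)$-cut equivalent. Finally, since $S \subseteq S \cup V(F)$, we invoke Lemma~\ref{lem:Smaller} with $\Shat = S$ and $\chat = c$ to conclude that $G$ and $H \cup F$ are also $(S, c)$-cut equivalent, which is exactly the claim.

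There is essentially no obstacle here; the only point requiring a moment's care is bookkeeping about the vertex set of $H$. For $H \cup F$ to even be well-defined we need $V(F) \subseteq V(H)$, but this is guaranteed because $H$, being $(S \cup V(F), c)$-cut equivalent to $G \bs F$, must contain $S \cup V(F)$ among its vertices by the definition of cut equivalence (both graphs are required to "contain a subset of terminals" equal to $S \cup V(F)$). So the chain Lemma~\ref{lem:AddEdges} $\to$ identity $(G\bs F)\cup F = G$ $\to$ Lemma~\ref{lem:Smaller} goes through cleanly, and the corollary follows.
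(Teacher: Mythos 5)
Your proposal is correct and matches the route the paper intends: the paper explicitly frames Corollary~\ref{lem:RemoveEdges} as the ``reverse direction'' of Lemma~\ref{lem:AddEdges}, which is exactly the reduction you carry out (apply Lemma~\ref{lem:AddEdges} with terminal set $S \cup V(F)$, observe $(G \bs F) \cup F = G$, then shrink the terminal set via Lemma~\ref{lem:Smaller}). You have simply made explicit the one small step the paper leaves implicit, namely the final invocation of Lemma~\ref{lem:Smaller} to pass from $(S \cup V(F), c)$-equivalence to $(S, c)$-equivalence.
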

We complement this partition process by showing that sparsifiers
on disconnected graphs can be built separately.
\begin{lemma}
\label{lem:Independent}
	If $G_1$ is $(S_1, c)$-cut-equivalent to $H_1$,
	and $G_2$ is $(S_2, c)$-cut-equivalent to $H_2$,
	then the vertex-disjoint union
	of $G_1$ and $G_2$,
	is $(S_1 \cup S_2, c)$-cut-equivalent
	to the vertex-disjoint union of $H_1$ and $H_2$.
\end{lemma}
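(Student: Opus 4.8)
The plan is to verify the defining identity of Definition~\ref{def:CLimitedVertexSparsifier} directly, using the fact that cuts in a disconnected graph decompose coordinate-wise across the connected components. Fix a partition $S_1 \cup S_2 = T_1 \cupdot T_2$ of the terminal set $T := S_1 \cup S_2$, and write $T_1^{(i)} = T_1 \cap S_i$, $T_2^{(i)} = T_2 \cap S_i$ for $i = 1, 2$, so that $(T_1^{(i)}, T_2^{(i)})$ is the induced split of $S_i$. Let $G = G_1 \sqcup G_2$ (vertex-disjoint union) and $H = H_1 \sqcup H_2$. The key observation is that any $\Vhat \subseteq V(G)$ with $T_1 \subseteq \Vhat$ and $\Vhat \cap T_2 = \emptyset$ is determined by its restrictions $\Vhat \cap V(G_1)$ and $\Vhat \cap V(G_2)$, and conversely any two such restrictions (each containing the relevant $T_1^{(i)}$ and avoiding $T_2^{(i)}$) glue to a valid $\Vhat$; moreover, since $G_1$ and $G_2$ share no edges, $\abs{\partial_G(\Vhat)} = \abs{\partial_{G_1}(\Vhat \cap V(G_1))} + \abs{\partial_{G_2}(\Vhat \cap V(G_2))}$. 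Hence
\[
\min_{\substack{\Vhat \subseteq V(G) \\ T_1 \subseteq \Vhat,\ \Vhat \cap T_2 = \emptyset}} \abs{\partial_G(\Vhat)}
= \abs{\Mincut(G_1, T_1^{(1)}, T_2^{(1)})} + \abs{\Mincut(G_2, T_1^{(2)}, T_2^{(2)})},
\]
and the analogous identity holds for $H$.

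Next I would handle the truncation at $c$. The goal is to show
\[
\min\left\{c,\ a_1 + a_2\right\} = \min\left\{c,\ b_1 + b_2\right\},
\]
where $a_i = \abs{\Mincut(G_i, T_1^{(i)}, T_2^{(i)})}$ and $b_i = \abs{\Mincut(H_i, T_1^{(i)}, T_2^{(i)})}$. By hypothesis, $G_i$ and $H_i$ are $(S_i, c)$-cut-equivalent, which gives $\min\{c, a_i\} = \min\{c, b_i\}$ for each $i$; note that when $T_1^{(i)}$ and $T_2^{(i)}$ overlap or one is empty, both sides are interpreted via the $\min$ with $c$ exactly as in the definition, so this still holds. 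It then suffices to prove the elementary fact that if $\min\{c, a_1\} = \min\{c, b_1\}$ and $\min\{c, a_2\} = \min\{c, b_2\}$ (for nonnegative integers, possibly $+\infty$), then $\min\{c, a_1 + a_2\} = \min\{c, b_1 + b_2\}$. This follows by a short case analysis: if some $a_i \geq c$ then the corresponding $b_i \geq c$ and both sums are $\geq c$, making both sides equal $c$; otherwise $a_i = b_i < c$ for both $i$, so $a_1 + a_2 = b_1 + b_2$ exactly.

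I expect the only genuine subtlety — and thus the main thing to be careful about — is the bookkeeping around degenerate induced splits: when the global partition puts all of some $S_i$ on one side, the induced split of $S_i$ is trivial and $\abs{\Mincut(G_i, \cdot, \cdot)}$ should be read as $+\infty$ (there is no cut separating a set from the empty set), which is consistent with the convention laid out after Definition~\ref{def:CLimitedVertexSparsifier} and behaves correctly under the $\min$ with $c$. With that convention fixed, the decomposition of cuts and the elementary truncation fact above combine to give the claim; extending from two terminal sides to the full partition of $S_1 \cup S_2$ requires no further ideas.
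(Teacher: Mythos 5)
The paper states Lemma~\ref{lem:Independent} without proof, so there is no proof of record to compare against; that said, your approach (decompose cuts coordinate-wise across the two components, then reduce to a one-line arithmetic fact about truncated minima) is exactly the natural one and the bulk of the argument is correct. The decomposition $\abs{\partial_G(\Vhat)} = \abs{\partial_{G_1}(\Vhat \cap V(G_1))} + \abs{\partial_{G_2}(\Vhat \cap V(G_2))}$, the resulting identity expressing the global minimum as a sum of component minima, and the elementary truncation fact $\bigl(\min\{c,a_i\}=\min\{c,b_i\}\text{ for } i=1,2\bigr) \Rightarrow \min\{c,a_1+a_2\}=\min\{c,b_1+b_2\}$ with its two-case analysis are all sound.

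The one flaw is in your closing remark about degenerate induced splits, and it is a genuine error, not just loose phrasing. When the global partition places all of $S_i$ on one side, so that $T_1^{(i)}=\emptyset$ (say), the correct value of $\abs{\Mincut(G_i, T_1^{(i)}, T_2^{(i)})}$ is $0$, not $+\infty$: the feasible set in the definition is nonempty (take $\Vhat_i=\emptyset$, or $\Vhat_i=V(G_i)$ in the other degenerate case), and the minimum of $\abs{\partial_{G_i}(\cdot)}$ over it is $0$. The $+\infty$ convention in the paper applies only when $S_1$ and $S_2$ \emph{overlap}, i.e.\ when the feasible set is genuinely empty. If you adopted $+\infty$ here as you propose, your decomposition identity would fail: the left-hand side (the true min-cut in $G_1 \sqcup G_2$ for the partition $(T_1,T_2)$) is a finite quantity, namely $0 + a_2$, whereas the right-hand side would read $+\infty + a_2 = +\infty$. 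The truncation step would then appear to give the right answer, but only because both sides are wrong in the same way, which does not yield a proof. With the correct convention $a_i = b_i = 0$ in the degenerate case, every step goes through (and that case does not even require the $(S_i,c)$-equivalence hypothesis). This is a local fix, but it should be made.
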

When $F \subseteq E$ is a cut, combining Lemma~\ref{lem:RemoveEdges} and \ref{lem:Independent} allow us to recurse
on the connected components of $G\bs F$, provided that we
add the endpoints of the edges in $F$ as terminals.

\subsection{Edge Reductions}

Furthermore, we can restrict our attention to sparse graphs
only~\cite{NagamochiI92}.
\begin{lemma}
	\label{lem:EdgeReduction}
	Given any graph $G = (V, E)$ on $n$ vertices and any $c \geq 0$,
	we can find in $O(cm)$ time a graph $H$ on the same
	$n$ vertices, but with at most $c(n - 1)$ edges,
	such that $G$ and $H$ are $(S, c)$-cut-equivalent.
\end{lemma}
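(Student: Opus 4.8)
The plan is to use the classical sparse-certificate construction of Nagamochi and Ibaraki~\cite{NagamochiI92}. First I would compute a \emph{maximal forest decomposition} of $G$: greedily partition the edge set as $E = E_1 \cupdot E_2 \cupdot \dots \cupdot E_\ell$, where $E_i$ is a spanning forest (equivalently, a maximal acyclic subgraph) of the graph $G_i \defeq (V, E \bs (E_1 \cup \dots \cup E_{i-1}))$. Each $E_i$ is obtained by one BFS/DFS traversal of $G_i$, so producing $E_1, \dots, E_c$ takes $O(cm)$ time in total (indeed $O(m)$ using the scan-first search of~\cite{NagamochiI92}, but the weaker bound suffices). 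I would then set $H \defeq (V, E_1 \cup \dots \cup E_c)$. Each $E_i$ is a forest on $n$ vertices and so has at most $n-1$ edges, giving $H$ at most $c(n-1)$ edges; moreover $H$ is a subgraph of $G$ on the same vertex set, so it contains $S$.

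The heart of the argument is the claim that for every $\Vhat \subseteq V$,
\[
\abs{\partial_H(\Vhat)} \;\geq\; \min\setof{c, \abs{\partial_G(\Vhat)}}.
\]
To prove it, fix $\Vhat$, write $t = \abs{\partial_G(\Vhat)}$, and let $a_i = \abs{E_i \cap \partial_G(\Vhat)}$ be the number of cut edges landing in the $i$-th forest, so $\sum_i a_i = t$. The key observation is that if $G_i$ contains any cut edge $(u,v)$ with $u \in \Vhat$, $v \notin \Vhat$, then $u$ and $v$ lie in the same connected component of $G_i$, so the spanning forest $E_i$ contains a $u$--$v$ path, which must itself cross the cut; hence $a_i \geq 1$. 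Contrapositively, $a_i = 0$ forces $G_i$ to contain no cut edge, and then neither does any $G_j$ with $j > i$, so $a_j = 0$ for all $j \geq i$ and therefore $\sum_{l < i} a_l = t$. Consequently either $a_1, \dots, a_c \geq 1$, whence $\abs{\partial_H(\Vhat)} = \sum_{l=1}^{c} a_l \geq c$, or some $a_i = 0$ with $i \leq c$, whence $\abs{\partial_H(\Vhat)} \geq \sum_{l < i} a_l = t$; in both cases $\abs{\partial_H(\Vhat)} \geq \min\{c, t\}$, as claimed.

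Finally I would deduce $(S,c)$-cut equivalence. Since $H$ is a subgraph of $G$ on the same vertex set, any $\Vhat$ has $\partial_H(\Vhat) \subseteq \partial_G(\Vhat)$, so for every split $S = S_1 \cupdot S_2$ we trivially get $\abs{\Mincut(H, S_1, S_2)} \leq \abs{\Mincut(G, S_1, S_2)}$. Conversely, letting $\Vhat^\star$ achieve the minimum cut in $H$ between $S_1$ and $S_2$, the displayed inequality gives $\abs{\partial_H(\Vhat^\star)} \geq \min\{c, \abs{\partial_G(\Vhat^\star)}\} \geq \min\{c, \abs{\Mincut(G, S_1, S_2)}\}$. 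Taking $\min$ with $c$ on both sides yields equality of $\min\{c, \abs{\Mincut(\cdot, S_1, S_2)}\}$ for $G$ and $H$ on every disjoint split of $S$ (overlapping splits give $+\infty$ on both sides), which is exactly Definition~\ref{def:CLimitedVertexSparsifier}. I do not anticipate a genuine obstacle here, since this is a classical fact; the only point needing care is the connectivity observation in the claim — that a spanning forest of a graph possessing a cut edge must itself possess one — together with noting that distinctness of the counted cut edges is automatic because the forests $E_i$ are pairwise edge-disjoint.
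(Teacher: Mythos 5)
Your proposal is correct and follows the same approach as the paper's own proof: repeatedly extract maximal spanning forests à la Nagamochi--Ibaraki and observe that each forest must contribute at least one edge to every non-empty surviving cut. You simply flesh out the details (the $a_i$ bookkeeping, the contrapositive argument for when $a_i = 0$, and the final translation to Definition~\ref{def:CLimitedVertexSparsifier}) that the paper states more tersely.
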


\begin{proof}
	Consider the following routine: repeat $c$ iterations of
	finding a maximal
	spanning forest from $G$, remove it from $G$ and add it to $H$.
	
	Each of the steps takes $O(m)$ time, for a total of $O(mc)$.
	Also, a maximal spanning tree has the property that for non-empty
	cut, it contains at least one edge from it.
	Thus, for any cut $\partial(S)$, the $c$ iterations adds at least
	\[
	\min\left\{c, \abs{\partial\left( S \right)} \right\}
	\]
	edges to $H$, which means up to a value of $c$, all cuts
	in $G$ and $H$ are the same.
\end{proof}
Note however that sparse is not the same as bounded degree:
for a star graph, we cannot reduce the degree of its center vertex
without changing connectivity.

\subsection{Edge Containment of Terminal Cuts}

Our construction of vertex sparsifiers utilizes an intermediate
goal similar to the construction of $(S, 5)$-cut-sparsifiers
by Molina and Sandlund~\cite{MolinaS18:unpublished}.
Specifically, we want to find a subset of edges $F$ so that for any
separation of $S$ has a minimum cut using only the edges from $F$.

\begin{definition}
	\label{def:Contain}
	In a graph $G = (V, E)$ with terminals $S$,
	a subset of edges $F \subseteq E$ is said to contain all
	$(S, c)$-cuts if for any split
	$S = S_1 \cupdot S_2$ with $\hc = |\Mincut(G,S_1,S_2)| \le c$,
	there is a subset of $F$ of size $\hc$ which
	is a cut between $S_1$ and $S_2$.
\end{definition}

Note that this is different than containing all the minimum cuts:
on a length $n$ path with two endpoints as terminals,
any intermediate edge contains a minimum terminal cut,
but there are up to $n -1$ different such minimum cuts.

Such containment sets are useful because
we can form a vertex sparsifier by
contracting the rest of the edges.
\begin{lemma}
	\label{lem:EdgesToSparsifier}
	If $G = (V, E)$ is a connected graph with terminals $S$, and
	$F$ is a subset of edges that contain all $(S, c)$-cuts,
	then the graph
	\[
	H = G / \left(E \bs F\right)
	\]
	is a $(S, c)$-cut-equivalent to $G$,
	and has at most $|F| + 1$ vertices.
\end{lemma}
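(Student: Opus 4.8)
The plan is to establish the two assertions of the lemma separately: the bound $|V(H)| \le |F|+1$, and the $(S,c)$-cut-equivalence of $G$ and $H$.

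First I would handle the vertex count. The vertices of $H = G/(E\setminus F)$ are exactly the connected components of the graph $(V, E\setminus F)$, and the edges of $H$ are the edges of $F$, with their endpoints relabelled to the components containing them. Since $G$ is connected and $E = F \cup (E\setminus F)$, the graph $H$ is connected; a connected graph on $t$ vertices has at least $t-1$ edges, so $t \le |F|+1$.

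Next, the equivalence. Fix an arbitrary disjoint split $S = S_1 \cupdot S_2$ and write $\hc = |\Mincut(G, S_1, S_2)|$; the goal is $\min\{c, \hc\} = \min\{c, |\Mincut(H, S_1, S_2)|\}$. One inequality is free: $H$ is a contraction of $G$, so Lemma~\ref{lem:CutMonotone} gives $\hc \le |\Mincut(H, S_1, S_2)|$ (using the convention that this is $\infty$ should the images of $S_1, S_2$ collide in $H$), hence $\min\{c,\hc\}\le \min\{c, |\Mincut(H,S_1,S_2)|\}$. For the reverse inequality, if $\hc > c$ there is nothing to do, so suppose $\hc \le c$. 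Since $F$ contains all $(S,c)$-cuts (Definition~\ref{def:Contain}), there is $F'\subseteq F$ with $|F'| = \hc$ separating $S_1$ from $S_2$ in $G$; equivalently there is $\Vhat\subseteq V$ with $S_1\subseteq\Vhat$, $\Vhat\cap S_2 = \emptyset$, and $\partial_G(\Vhat) = F'$ (it must equal $F'$, not merely be contained in it, since $|\partial_G(\Vhat)|\ge\hc = |F'|$). Because $F'\subseteq F$, no edge of $E\setminus F$ crosses $\Vhat$, so $\Vhat$ is a union of connected components of $(V, E\setminus F)$; therefore contracting $E\setminus F$ never identifies a vertex of $\Vhat$ with one outside it. Letting $\Vhat_H$ denote the image of $\Vhat$, we get $\partial_H(\Vhat_H) = F'$, and the conditions $S_1\subseteq\Vhat$, $S_2\cap\Vhat=\emptyset$ descend to $H$, so $\Vhat_H$ is a valid $S_1$--$S_2$ separating set in $H$ of size $\hc$. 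Thus $|\Mincut(H,S_1,S_2)|\le\hc$, i.e. $\min\{c,|\Mincut(H,S_1,S_2)|\}\le\hc = \min\{c,\hc\}$, giving equality.

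The hard part will be the middle assertion in the previous paragraph: that the separating set $F'$ guaranteed by the containment hypothesis survives the contraction, i.e. that contracting $E\setminus F$ does not reconnect $S_1$ to $S_2$. This is exactly where it matters that $F'$ uses only edges of $F$ --- each component of $(V, E\setminus F)$ then lies wholly on one side of the cut $F'$, so the cut descends to $H$; were $F'$ allowed to use edges outside $F$ this would fail. Everything else is bookkeeping with the definitions together with the already-established monotonicity Lemma~\ref{lem:CutMonotone}.
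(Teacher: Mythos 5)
Your proposal is correct and follows essentially the same approach as the paper: the key observation in both is that a cut using only $F$-edges has each component of $(V, E\setminus F)$ wholly on one side, so it survives the contraction, while the vertex bound in both reduces to the fact that a connected graph on $t$ vertices needs at least $t-1$ edges (the paper phrases this as uncontracting $F$ from a single vertex). Your write-up is somewhat more explicit than the paper's in spelling out the easy direction via Lemma~\ref{lem:CutMonotone} and the exact equality $\partial_G(\Vhat)=F'$, but the argument is the same.
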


\begin{proof}
	Consider any cut using entirely edges in $F$:
	contracting edges from $E \bs F$ will bring together
	vertices on the same side of the cut.
	Therefore, the separation of vertices given by this cut
	also exists in $H$ as well.

	To bound the size of $H$, observe that contracting all edges
	of $G$ brings it to a single vertex.
	That is, $H / F$ is a single vertex: uncontracting an edge
	can increase the number of vertices by at most $1$,
	so $H$ has at most $|F| + 1$ vertices.
\end{proof}

We can also state Lemmas \ref{lem:RemoveEdges} and \ref{lem:Independent} in the
language of edge containment.

\begin{lemma}
\label{lem:ContainRemoveEdges}
	Let $F$ be a set of edges in $G$ with endpoints
	$V(F)$, and $S$ be a set of terminals in $G$.
	If edges $\hat{F}$ contain all $(S \cup V(F), c)$-cuts
	in $G \bs F$, then $\hat{F} \cup F$
	contains all $(S, c)$-cuts in $G$.
\end{lemma}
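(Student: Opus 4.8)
The plan is to unwind Definition~\ref{def:Contain} directly. Fix a split $S = S_1 \cupdot S_2$ with $\hc \defeq |\Mincut(G,S_1,S_2)| \le c$; if $S_1$ or $S_2$ is empty then $\hc = 0$ and the empty set is the desired cut, so assume both are nonempty. I would first fix a concrete minimum cut $D = \Mincut(G,S_1,S_2)$. A minimum cut is minimal, so $D = \partial_G(A)$ where $A$ is the set of vertices reachable from $S_1$ in $G \setminus D$; set $B = V \setminus A$, so that $S_1 \subseteq A$ and $S_2 \subseteq B$. This partition extends to the terminal set of $G \setminus F$: let $T_1 = (S \cup V(F)) \cap A$ and $T_2 = (S \cup V(F)) \cap B$, so that $S_i \subseteq T_i$ and $T_1 \cupdot T_2 = S \cup V(F)$.

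In $G \setminus F$ the set $A$ still separates $T_1$ from $T_2$, and its boundary there is exactly $D \setminus F$, so $|\Mincut(G \setminus F, T_1, T_2)| \le |D \setminus F| \le |D| = \hc \le c$. Applying the hypothesis that $\hat{F}$ contains all $(S \cup V(F), c)$-cuts in $G \setminus F$ then gives a set $\hat{D} \subseteq \hat{F}$ with $|\hat{D}| = |\Mincut(G \setminus F, T_1, T_2)| \le |D \setminus F|$ that separates $T_1$ from $T_2$ in $G \setminus F$. Since $\hat{F} \subseteq E(G) \setminus F$, the sets $\hat{D}$ and $D \cap F$ are disjoint, so the candidate $D' \defeq \hat{D} \cup (D \cap F) \subseteq \hat{F} \cup F$ has $|D'| = |\hat{D}| + |D \cap F| \le |D \setminus F| + |D \cap F| = \hc$. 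The remaining, and real, task is to show that $D'$ separates $S_1$ from $S_2$ in $G$: once this holds, minimality of $\hc$ forces $|D'| = \hc$, and $D'$ is exactly the cut demanded by Definition~\ref{def:Contain}.

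To prove $D'$ is a cut I would argue at the level of connected components, tracking which side of $(A,B)$ a vertex is on only through the terminal set $S \cup V(F)$. Write $G \setminus D'$ as $(V, E_1 \cup E_2)$, where $E_1 = (E \setminus F) \setminus \hat{D}$ is the edge set of $(G \setminus F) \setminus \hat{D}$ and $E_2 = F \setminus D$ consists of the $F$-edges that do not cross $(A,B)$. By the choice of $\hat{D}$, no connected component of $(V, E_1)$ meets both $T_1$ and $T_2$. Every edge of $E_2$ has both endpoints in $V(F) \subseteq T_1 \cupdot T_2$ and both on the same side of $(A,B)$, hence both in $T_1$ or both in $T_2$; so adding the edges of $E_2$ one at a time to $(V, E_1)$ never merges a component meeting $T_1$ with one meeting $T_2$ (a short induction whose base case is the previous sentence). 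Hence no component of $G \setminus D'$ meets both $T_1$ and $T_2$, and since $S_1 \subseteq T_1$ and $S_2 \subseteq T_2$, the sets $S_1$ and $S_2$ lie in disjoint families of components, so $D'$ is a cut between $S_1$ and $S_2$ in $G$.

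The one delicate point is this last step. A priori a path from $S_1$ to $S_2$ surviving in $G \setminus D'$ could cross $(A,B)$ through edges of $D \setminus F$ that happen to avoid $\hat{D}$; what makes the argument work is that any such crossing occurs inside a single component of $(G \setminus F) \setminus \hat{D}$, which cannot contain both a $T_1$- and a $T_2$-vertex, while the only edges bridging distinct such components are $F$-edges, each of which is either deleted (it lies in $D \cap F \subseteq D'$) or else confined to one side of $(A,B)$ and so joins two terminals of the same type. Making the component-merging induction precise is the crux; everything else is bookkeeping. Finally, note that although this lemma is the edge-containment analogue of Corollary~\ref{lem:RemoveEdges}, it does not follow formally from Corollary~\ref{lem:RemoveEdges} and Lemma~\ref{lem:EdgesToSparsifier} alone — contracting $E \setminus (\hat{F} \cup F)$ can hide the absence of a minimum cut inside $\hat{F} \cup F$ exactly when $\hc = c$ — so the direct argument above is needed.
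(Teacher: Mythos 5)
Your proof is correct, and it is complete. The paper does not actually give an argument for this lemma; it is presented as an immediate restatement of Corollary~\ref{lem:RemoveEdges} in the language of edge containment, so there is no in-paper proof to compare against, and your direct unwinding of Definition~\ref{def:Contain} fills in what the authors treated as obvious. The construction $D' = \hat{D} \cup (D \cap F)$ together with the size bound $|D'| \le |D\setminus F| + |D\cap F| = \hc$ and the appeal to minimality is the right route, and the component-merging induction is sound: every edge of $F\setminus D$ has both endpoints in $V(F) \subseteq T_1 \cup T_2$ and both on the same side of the fixed minimum cut $(A,B)$, so adding it never joins a $T_1$-component of $(G\setminus F)\setminus\hat{D}$ to a $T_2$-component, while the edges of $D\setminus F$ that escape $\hat{D}$ are confined inside those already-safe components. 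Your closing remark is also accurate and worth making: edge containment is strictly stronger than cut-equivalence-after-contraction at the boundary value $\hc = c$, so the lemma genuinely does not reduce to Corollary~\ref{lem:RemoveEdges} plus Lemma~\ref{lem:EdgesToSparsifier} (and $G\setminus F$ may in any case be disconnected, so Lemma~\ref{lem:EdgesToSparsifier} would also need Lemma~\ref{lem:Independent} to even apply), which is exactly why a direct argument like yours is the right thing to write down.
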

\begin{lemma}
\label{lem:ContainIndependent}
	If the edges $F_1 \subseteq E(G_1)$ contain all $(S_1,c)$-cuts in $G_1$, and
	the edges $F_2 \subseteq E(G_2)$ contain all $(S_2,c)$-cuts in $G_2$
	, then $F_1 \cup F_2$ contains
	all the $(S_1 \cup S_2, c)$-cuts in the vertex disjoint union of $G_1$ and $G_2$.
\end{lemma}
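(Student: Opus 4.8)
This is the "edge-containment" analog of Lemma \ref{lem:Independent}, so I would prove it directly by a short combinatorial argument that mirrors the proof of that lemma but tracks the actual edges of a minimum cut rather than just its value. Let $G = G_1 \sqcup G_2$ be the vertex-disjoint union, let $S = S_1 \cup S_2$, and fix a split $S = T_1 \cupdot T_2$ with $\hc = |\Mincut(G, T_1, T_2)| \le c$. Because $G$ has no edges between $V(G_1)$ and $V(G_2)$, every cut of $G$ decomposes as a cut of $G_1$ together with a cut of $G_2$: a vertex set $\Vhat \subseteq V(G)$ splits as $\Vhat_1 = \Vhat \cap V(G_1)$ and $\Vhat_2 = \Vhat \cap V(G_2)$, and $\partial_G(\Vhat) = \partial_{G_1}(\Vhat_1) \cupdot \partial_{G_2}(\Vhat_2)$. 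Hence the cheapest way to separate $T_1$ from $T_2$ in $G$ is to separate $T_1 \cap V(G_i)$ from $T_2 \cap V(G_i)$ optimally within each $G_i$ and take the union, so $\hc = |\Mincut(G_1, T_1 \cap V(G_1), T_2 \cap V(G_1))| + |\Mincut(G_2, T_1 \cap V(G_2), T_2 \cap V(G_2))|$.

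Next I would observe that the induced splits are legitimate terminal splits in each $G_i$: writing $T_j^{(i)} = T_j \cap V(G_i)$ for $i,j \in \{1,2\}$, we have $S_i = T_1^{(i)} \cupdot T_2^{(i)}$, since $S_i \subseteq V(G_i)$ and $S = T_1 \cupdot T_2$. Moreover each piece has cut value at most $\hc \le c$ by the displayed equation above (cut values are nonnegative). Therefore the hypothesis applies in each component: $F_i$ contains all $(S_i, c)$-cuts of $G_i$, so there is a subset $F_i' \subseteq F_i$ with $|F_i'| = |\Mincut(G_i, T_1^{(i)}, T_2^{(i)})|$ that is a cut between $T_1^{(i)}$ and $T_2^{(i)}$ in $G_i$. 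Taking $F' = F_1' \cup F_2' \subseteq F_1 \cup F_2$, we get $|F'| = |F_1'| + |F_2'| = \hc$ (the union is disjoint since $F_1 \subseteq E(G_1)$, $F_2 \subseteq E(G_2)$), and $F'$ is a cut separating $T_1$ from $T_2$ in $G$: removing $F'$ disconnects $T_1^{(i)}$ from $T_2^{(i)}$ within each $G_i$, and since there are no cross edges, this disconnects all of $T_1$ from all of $T_2$. This exhibits a subset of $F_1 \cup F_2$ of size exactly $\hc$ that is a $(T_1, T_2)$-cut, which is what Definition \ref{def:Contain} requires.

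The only point that needs a little care — and the closest thing to an obstacle — is verifying that the minimum cut in $G$ really does split additively with \emph{equal} total value $\hc$, rather than one side being forced to overpay. This follows from the "no cross edges" structure: any $(T_1, T_2)$-cut of $G$ restricts to a $(T_1^{(i)}, T_2^{(i)})$-cut of each $G_i$, giving $\hc \ge \sum_i |\Mincut(G_i, T_1^{(i)}, T_2^{(i)})|$; conversely gluing together optimal cuts of the two components yields a $(T_1,T_2)$-cut of $G$ of exactly that total size, giving the reverse inequality. One should also note the degenerate case where, say, $T_1^{(i)} = \emptyset$ for some $i$ — then the minimum cut in $G_i$ is empty ($F_i' = \emptyset$), which is consistent and causes no problem. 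With these observations the proof is complete; I would keep the write-up to a few lines, citing the decomposition of cuts across a vertex-disjoint union as the one substantive step.
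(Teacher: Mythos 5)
The paper states Lemma~\ref{lem:ContainIndependent} without proof, presenting it as the edge-containment analogue of Lemma~\ref{lem:Independent}; your proposal supplies the missing argument, and it is correct. The key step---that a cut in a vertex-disjoint union decomposes as a disjoint union of cuts of the components, so the minimum cut value is additive and one can glue the witnessing subsets $F_1' \subseteq F_1$ and $F_2' \subseteq F_2$ of sizes $\hc_1 + \hc_2 = \hc$---is exactly what is needed, and your aside on the degenerate case $T_1^{(i)} = \emptyset$ (where the relevant witness is simply the empty cut) correctly closes the one loophole in the additivity argument.
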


\section{High Level Outline}
\label{sec:Outline}

Our construction is based on repeatedly finding edges that intersect
all $(S,c)$-cuts.

\begin{definition}
	\label{def:Intersect}
	In a graph $G = (V, E)$ with terminals $S$,
	a subset of edges $F \subseteq E$
	intersects all $(S, c)$-cuts for some $c > 0$
	if for any split $S = S_1 \cupdot S_2$ with
	$\hc = |\Mincut(G,S_1,S_2)| \le c$, there exists
	a cut $\Fhat = E(V_1, V_2)$ such that:
	\begin{enumerate}
		\item $\Fhat$ has size $\hc$,
		\item $\Fhat$ induces the same separation of $S$:
		$V_1 \cap S = S_1$, $V_2 \cap S = S_2$.
		\item $\Fhat$ contains at most $c - 1$ edges from
		any connected component of $G \bs F$.
	\end{enumerate}
\end{definition}

We can reduce the problem of finding edges that contain all small cuts
to the problem finding edges that intersect all small cuts.
This is done by first finding an intersecting set $F$,
and then repeating on the (disconnected) graph with $F$ removed,
but with the endpoints of $F$ included as terminals as well.

\begin{lemma}
	\label{lem:IntersectToContain}
	If in some graph $G = (V, E)$ with terminals $S$,
	a subset of edges $F \subseteq E$ intersects all
	$(S, c)$-cuts, then consider the set
	\[
	\Shat = S \cup V\left( F \right),
	\]
	that is, $S$ with the endpoints of $F$ added.
	If a subset $\Fhat$ contains all
	$(\Shat, c-1)$ cuts in the graph $(V, E \bs F)$,
	then $F \cup \Fhat$ contains all $(S, c)$-cuts in $G$
	as well.
\end{lemma}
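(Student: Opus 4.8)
The plan is to take an arbitrary split $S = S_1 \cupdot S_2$ with $\hc = |\Mincut(G, S_1, S_2)| \le c$ and exhibit a subset of $F \cup \Fhat$ of size $\hc$ that separates $S_1$ from $S_2$. By the hypothesis that $F$ intersects all $(S,c)$-cuts, there is a cut $\Fhat_0 = E(V_1, V_2)$ of size exactly $\hc$ that induces the split $(S_1, S_2)$ and contains at most $c-1$ edges from each connected component of $G \bs F$. Write $F_0 = \Fhat_0 \cap F$ for the part of this cut lying in $F$, and let $\Fhat_0 \bs F$ be the part lying outside $F$. The key observation is that $\Fhat_0 \bs F$, restricted to any single connected component $C$ of $G \bs F$, is a cut in $C$ of size at most $c-1$; so if I can replace it component-by-component using $\Fhat$, I am done.

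First I would set up the terminals correctly: the component $C$ of $G \bs F$ inherits terminals from $\Shat = S \cup V(F)$, and crucially every endpoint of an edge of $F_0 \subseteq F$ that lies in $C$ is a terminal of $C$ (it is in $V(F)$). So the separation $(V_1 \cap C, V_2 \cap C)$ of $C$'s vertices is a separation of $\Shat \cap C$: it keeps $S_1 \cap C$ on one side and $S_2 \cap C$ on the other (inherited from the original split), and it also correctly separates the $F_0$-endpoints according to which side of $\Fhat_0$ they fell on. Thus $(\Shat \cap C)_1 := \Shat \cap V_1 \cap C$ and $(\Shat \cap C)_2 := \Shat \cap V_2 \cap C$ form a split of $\Shat \cap C$, and $\Fhat_0 \cap E(C) = (\Fhat_0 \bs F) \cap E(C)$ is a cut realizing it, of size $\le c - 1$. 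Hence $|\Mincut(C, (\Shat\cap C)_1, (\Shat \cap C)_2)| \le c-1$, so since $\Fhat$ contains all $(\Shat, c-1)$-cuts (and by Lemma~\ref{lem:ContainIndependent} this means it contains all such cuts within each component $C$ separately), there is a subset $\Fhat_C \subseteq \Fhat \cap E(C)$, of size equal to $|\Mincut(C, (\Shat\cap C)_1, (\Shat\cap C)_2)| \le |(\Fhat_0 \bs F) \cap E(C)|$, separating these two terminal sets in $C$.

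Now I assemble: let $\Fhat^* = F_0 \cup \bigcup_C \Fhat_C$, the union over components $C$ of $G\bs F$. I claim $\Fhat^*$ separates $S_1$ from $S_2$ in $G$ and has size $\le \hc$. For the size, $|\Fhat^*| = |F_0| + \sum_C |\Fhat_C| \le |F_0| + \sum_C |(\Fhat_0 \bs F)\cap E(C)| = |F_0| + |\Fhat_0 \bs F| = |\Fhat_0| = \hc$. For separation, consider any $s_1 \in S_1$, $s_2 \in S_2$ and any path $P$ from $s_1$ to $s_2$ in $G$. If $P$ uses an edge of $F_0 \subseteq \Fhat^*$ we are done; otherwise $P$ avoids $F_0$. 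Break $P$ at its edges of $F \bs F_0$ into maximal subpaths each lying within a single component $C$ of $G \bs F$; each such subpath has both endpoints being terminals of $C$ (endpoints of $F$-edges, or $s_1, s_2$ themselves) lying in $\Shat \cap C$. If consecutive pieces switch sides — i.e. some piece in some $C$ has its two endpoints on opposite sides of the split $((\Shat\cap C)_1, (\Shat\cap C)_2)$ — then since $\Fhat_C$ cuts that split, $P$ meets $\Fhat_C \subseteq \Fhat^*$. If no piece switches sides, then tracking sides along $P$ (the shared endpoint of two consecutive pieces is a single vertex of $V(F)$, forced to one side) shows $s_1$ and $s_2$ end up on the same side of $\Fhat_0$, contradicting $s_1 \in S_1 \subseteq V_1$, $s_2 \in S_2 \subseteq V_2$. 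So $\Fhat^*$ is a cut separating $S_1$ from $S_2$ of size $\le \hc$; combined with the lower bound $\hc = |\Mincut(G,S_1,S_2)|$ and Lemma~\ref{lem:CutMonotone}-type reasoning it has size exactly $\hc$, and it lies in $F \cup \Fhat$, which is what we wanted.

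The main obstacle I expect is the separation argument in the last paragraph: one must argue carefully that gluing the component-wise cuts $\Fhat_C$ back together with $F_0$ actually yields a global cut, rather than just a collection of local cuts. The cleanest way is the path-tracing argument above, using that every "seam" vertex (an endpoint of an $F$-edge where $P$ crosses between components) is a terminal of both adjacent components and hence is assigned a well-defined side by the original separation $(V_1, V_2)$; this consistency of the boundary data across components is exactly what makes the local pieces compose. A secondary subtlety is bookkeeping the terminal sets: one must check that contracting nothing is lost when passing to $\Shat = S \cup V(F)$ — i.e. that $V(F) \subseteq \Shat$ is precisely what guarantees the endpoints of $F_0$ are terminals in their components, so that the local splits are genuinely splits of $\Shat \cap C$ and the definition of "contains all $(\Shat, c-1)$-cuts" applies.
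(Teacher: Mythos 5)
Your proof is correct and follows the same decomposition as the paper's (very terse) two-sentence proof: the paper simply invokes Lemmas~\ref{lem:ContainRemoveEdges} and \ref{lem:ContainIndependent} to glue the per-component replacements back together, whereas your path-tracing argument across the ``seam'' vertices in $V(F)$ is essentially a direct proof of what Lemma~\ref{lem:ContainRemoveEdges} asserts. One minor inaccuracy: you cite Lemma~\ref{lem:ContainIndependent} to pass from global containment of $(\Shat,c-1)$-cuts in $G\bs F$ to per-component containment, but that lemma is stated in the converse direction (per-component $\Rightarrow$ global); the implication you actually need is true and easy (extend a split of $\Shat\cap C$ by placing all other terminals of $\Shat$ on one side, so the global min-cut equals the component min-cut), but it should be argued rather than attributed to that lemma.
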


\begin{proof}
Consider a partition $S = S_1 \cupdot S_2$ with $\hat{c} = |\Mincut(G,S_1,S_2)| \le c$. Because $F$
intersects all $(S,c)$-cuts, there is cut of size $\hat{c}$ separating $S_1$ and $S_2$ that has
at most $c-1$ edges in each connected component of $G \bs F$.

Combining this with Lemmas \ref{lem:ContainRemoveEdges} and \ref{lem:ContainIndependent} shows that
if $\Fhat$ contains all $(S\cup V\left(F\right),c-1)$-cuts in $G\bs F$, then $F \cup \Fhat$ contains all $(S,c)$
in $G$.
\end{proof}

Sections \ref{sec:Existence}, \ref{sec:Local}, and \ref{sec:Expander} are devoted to showing the following bound for generating a
sets of edges that intersect all $(S, c)$-cuts.
\begin{theorem}
	\label{thm:IntersectingEdges}
	For any parameter $\phi$ and any value $c$,
	for any graph $G$ with terminals $S$, we can generate
	a set of edges $F$ that intersects all
	$(S, c)$-cuts:
	\begin{enumerate}
		\item 	\label{part:Slower}
		with size at most $O((\phi m \log^4{n} + |S|) \cdot c )$
		in $\O(m(c\phi^{-1})^{2c})$ time.

		\item 	\label{part:Faster}
		with size at most $O((\phi m \log^{4}n + |S|) \cdot c^2)$
		in $\O(m \phi^{-2} c^7)$ time.
	\end{enumerate}
\end{theorem}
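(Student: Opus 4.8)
The plan is to reduce the problem of finding edges that intersect all $(S,c)$-cuts to repeated applications of an expander decomposition together with a local/small-cut routine, recursing on the parameter $c$. First I would invoke the expander decomposition lemma (Lemma~\ref{lem:ExpanderDecompose}) with parameter $\phi$: this partitions $E(G)$ into a set $E^{\mathrm{bad}}$ of at most $O(\phi m \log^4 n)$ inter-cluster edges, together with clusters $V_1,\dots,V_r$ each inducing a $\phi$-expander. Every edge of $E^{\mathrm{bad}}$ is put directly into $F$, and every endpoint of an $E^{\mathrm{bad}}$ edge is promoted to a terminal — this is exactly the mechanism enabled by Lemmas~\ref{lem:ContainRemoveEdges} and~\ref{lem:ContainIndependent}, so it suffices to handle each expander cluster independently. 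Within a single $\phi$-expander $G[V_i]$ with its (possibly enlarged) terminal set $S_i$, I claim that any terminal cut of size at most $c$ is ``small'' in the expander sense: because the graph is a $\phi$-expander, a cut with only $c$ edges can separate off at most $O(c/\phi)$ volume, so one side of any minimum $(S_1,S_2)$ separating cut of size $\le c$ is a low-volume set. This is the key structural observation taken from the vertex/edge cut literature~\cite{NanongkaiSY19a,ForsterY19:arxiv,SaranurakW19}: inside an expander, all the relevant cuts are local.

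Next, for each expander cluster I would enumerate, for each terminal, a small-cut search around it. For part~\ref{part:Slower}, I would run a local cut procedure ($\localcut$) that, started from a terminal $v\in S_i$, finds all cuts of size $\le c$ with the low-volume side containing $v$; by the expander bound the search only explores $O(c/\phi)^{?}$ volume, and the branching needed to enumerate all such cuts costs roughly $(c\phi^{-1})^{c}$ per terminal, with $O(|S_i|)$ terminals and an extra factor for the cut values $1,\dots,c$, giving the claimed $\O(m(c\phi^{-1})^{2c})$ total time; the edges touched by all these local explorations form $F$, and since each explored region has $O(c/\phi)$ volume and there are $O(|S|+\phi m\log^4 n)$ terminals after promotion, $|F| = O((\phi m\log^4 n + |S|)\cdot c)$. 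For part~\ref{part:Faster}, instead of enumerating all small cuts, I would use a single-cut local algorithm to extract one minimum $(S_1,S_2)$-separating cut, remove it, promote its endpoints, and recurse on $c-1$ inside the cluster — the recursion depth is $c$, each level runs a near-linear-time local min-cut with the expander guarantee in time $\O(m\phi^{-2}c^{?})$, and carefully accounting the $c$ levels yields $\O(m\phi^{-2}c^7)$ time and size $O((\phi m\log^4 n + |S|)\cdot c^2)$ (the extra $c$ factor over part~\ref{part:Slower} coming from the $c$ recursive peels, each adding $O(c)$ edges per terminal region). Condition~3 of Definition~\ref{def:Intersect} — that the returned cut uses at most $c-1$ edges in any component of $G\bs F$ — is maintained because after removing $F$ each surviving component sits strictly inside one expander cluster with one fewer ``layer'' of slack.

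The main obstacle I expect is the second, structural step: proving cleanly that inside a $\phi$-expander every $(S,c)$-cut has a side of volume $O(c/\phi)$, and — more delicately — that the \emph{minimum} separating cut for a given terminal split, not merely some small cut, can be found by a local search of bounded volume that does not ``escape'' the cluster. One has to be careful that the terminal split of interest may have its minimum cut realized partly by $E^{\mathrm{bad}}$ edges and partly inside several clusters simultaneously; the fix is that Lemma~\ref{lem:IntersectToContain} only requires a cut with $\le c-1$ edges in each component of $G\bs F$, so after peeling $E^{\mathrm{bad}}$ the residual work per cluster is a genuinely smaller-$c$ instance, and the volume bound $O(c/\phi)$ degrades gracefully. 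The remaining obstacle is bookkeeping: controlling the blow-up of the terminal set across the $c$ recursive levels (each level adds $O(c)\cdot(\text{current terminals})$ new terminals from promoted endpoints), and verifying this telescopes to the stated $c$ and $c^2$ dependence rather than exploding exponentially — this is where choosing whether to enumerate all small cuts (part~\ref{part:Slower}, exponential in $c$ in \emph{time} but not in size-per-level) versus peeling one at a time (part~\ref{part:Faster}, polynomial in $c$) makes the difference between the two claimed bounds.
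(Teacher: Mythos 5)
Your high-level ingredients are right — expander decomposition at parameter $\phi$, promoting endpoints of inter-cluster edges to terminals, and the observation that inside a $\phi$-expander every cut of size $\le c$ has a side of volume $O(c\phi^{-1})$ (Equation~\ref{eq:ExpanderImba}) so that local/small-cut routines apply. This matches the paper's setup. But the core of both parts of the theorem is missing, and what you substitute for it does not give the claimed bounds.

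For Part~\ref{part:Slower} you propose to enumerate, from each terminal, all small cuts around it, and to \emph{put all the edges in the explored regions into $F$}. That does not give $|F|=O((\phi m\log^4 n+|S|)\cdot c)$: the explored regions have total volume $\Theta(|S|\cdot c\phi^{-1})$, not $\Theta(|S|\cdot c)$, and since $\phi^{-1}$ is chosen polynomially large in $c$ and $\log n$, this blows up the size bound entirely. The paper instead runs the recursive divide-and-conquer algorithm \textsc{RecursiveNontrivialCuts} (Figure~\ref{fig:RecursiveNonTrivialCuts}): it repeatedly finds a \emph{non-trivial $S$-separating cut} (Definition~\ref{def:Nontrivial}), adds only its $\le c$ cut edges to $F$, and recurses on the two contracted pieces. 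Correctness rests on Lemma~\ref{lem:Partition} (recursion is sound) and crucially on Lemma~\ref{lem:OPLemma} (when no non-trivial cut exists, the contracted graph is a star and the $s$-isolating cuts suffice). The $O(|S|\cdot c)$ bound then comes from the recursion tree having $O(|S|)$ nodes, each contributing $\le c$ edges. The cut-enumeration Lemma~\ref{lem:EnumerateCuts} is used only to \emph{find} the non-trivial cuts efficiently (enumerate once, reuse under contraction per Lemma~\ref{lem:ContractionCut}, and only recurse on the smaller side) — it is the engine for the running time, not the source of $F$ itself.

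For Part~\ref{part:Faster} you propose ``extract one separating cut, remove it, promote endpoints, and recurse on $c-1$ inside the cluster, depth $c$.'' This confuses Theorem~\ref{thm:IntersectingEdges} with the outer loop \textsc{GetContainingEdges} (Figure~\ref{fig:GetContainingEdges}): the statement being proved is about a set \emph{intersecting} all $(S,c)$-cuts for a \emph{fixed} $c$, and the $c\to c-1$ peeling happens one level up, via Lemma~\ref{lem:IntersectToContain}, to get a \emph{containing} set. The paper's inner algorithm \textsc{RecursiveSteinerCuts} (Figure~\ref{fig:GetIntersectingEdgesSteiner}) keeps $c$ fixed: it computes a minimum Steiner cut; if it is non-trivial, it contracts each side and recurses (Lemma~\ref{lem:Partition}); if it isolates a single terminal $s$, it finds a \emph{maximal} $s$-isolating minimum cut by repeated contraction across cuts of the same value (justified by Lemma~\ref{lem:IsolateContract} and the submodularity/Ford--Fulkerson maximal-side argument in Lemma~\ref{lem:StrongerVersion}), adds only the final cut's edges, and continues. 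The extra factor of $c$ in the size bound comes precisely from the at-most-$c$ increases of each terminal's $\localcut$ value (by Lemma~\ref{lem:CutMonotone}), not from $c$ levels of a $c\to c-1$ recursion. The local-flow primitive (Theorem~\ref{thm:LocalFlow}, packaged as Lemma~\ref{lem:ExpanderCutFaster}) is used to compute $\localcut(s)$ quickly, and Lemma~\ref{lem:StrongerVersion} supplies the base case when $|S|\le 500c^2\phi^{-1}$; both of these are absent from your sketch. In short, the gap is the recursive cut structure itself: without Lemmas~\ref{lem:Partition}, \ref{lem:OPLemma}, and \ref{lem:IsolateContract}, neither the correctness claim (that the output really intersects every $(S,c)$-cut in the sense of Definition~\ref{def:Intersect}, including condition~3) nor the stated size bounds follow.
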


Then the overall algorithm is simply to iterate this process
until $c$ reaches $1$, as in done in Figure \ref{fig:GetContainingEdges}.

\begin{figure}
	\begin{algbox}
		$F = \textsc{GetContainingEdges}(G, S, c)$
		
		\underline{Input}: undirected unweighted multi-graph $G$,
		terminals $S$,
		cut threshold $c$.
		
		\underline{Output:} set of edges $F$ that contain all interesting
		$(S,c)$-cuts.
		\begin{enumerate}
			\item Initialize $F \leftarrow \emptyset$.
			\item For $\chat \leftarrow c \ldots 1$ in decreasing order: \label{line:chat}
			\begin{enumerate}
				\item $F \leftarrow F \cup \textsc{GetIntersectingEdges}(G, S, \chat)$.
				\item $G \leftarrow G \bs F$.
				\item $S \leftarrow S \cup V(F)$, where $V(F)$ is the endpoints of all edges of $F$.
			\end{enumerate}
			\item Return $F$.
		\end{enumerate}
	\end{algbox}
	\caption{Pseudocode For Finding a Set of Edges
		that contain all $(S,c)$-cuts}
	\label{fig:GetContainingEdges}
\end{figure}

\begin{proof}[Proof of Theorem~\ref{thm:Main}]
We first show part \ref{part:Main1} of Theorem \ref{thm:Main}.

\newcommand{\cintersect}{C_{\mathit{intersect}}}

Let $\cintersect$ be a constant such that part \ref{part:Slower} of Theorem \ref{thm:IntersectingEdges} gives us
a set $F$ of edges intersecting all $(S,c)$-cuts of size at most $A(\phi m \log^4{n} + |S|)c$ in $\O(m(c\phi^{-1})^c)$ time.
We show by induction that before processing $\chat = i$ in line \ref{line:chat} of Figure \ref{fig:GetContainingEdges} that
\[
\left|V(F)\right|
\le
\left(4 \cintersect\right)^{c-i}\frac{c!}{i!}
\left(\phi m \log^4 n + |S|\right)
\]
and
\[
\left|F\right|
\le
\left(4 \cintersect\right)^{c-i}
\frac{c!}{i!}
\left(\phi m \log^4 n + \left|S\right|\right).
\]

We focus on the bound on $|V(F)|$, as the bound on $|F|$ is similar. The induction hypothesis holds for $i = c$.
By Part~\ref{part:Slower} of Theorem~\ref{thm:IntersectingEdges}
we have the size of $F$ after processing $\chat = i$ is at most
\begin{align*}
\cintersect
\left(\phi m \log^4{n} + |V(F)|\right)c
&\le
\cintersect
\left(\phi m \log^4{n} + \left(4A\right)^{c-i}\frac{c!}{i!}
\left(\phi m \log^4 n + \left|S\right| \right)\right)i \\
&\le \frac{1}{2}
\left(4 \cintersect\right)^{c-i+1}
\frac{c!}{\left(i-1\right)!}
\left(\phi m \log^4 n + \left|S\right|\right).
\end{align*}
Now the size of $V(F)$ is at most twice this bound, as desired. Taking $i = 0$ shows that the final size of $F$ is at most
$(4 \cintersect)^c c!\left(\phi m \log^4 n + |S|\right).$
Take $\phi = \frac{1}{5c(4 \cintersect)^c c!\log^4n}.$
For $m \le nc$ (which we can assume by Lemma \ref{lem:EdgeReduction})
the final size of $F$ is at most
\[
\left(4 \cintersect\right)^c c!\left(\phi m \log^4 n + |S|\right)
\le \frac{n}{5} + \left(4 \cintersect \right)^c c! \left|S\right|.
\]
Now, we apply Lemma \ref{lem:EdgesToSparsifier} to produce a graph $H$ with at most $\frac{n}{5} + (4\cintersect)^c c! |S| + 1$ vertices that is $(S,c)$-cut equivalent to $G$.
Now, we can repeat the process on $H$ $O(\log n)$ times.
The number of vertices in the graphs we process decrease geometrically until they have at most $2(4 \cintersect)^c c! |S| = O(c)^c |S|$ many vertices,
as desired.

The runtime bound in part~\ref{part:Main1} of Theorem~\ref{thm:Main}
follows from the runtime bound in Part~\ref{part:Slower} of
Theorem~\ref{thm:IntersectingEdges} and our choice of $\phi$.

The analysis of part \ref{part:Main1} of Theorem \ref{thm:Main} follows from part \ref{part:Faster} of Theorem \ref{thm:IntersectingEdges} in a similar way.
\end{proof}

\section{Existence via Structural Theorem and Recursion}
\label{sec:Existence}

Our algorithm is based on a divide-and-conquer routine
that removes a small cut and recurses on both sides.
Our divide-and-conquer relies on the following observation
about when $(S, c)$-cuts are able to interact
completely with both sides of a cut.

\begin{lemma}
	\label{lem:Partition}
	Let $F$ be a cut given by the partition $V = V_1 \cupdot V_2$
	in $G = (V, E)$ such that both $G[V_1]$	and $G[V_2]$ are connected,
	and $S_1 = V_1 \cap S$ and $S_2 = V_2 \cap S$ be the partition
	of $S$ induced by this cut.
	If $F_1$ intersects all $(S_1 \cup \{v_2\}, c)$-terminal
	cuts in $G / V_2$, the graph formed by contracting all of
	$V_2$ into a single vertex $v_2$, and similarly $F_2$
	intersects all $(S_2 \cup \{v_1\}, c)$-terminal
	cuts in $G / V_1$, then $F_1 \cup F_2 \cup F$ intersects
	all $(S, c)$-cuts in $G$ as well.
\end{lemma}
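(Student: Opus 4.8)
The plan is to take an arbitrary split $S = S_1' \cupdot S_2'$ with $\hc = |\Mincut(G, S_1', S_2')| \le c$, fix a minimum cut $\Ehat = E(W_1, W_2)$ realizing it (with $S_1' \subseteq W_1$, $S_2' \subseteq W_2$), and then argue that one can find \emph{another} cut of the same size $\hc$, inducing the same separation of $S$, whose edges are all drawn from $F_1 \cup F_2 \cup F$, and which moreover uses at most $c-1$ edges in each connected component of $G \bs (F_1 \cup F_2 \cup F)$. The key structural idea is an \emph{uncrossing} argument relative to the fixed cut $F = E(V_1, V_2)$: intersect $\Ehat$ with the two sides $V_1$ and $V_2$. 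Concretely, I would look at the vertex set $W_1 \cap V_1$ together with how $W_1$ meets $V_2$, and use submodularity of the cut function $|\partial(\cdot)|$ to show that one of the two ``uncrossed'' cuts --- roughly, $\partial_G(W_1 \cup V_2)$ and $\partial_G(W_1 \cap V_1)$, or their mirror images --- is no larger than $\Ehat$ while still separating $S_1'$ from $S_2'$ appropriately. Submodularity gives $|\partial(W_1 \cap V_1)| + |\partial(W_1 \cup V_1)| \le |\partial(W_1)| + |\partial(V_1)|$; one must check the terminal-separation bookkeeping so that after uncrossing we still have all of $S_1'$ on one side and $S_2'$ on the other (here is where the hypothesis that $S_1 = V_1 \cap S$, $S_2 = V_2 \cap S$ is used to control which terminals land where).

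Once the cut is uncrossed so that it ``lives'' cleanly on each side, I would project it onto the two contracted graphs. The piece of the (uncrossed) cut lying inside $G[V_1]$ corresponds, after contracting $V_2$ to the single vertex $v_2$, to a cut in $G/V_2$ that separates $S_1$ (the $S_1'$-part, which equals the relevant terminals of $V_1$) from $v_2$ together with whatever $S_2'$-vertices are dragged along --- and this is exactly a $(S_1 \cup \{v_2\}, c)$-terminal cut of size at most $\hc \le c$ in $G/V_2$. By hypothesis $F_1$ intersects all such cuts, so there is a replacement cut in $G/V_2$ of the same size, inducing the same terminal separation, using only edges of $F_1$ and with at most $c-1$ edges per component of $(G/V_2) \bs F_1$. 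Symmetrically $F_2$ handles the $V_2$-side via $G/V_1$. Stitching the two replacement cuts back together along $F$ (the edges of $F$ crossing between $V_1$ and $V_2$ are contributed once, and since $G[V_1], G[V_2]$ are connected the two pieces glue into a single cut of $G$) yields a cut of size $\hc$ separating $S_1'$ from $S_2'$, using only edges of $F_1 \cup F_2 \cup F$.

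The remaining obligation --- and I expect this to be the main obstacle --- is the third condition of Definition~\ref{def:Intersect}: the new cut must use at most $c-1$ edges in each connected component of $G \bs (F_1 \cup F_2 \cup F)$. The subtlety is that a component of $G \bs (F_1 \cup F_2 \cup F)$ is contained in either $G[V_1] \bs F$ or $G[V_2] \bs F$ (since $F$ is removed), and is in fact a sub-component of a component of $(G/V_2) \bs F_1$ (respectively $(G/V_1) \bs F_2$) once we account for the contracted super-vertex; so the per-component bound inherited from $F_1$ (resp.\ $F_2$) should transfer, but one has to be careful that the contracted vertex $v_2$ does not merge several components of $G \bs F$ into one component of $(G/V_2)\bs F_1$ in a way that breaks the count --- I would argue that edges of the replacement cut incident to $v_2$ are precisely the edges of the original cut crossing into $V_2$, which are edges of $F$ and hence not inside any component of $G \bs (F_1 \cup F_2 \cup F)$, so they do not contribute to any component's tally. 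Putting these pieces together establishes that $F_1 \cup F_2 \cup F$ intersects all $(S,c)$-cuts in $G$.
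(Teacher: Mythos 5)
Your uncrossing step is the crux of your argument, and it does not go through. The submodular inequality $|\partial(W_1 \cap V_1)| + |\partial(W_1 \cup V_1)| \le |\partial(W_1)| + |\partial(V_1)|$ does not let you conclude that either uncrossed cut is at most $|\partial(W_1)| = \hc$ unless you can lower-bound the other term against $|\partial(V_1)|$, and no such bound is available: $V_1$ is an arbitrary cut here, not a minimum cut for anything related to the split $(S_1', S_2')$ you fixed. Worse, the uncrossed sets do not induce the split $(S_1', S_2')$ at all: $(W_1 \cap V_1, V \setminus(W_1 \cap V_1))$ separates $S_1' \cap V_1$ from $S_2' \cup (S_1' \cap V_2)$, and $S_1'$ may perfectly well straddle $V_1$ and $V_2$. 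Your parenthetical about ``terminal-separation bookkeeping'' conflates $S_1 = S \cap V_1$ with $S_1'$, the side of the arbitrary split being handled --- these are unrelated in general, so the hypothesis $S_1 = V_1 \cap S$ gives you no control. Finally, the ``stitching'' at the end is also suspect: if $\Ehat$ has edges inside both $G[V_1]$ and $G[V_2]$, then the $V_1$-part of $\Ehat$ is not a cut of $G/V_2$ at all, because $v_2$ has no consistent side, so there is nothing to hand to the $F_1$ hypothesis.

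The paper sidesteps all of this by observing that no replacement cut is needed in the crossing case. If $\Ehat$ has at most $c-1$ edges in $G[V_1]$ and at most $c-1$ edges in $G[V_2]$ --- which is automatic whenever $\Ehat$ is not wholly contained in one induced subgraph, in particular whenever it uses an edge of $F$ or has edges on both sides --- then $\Ehat$ already satisfies condition (3) of Definition \ref{def:Intersect}, because every component of $G \setminus (F \cup F_1 \cup F_2)$ lies entirely within $V_1$ or within $V_2$. The only nontrivial case is $\Ehat \subseteq E(G[V_1])$ (or symmetrically $E(G[V_2])$). There, since $G[V_2]$ is connected and untouched, all of $V_2$ sits on one side of $\Ehat$, so $\Ehat$ descends to a terminal cut of $G/V_2$ of the same size with the correct separation of $S_1 \cup \{v_2\}$; only now does one invoke the hypothesis on $F_1$ to obtain a replacement cut, which lifts back to $G$ and inherits the per-component bound because components of $G \setminus (F \cup F_1 \cup F_2)$ inside $V_1$ are subsets of components of $(G/V_2) \setminus F_1$. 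Your instinct to contract a side and apply the hypothesis to the contracted graph is exactly right, but it should be applied to only one side, and only in the case where the entire cut sits inside that side; no uncrossing and no two-sided stitching is needed.
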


\begin{proof}
	Consider some cut $\Fhat$ of size at most $c$.
	
	If $\Fhat$ uses an edge from $F$,
	then it has at most $c - 1$ edges in $G \bs F$,
	and thus in any connected component as well.
	
	If $\Fhat$ has at most $c - 1$ edges in $G[V_1]$, then
	because removing $F$ already disconnected $V_1$ and $V_2$,
	and removing $F_1$ can only further disconnect things,
	no connected component in $V_1$ can have $c$ or more edges.
	
	So the only remaining case is if $\Fhat$ is entirely
	contained on one of the sides.
	Without loss of generality assume $\Fhat$ is entirely
	contained in $V_1$,
	i.e. $\Fhat \subseteq E(G[V_1])$.
	Because no edges from $G[V_2]$ are removed and $G[V_2]$ is connected,
	all of $S_2$ must be on one side of the cut,
	and can therefore be represented by a single vertex $v_2$.
	
	So using the induction hypothesis on the cut $\Fhat$
	in $G / V_2$ with the terminal separation given by
	all of $S_2$ replaced by $v_2$ gives that $\Fhat$
	has at most $c - 1$ edges in any connected component of
	\[
	\left( G / V_2 \right) \bs F_1.
	\]
	Because connected components are unchanged under contracting
	connected subsets, we get that $\Fhat$ has at most $c - 1$
	edges in any connected components of $G \bs F_1$ as well.
\end{proof}

However, for such a partition to make progress,
we also need at least two terminals to become contracted
together when $V_1$ or $V_2$ are contracted.
Building this into the definition leads to our key definition
of a non-trivial $S$-separating cut:
\begin{definition}
\label{def:Nontrivial}
	A non-trivial $S$-separating cut is a separation
	of $V$ into $V_1 \cupdot V_2$ such that:
	\begin{enumerate}
		\item the induced subgraphs on $V_1$ and $V_2$,
		$G[V_1]$ and $G[V_2]$ are both connected.
		\item $|V_1 \cap S| \geq 2$, $|V_2 \cap S| \geq 2$.
	\end{enumerate}
\end{definition}

Such cuts are critical for partitioning and recursing on
the two resulting pieces. Connectivity of $G[V_1]$ and $G[V_2]$ is necessary for
applying Lemma \ref{lem:Partition}, and $|V_1 \cap S| \geq 2$, $|V_2 \cap S| \geq 2$
are necessary to ensure that making this cut and recursing makes progress.

We now study the set of graphs $G$ and terminals $S$ for which a nontrivial cut exists.
For example, consider for example when $G$ is a star graph (a single vertex with $n-1$ vertex connected to it)
and all vertices are terminals.
In this graph, the side of the cut not containing the center
can only have a single vertex, hence there are no nontrivial cuts.

We can, in fact, prove the converse:
if no such interesting separations exist,
we can terminate by only considering the $|S|$ separations of $S$
formed with one terminal on one of the sides.
We define these cuts to be the $s$-isolating cuts.

\begin{definition}
	\label{def:IsolatingCut}
	For a graph $G$ with terminal set $S$ and some $s \in S$,
	a $s$-isolating cut is a split of the vertices
	$V = V_{A} \cupdot V_{B}$
	such that $s$ is the only terminal in $V_A$, i.e.
	$s \in V_{A}$, $(S \bs \{u\}) \subseteq V_{B}$.
\end{definition}

\begin{lemma}
	\label{lem:OPLemma}
	If $S$ is a subset of at least $4$
	terminals in an undirected graph $G$
	such that there does not exist a non-trivial $S$ separating cut
	of size at most $c$, then
	\[
		F =
		\bigcup_{\substack{s \in S \\ \left|
			\Mincut\left(G,\left\{s\right\}, S \bs \left\{s\right\}\right) \right| \leq c}}
		 \Mincut\left(G,\left\{s\right\}, S \bs \left\{s\right\}\right) 
	\]
	contains all $(S, c)$ cuts of $G$. Here, $F$ is the union of all $s$-isolating cuts of size at most $c$.
\end{lemma}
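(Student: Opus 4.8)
The plan is to prove the contrapositive flavor directly: assume no non-trivial $S$-separating cut of size at most $c$ exists, take an arbitrary split $S = S_1 \cupdot S_2$ with $\hc = |\Mincut(G, S_1, S_2)| \le c$, and exhibit a subset of $F$ of size $\hc$ separating $S_1$ from $S_2$. Fix a minimum cut $\Ehat = E(V_1, V_2)$ realizing this split, chosen so that $G[V_1]$ and $G[V_2]$ are each connected (we can always pick such a minimizer: if $G[V_1]$ has several components, the component containing no terminal of $S_1$ could be moved to the other side without increasing the cut, and similarly we can peel off terminal-free components, eventually reaching a minimizer whose two sides are each connected — here we use that a minimum cut between two nonempty terminal sets never isolates a terminal-free component on the wrong side). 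Since $\Ehat$ is then a cut whose two sides induce connected subgraphs, the hypothesis that no non-trivial $S$-separating cut of size $\le c$ exists forces one side, say $V_1$, to contain at most one terminal. As $S_1 \subseteq V_1 \cap S$ and $S_1 \neq \emptyset$ (otherwise $\hc = 0$ and there is nothing to separate), we get $V_1 \cap S = \{s\}$ for a single terminal $s$, so $\Ehat$ is an $s$-isolating cut.

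The key remaining point is that $\Ehat$ is in fact a \emph{minimum} $s$-isolating cut, i.e. $|\Ehat| = |\Mincut(G, \{s\}, S \bs \{s\})|$. Indeed, $\Ehat$ separates $\{s\}$ from $S \bs \{s\}$, so $|\Mincut(G, \{s\}, S \bs \{s\})| \le |\Ehat| = \hc \le c$; conversely, any cut separating $s$ from $S \bs \{s\}$ also separates $S_1 = \{s\}$ from $S_2$, so it has size at least $\hc$. Hence $\Ehat$ has exactly the size of the minimum $s$-isolating cut, and since $|\Mincut(G, \{s\}, S\bs\{s\})| \le c$, the term $\Mincut(G, \{s\}, S \bs \{s\})$ does appear in the union defining $F$. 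That particular minimum $s$-isolating cut is a subset of $F$ of size $\hc$ separating $S_1$ from $S_2$, which is exactly what Definition~\ref{def:Contain} demands. Since $S_1 \cupdot S_2$ was arbitrary, $F$ contains all $(S,c)$-cuts.

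The main obstacle is the first step: justifying that a minimum $S_1$–$S_2$ cut can be taken with both sides connected, and then squeezing out that one side has a single terminal using Definition~\ref{def:Nontrivial}. One has to be careful that Definition~\ref{def:Nontrivial} requires \emph{both} $|V_1 \cap S| \ge 2$ and $|V_2 \cap S| \ge 2$; ruling out non-trivial cuts only tells us that for our connected-sided minimizer, \emph{at least one} side has $\le 1$ terminal — and we must argue it cannot be the $S_2$ side when $|S_2| \ge 2$, which holds because $|S| \ge 4$ guarantees the side with $\ge 2$ terminals is nonempty on the $S_2$ end unless $S_1 = \emptyset$, a degenerate case we already discarded. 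A subtlety worth a sentence: if $|S_1| \ge 2$ then both sides would have $\ge 2$ terminals (as $|S_2| = |S| - |S_1| \ge 2$ as well when $|S|\ge 4$), contradicting the non-existence of a non-trivial cut; so necessarily $|S_1| = 1$, and by symmetry if instead the small side were $V_2$ we relabel. I would also remark that using a \emph{minimum} $s$-isolating cut (rather than the specific $\Ehat$) in the union is fine since all we need is \emph{some} size-$\hc$ subset of $F$, and the minimum one has size exactly $\hc$ as shown.
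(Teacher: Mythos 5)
There is a genuine gap, and it is exactly the point the paper warns about in the remark immediately following the lemma: the lemma does \emph{not} assert that every $(S,c)$-cut is a single $s$-isolating cut, only that it can be assembled as a union of several of them. Your proof tries to show that every split $S = S_1 \cupdot S_2$ with $\hc \le c$ satisfies $|S_1| = 1$ or $|S_2| = 1$, which is false, and the step that fails is the claim that ``we can always pick a minimizer whose two sides are each connected.'' Peeling off terminal-free components works, but it does not make the sides connected: $G[V_1]$ can have several components \emph{each} containing a terminal of $S_1$, and none of them can be moved across. Concretely, take $G$ a star with center $r$ and leaves $s_1,\dots,s_k$ ($k \ge 4$), each edge of multiplicity one, with $S = \{s_1,\dots,s_k\}$. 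There is no non-trivial $S$-separating cut of any size, because any partition of the leaves that puts at least two leaves on the side not containing $r$ gives a disconnected induced subgraph. Yet for $S_1 = \{s_1,s_2\}$, $S_2 = \{s_3,\dots,s_k\}$, the minimum cut has size $2 \le c$ and its $V_1 = \{s_1,s_2\}$ side is inherently disconnected; there is no minimizer with connected sides. Your argument would derive a contradiction from this split and so never produces the required subset of $F$, but the lemma is not vacuous here: $\{rs_1, rs_2\} \subseteq F$ is the size-$2$ subset demanded by Definition~\ref{def:Contain}, formed as the union of two distinct $s$-isolating cuts.

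What the paper's proof does differently, and which your approach misses, is precisely this decomposition into several isolating cuts. It contracts $E \setminus \Ehat$, takes a spanning tree $T$ of the resulting multigraph $\Ghat$, and through a sequence of observations (every node of $T$ contains a terminal, no node holds $|S|-1$ terminals, every leaf holds at most one terminal, every subtree off a non-leaf root holds at most one terminal) forces $T$, and then $\Ghat$ itself, to be a star whose leaves each carry exactly one terminal. The edge sets between the center and the individual leaves are then pairwise disjoint minimum $s$-isolating cuts whose union realizes the original split with the same total size. Your handling of the genuinely singleton case (one terminal on one side) is correct and matches the last step of the paper's argument, but you need the star-structure argument to cover the splits with two or more terminals on each side, and without it the proof is incomplete.
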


\begin{proof}
	Consider a graph with no non-trivial $S$ separating cut
	of size at most $c$, but there is a partition of $S$,
	$S = S_1 \cupdot S_2$,
	such that the minimum cut between $S_1$ and $S_2$,
	$V_1$ and $V_2$, has at most $c$ edges, and $|S_1|, |S_2| \ge 2.$
	
	Let $\Ehat$ be one such cut,
	and consider the graph
	\[
	\Ghat = G / \left(E \bs \Ehat\right),
	\]
	that is, we contract all edges except the ones on this cut. Note that $\Ghat$ has at least $2$ vertices.

	Consider a spanning tree $T$ of $\Ghat$.
	By minimality of $\Ehat$, each node of $T$ must contain
	at least one terminal. Otherwise, we can keep one edge
	from such a node without affecting the distribution of terminal vertices.
	
	We now show that no vertex of $T$ can contain $|S|-1$ terminals.
	If $T$ has exactly two vertices, then one vertex must correspond to $S_1$
	and one must correspond to $S_2$, so no vertex has $|S|-1$ terminals. If $T$ has
	at least $3$ vertices, then because every vertex contains at least one terminal,
	no vertex in $T$ can contain $|S|-1$ vertices. 

	Also, each leaf of $T$ can contain at most one terminal, otherwise
	deleting the edge adjacent to that leaf forms a nontrivial cut.

	Now consider any non-leaf node of the tree, $r$.
	As $r$ is a non-leaf node, at it has at least two different
	neighbors that lead to leaf vertices.

	Reroot this tree at $r$, and consider some neighbor of $r$, $x$.
	If the subtree rooted at $x$ has more than $2$ terminals,
	then cutting the $rx$ edge results in two components,
	each containing at least two terminals
	(the component including $r$ has at least one other neighbor
	that contains a terminal).
	Thus, the subtree rooted at $x$ can contain at most one
	terminal, and must therefore be a singleton leaf.
	
	Hence, the only possible structure of $T$ is a star
	centered at $r$ (which may contain multiple terminals),
	and each leaf having a exactly one terminal in it.
	This in turn implies that $\Ghat$ also must be a star,
	i.e. $\Ghat$ has the same edges as $T$ but possibly with multi-edges.
	This is because any edge between two leaves of a star forms
	a connected cut by disconnecting those vertices from $r$.

	By minimality, each cut separating the root from leaf is a
	minimal cut for that single terminal, and these cuts
	are disjoint.
	Thus taking the union of edges of all these singleton cuts
	gives a cut that splits $S$ the same way,
	and has the same size.	
\end{proof}

Note that Lemma \ref{lem:OPLemma} is not claiming all the $(S,c)$-cuts
of $S$ are singletons. Instead, it says that any $(S,c)$-cut can be formed
from a union of single terminal cuts.

Combining Lemma \ref{lem:Partition} and \ref{lem:OPLemma}, we obtain the recursive algorithm in 
Figure~\ref{fig:RecursiveNonTrivialCuts}, which demonstrates
the existence of $O(|S| \cdot c)$ sized
$(S, c)$-cut-intersecting subsets. If there is a nontrivial $S$-separating cut,
the algorithm in Line~\ref{ln:Partition} finds it and recurses on both sides of the cut using Lemma \ref{lem:Partition}.
Otherwise, by Lemma \ref{lem:OPLemma}, the union of the $s$-isolating cuts of size at most $c$
contains all $(S,c)$-cuts, so the algorithm keeps the edges of those cuts in Line~\ref{ln:OP}.

\begin{figure}[!h]
		\begin{algbox}
		$F = \textsc{RecursiveNontrivialCuts}(G, S, c)$
		
		\underline{Input}: undirected unweighted multi-graph $G$,
		terminals $S$,
		cut threshold $c > 0$.
		
		\underline{Output:} set of edges $F$ that intersect all $S$
		separating cuts of size at most $c$
		
		\begin{enumerate}
			\item If $|S| \leq 4$, return union of the
			min-cuts of all $2^{|S| - 1} \leq 8$ splits of the terminals.
			\item Initialize $F \leftarrow \emptyset$.
			\item \label{ln:Partition} If there exists some non-trivial
			$S$-separating cut $(V_1, V_2)$ of size at most $c$,
			\begin{enumerate}
				\item $F \leftarrow F \cup E(V_1, V_2)$.
				\item $F \leftarrow F \cup \textsc{RecursiveNontrivialCuts}(G / V_2, (S \cap V_1) \cup \{v_2\}, c)$ where $v_2$ is the vertex that $V_2$
				gets contracted to in $G / V_2$.
				\item $F \leftarrow F \cup \textsc{RecursiveNontrivialCuts}(G / V_1, (S \cap V_2) \cup \{v_1\}, c)$ where $v_1$ is the vertex that $V_1$
				gets contracted to in $G / V_1$.
			\end{enumerate}
			\item Else add all local terminal cuts to $F$:
			\begin{enumerate}
				\item \label{ln:OP} For all vertex $v$ such that $|\Mincut(G, v, S \bs v)| \leq c$,
				do
				\[
				F \leftarrow F \cup
				\Mincut(G, v, S \bs v).
				\]
			\end{enumerate}
			\item Return $F$.
		\end{enumerate}
	\end{algbox}
	\caption{Algorithm for finding a set of edges
		that intersects all terminal cuts of size $\leq c$.}
	\label{fig:RecursiveNonTrivialCuts}
\end{figure}

\begin{lemma}
	\label{lem:RecursiveNontrivialCutsCorrectness}
	\textsc{RecursiveNonTrivialCuts} as shown in 
	Figure~\ref{fig:RecursiveNonTrivialCuts} correctly returns
	a set of $(S, c)$-cut-intersecting edges of size at most
	$O(|S| \cdot c)$.
\end{lemma}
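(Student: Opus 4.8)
The plan is to argue correctness and the size bound separately, by induction on the recursion tree of \textsc{RecursiveNonTrivialCuts}. For \emph{correctness}, I would consider the three ways the algorithm can terminate or recurse. In the base case $|S| \le 4$, the returned set is the union of minimum cuts over all $2^{|S|-1}$ splits of $S$, which by definition contains (indeed intersects, trivially, since there is nothing smaller to compare against) every $(S,c)$-cut. In the case where no non-trivial $S$-separating cut of size $\le c$ exists, Lemma~\ref{lem:OPLemma} directly gives that the union of the $s$-isolating cuts of size $\le c$ contains all $(S,c)$-cuts, hence certainly intersects them (every such cut is covered by a union of disjoint singleton cuts each lying in its own component of $G \setminus F$, so each component sees at most one of these cuts' worth of edges; I should double-check the ``at most $c-1$ per component'' clause of Definition~\ref{def:Intersect} here, but since the isolating cuts are disjoint and each has size $\le c$, and they are removed, after removal each component picks up edges from at most one — actually I need the target cut $\Fhat$ to have $\le c-1$ edges per component of $G\setminus F$; since the $\Fhat$ witnessing the split is a disjoint union of isolating cuts all in $F$, removing $F$ leaves $\le c-1$... this needs the components to be genuinely separated, which holds because each isolating cut is fully in $F$). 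In the recursive case, the algorithm adds $E(V_1,V_2)$ to $F$ and recurses on $G/V_2$ and $G/V_1$; Lemma~\ref{lem:Partition} is exactly the statement that if the two recursive calls return sets intersecting all $(S_1 \cup \{v_2\}, c)$- and $(S_2 \cup \{v_1\}, c)$-terminal cuts respectively, then the union with $E(V_1,V_2)$ intersects all $(S,c)$-cuts in $G$. So correctness follows by induction, provided the recursive calls are on strictly smaller instances — which they are, because a non-trivial cut has $|V_i \cap S| \ge 2$, so $|(S \cap V_1) \cup \{v_2\}| \le |S| - 1$ and likewise for the other side.

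For the \emph{size bound} $O(|S| \cdot c)$, the key quantity to track is the total number of terminals across all leaves of the recursion tree. Define a potential $\Phi$ on each recursive instance; the natural choice is $\Phi(G,S,c) = |S|$. When we recurse on a non-trivial cut with induced split $S = S_1 \cupdot S_2$ (with $|S_1|, |S_2| \ge 2$), the two children have terminal sets $S_1 \cup \{v_2\}$ and $S_2 \cup \{v_1\}$, so the total terminal count goes from $|S|$ to $(|S_1| + 1) + (|S_2| + 1) = |S| + 2$. Thus each branching node increases the terminal count by exactly $2$; since each branching node has two children, a recursion tree with $L$ leaves has $L - 1$ internal nodes, so the total terminal count at the leaves is $|S| + 2(L-1)$. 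I now need an \emph{a priori} bound on $L$. Here I would use that every leaf has at least... hmm — a base-case leaf has $|S_{\text{leaf}}| \le 4$ but could have as few as... actually each branching strictly splits terminals with $\ge 2$ on each side, so I want a charging argument: each internal node ``consumes'' the property that both sides have $\ge 2$ terminals. A cleaner route: since $|S_1|, |S_2| \ge 2$ and $|S_1| + |S_2| = |S|$, each branch sends $\ge 2$ of the \emph{original-at-this-node} terminals to each side and adds one new terminal per side; one shows by induction that the number of leaves in the subtree rooted at an instance with $s$ terminals is at most $s - 1$ (true for base cases with $s \le 4$ giving $\le 3 \le s-1$ when $s \ge 4$, wait $s=4$ gives $3 = s-1$, ok; and if a node with $s$ terminals branches into children with $s_1 + 1$ and $s_2+1$ terminals where $s_1 + s_2 = s$, $s_1,s_2 \ge 2$, then inductively $\#\text{leaves} \le (s_1+1-1) + (s_2+1-1) = s_1 + s_2 = s$, which is weaker than $s-1$ — so the right bound is $\le s$ leaves, giving $L \le |S|$). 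Then the total terminal count over all leaves is $O(|S|)$, and each leaf (base case with $\le 4$ terminals, or an isolating-cut case) contributes $O(|S_{\text{leaf}}| \cdot c)$ edges: the base case union has $\le 2^{|S_{\text{leaf}}|-1}$ cuts each of size $\le c$, i.e.\ $O(c)$ edges since $|S_{\text{leaf}}| \le 4$; the isolating-cut case contributes $\le |S_{\text{leaf}}|$ disjoint cuts of size $\le c$, i.e.\ $O(|S_{\text{leaf}}| \cdot c)$ edges. Finally, each internal node contributes the cut edges $E(V_1,V_2)$, of size $\le c$, and there are $L - 1 = O(|S|)$ internal nodes, for $O(|S| \cdot c)$ more. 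Summing everything gives $O(|S| \cdot c)$.

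The \textbf{main obstacle} I anticipate is pinning down the correct leaf-count / terminal-growth bookkeeping so that it genuinely closes: the subtlety is that the ``new'' terminals $v_1, v_2$ introduced by contraction persist down the recursion and could in principle compound, so I must verify that the inductive inequality ($\#\text{leaves} \le |S|$, or total leaf-terminals $= |S| + 2(\#\text{internal nodes})$ together with $\#\text{internal nodes} < |S|$) is actually self-consistent under the $+1$-per-side contraction. The second, more technical obstacle is making the isolating-cut leaf case rigorously satisfy \emph{all three} clauses of Definition~\ref{def:Intersect}, in particular clause 3 ($\le c-1$ edges in any component of $G \setminus F$): this requires knowing that after removing the (disjoint) isolating cuts, the target cut $\Fhat$ — which Lemma~\ref{lem:OPLemma} builds as a disjoint union of those isolating cuts — lands with at most $c-1$ edges per remaining component. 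I'd handle this by noting each component of $G \setminus F$ is a subset of one ``side'' of every isolating cut used, so it can contain at most the edges of... actually the cleanest fix is to observe that since $F$ already contains a full cut separating $S_1$ from $S_2$ of size $\hat c \le c$, and $\Fhat \subseteq F$ here, $\Fhat$ contributes $0$ edges to any component of $G \setminus F$, trivially satisfying clause 3; I will make sure the statement and Figure~\ref{fig:RecursiveNonTrivialCuts} line up so that the returned $F$ in the non-branching case really does contain such a separating cut.
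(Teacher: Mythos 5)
Your approach matches the paper's: correctness follows from Lemma~\ref{lem:OPLemma} for the non-branching and base-case leaves and from Lemma~\ref{lem:Partition} for the branching step, and the size bound comes from counting branchings via terminal growth. Your fix for clause 3 of Definition~\ref{def:Intersect} in the leaf cases --- since the witness cut $\Fhat$ is a subset of the returned $F$, $G \bs F$ contains no edge of $\Fhat$, so the clause holds vacuously --- is the right observation and correctly closes the issue you flagged.

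However, the leaf-count induction has a genuine bug. Having seen that the hypothesis ``at most $s-1$ leaves below an instance with $s$ terminals'' yields only $\le s$ in the inductive step, you concluded the right bound is ``at most $s$ leaves''; but that hypothesis fails the same test, yielding $\le (s_1+1) + (s_2+1) = s+2$. The fix is to \emph{strengthen}, not weaken: with the hypothesis ``at most $s-2$ leaves'' (valid for $s \ge 3$, and every recursive instance has $\ge 3$ terminals since each side retains $\ge 2$ original terminals and gains one contracted vertex), the inductive step gives $\le (s_1 - 1) + (s_2 - 1) = s - 2$, which closes. Alternatively, you already had the cleaner route in hand and did not need a separate induction: your identity $\sum_{\text{leaves}} |S_{\text{leaf}}| = |S| + 2(L-1)$, combined with $|S_{\text{leaf}}| \ge 3$ for every leaf, gives $3L \le |S| + 2L - 2$, hence $L \le |S| - 2$. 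Once $L = O(|S|)$ is established, your per-leaf and per-internal-node edge accounting yields $O(|S| \cdot c)$ as claimed. This is in spirit the same as the paper's counting of the quantity $|\textsc{Terminals}| - 2$ over recursion instances, which likewise relies on every instance having at least $3$ terminals.
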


\begin{proof}
	Correctness can be argued by induction.
	The base case of where we terminate
	by adding all min-cuts with one terminal on one side,
	follows from Lemma~\ref{lem:OPLemma}, while the inductive
	case follows from applying Lemma~\ref{lem:Partition}.
	
	It remains to bound the size of $F$ returned.
	Once again there are two cases:
	for the case where we terminate with the union 
	of singleton cuts, each such cut has size at most $c$,
	for a total of $|S| \cdot c$.
	
	For the recursive case, the recursion can be viewed as
	splitting $k \geq 4$ terminals into two instances of
	sizes $k_1$ and $k_2$ where $k_1 + k_2 = k + 2$
	and $k_1, k_2 \geq 3$.
	Note that the total values of $|\textsc{Terminals}| - 2$
	across all the recursion instances is strictly decreasing,
	and is always positive.
	So the recursion can branch at most $|S|$ times,
	which gives that the total number of edges added is at most
	$O(c \cdot |S|)$.
\end{proof}

\section{Poly-time Construction}
\label{sec:Local}

While the previous algorithm in Section~\ref{sec:Existence}
gives our best bound on sparsifier size,
it is not clear to us how
it could be implemented in polynomial time.
While we do give a more efficient implementation of it below
in Section~\ref{sec:Expander}, the running time of that algorithm
still has a $\log^{O(c)}{n}$ term
(as stated in Theorem~\ref{thm:Main}~Part~\ref{part:Slower}).
In this section, we give a more efficient algorithm
that returns sparsifiers of larger size,
but ultimately leads to the faster running time given
in Theorem~\ref{thm:Main}~Part~\ref{part:Faster}.
It was derived by working backwards from the termination condition
of taking all the cuts with one terminal on one side
in Lemma~\ref{lem:OPLemma}.

Recall that a Steiner cut is a cut with at least one terminal one both sides.
The algorithm has the same high level recursive structure,
but it instead only finds the minimum Steiner cut or certifies that its size is greater than $c$.
This takes $\O(m+nc^3)$ time using
an algorithm by Cole and Hariharan~\cite{ColeH03}.

It is direct to check that both sides of a minimum Steiner cut are connected.
This is important towards our goal of finding a non-trivial $S$-separating cut, defined in Definition \ref{def:Nontrivial}.

\begin{lemma}
	If $(V_A, V_B)$ is the global minimum $S$-separating cut in
	a connceted graph $G$, then both $G[V_A]$ and $G[V_B]$ must
	be connected.
\end{lemma}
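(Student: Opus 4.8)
The plan is a routine minimality (uncrossing) argument: a global minimum $S$-separating cut cannot have a disconnected side, because one could then discard the component of that side carrying no terminal and strictly decrease the cut size while still separating $S$.

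Concretely, I would argue by contradiction. Suppose $G[V_A]$ is disconnected and write $V_A = A_1 \cupdot A_2$ with $A_1, A_2$ both nonempty and $E(A_1, A_2) = \emptyset$. Since $(V_A, V_B)$ separates $S$, the set $V_A$ contains a terminal, so without loss of generality $A_1 \cap S \neq \emptyset$. First I record the edge count: as there are no edges between $A_1$ and $A_2$, every edge leaving $V_A$ leaves exactly one of $A_1, A_2$, hence
\[
\abs{\partial(A_1)} + \abs{\partial(A_2)} = \abs{\partial(V_A)}.
\]
Moreover $\abs{\partial(A_2)} \ge 1$: since $G$ is connected and $A_2$ is a nonempty proper subset of $V$ (proper because $V_B \neq \emptyset$, as $V_B$ contains a terminal), some edge leaves $A_2$, and all such edges must go to $V_B$. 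Therefore $\abs{\partial(A_1)} \le \abs{\partial(V_A)} - 1$.

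Finally I would verify that $(A_1, V \bs A_1)$ is an $S$-separating cut: $A_1$ contains a terminal by our choice, and $V \bs A_1 \supseteq V_B$ contains a terminal since $(V_A, V_B)$ separates $S$. This contradicts the minimality of $(V_A, V_B)$, so $G[V_A]$ is connected; swapping the roles of $V_A$ and $V_B$ gives connectivity of $G[V_B]$ as well. The only point requiring care is tracking which of $A_1, A_2$ inherits a terminal of $V_A$, so that the strictly smaller cut we produce is still genuinely $S$-separating; the edge-count identity and the use of connectivity of $G$ are immediate, so I do not anticipate any real obstacle.
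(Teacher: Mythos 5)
Your proposal is correct and follows essentially the same uncrossing argument as the paper: assume a side $V_A$ is disconnected, keep only the component $A_1$ carrying a terminal, and use connectivity of $G$ to show the resulting cut $(A_1, V \setminus A_1)$ still separates $S$ and is strictly smaller. If anything, your bookkeeping is slightly cleaner — you explicitly derive the strict decrease from $\abs{\partial(A_1)} + \abs{\partial(A_2)} = \abs{\partial(V_A)}$ and $\abs{\partial(A_2)} \ge 1$ (the edge leaving $A_2$ necessarily goes to $V_B$), whereas the paper invokes an edge between $V_{A1}$ and $V_B$, which is true but not the fact that actually yields the strict inequality.
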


\begin{proof}
	Suppose for the sake of contradiction that $V_{A}$ is
	disconnected.
	That is, $V_{A} = V_{A1} \cupdot V_{A2}$, there
	are no edges between $V_{A1}$ and $V_{A2}$.
	
	Without loss of generality assume $V_{A1}$ contains
	a terminal.
	Also, $V_{B}$ contains at least one terminal because
	$(V_A, V_B)$ is $S$-separating.
	
	Then because $G$ is connected, there is an edge between
	$V_{A1}$ and $V_{B}$.
	Then the cut $(V_{A1}, V_{A2} \cup V_{B})$ has strictly fewer
	edges crossing, and also terminals on both sides, a contradiction
	to $(V_A, V_B)$ being the minimum $S$ separating cut. 
\end{proof}

So the only bad case that prevents us from recursing is
the case where the minimum Steiner cut has a single terminal $s$
on some side. That is, one of the $s$-isolating cuts from
Definition~\ref{def:IsolatingCut} is also a minimum Steiner cut.

We can handle this case through an extension of Lemma \ref{lem:Partition}.
Specifically, we show that for a cut with both sides connected, we can
contract a side of the cut along with the cut edges before recursing.
\begin{lemma}
	\label{lem:IsolateContract}
	Let $F$ be a cut given by the partition $V = V_1 \cupdot V_2$
	in $G = (V, E)$ such that both $G[V_1]$	and $G[V_2]$ are connected,
	and $S_1 = V_1 \cap S$ and $S_2 = V_2 \cap S$ be the partition
	of $S$ induced by this cut.
	If $F_1$ intersects all $(S_1 \cup \{v_2\}, c)$-terminal
	cuts in $G/V_2/F$, the graph formed by contracting all of
	$V_2$ and all edges in $F$ into a single vertex $v_2$, and similarly $F_2$
	intersects all $(S_2 \cup \{v_1\}, c)$-terminal
	cuts in $G/V_1/F$, then $F_1 \cup F_2 \cup F$ intersects
	all $(S, c)$-cuts in $G$ as well.
\end{lemma}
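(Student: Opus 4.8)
The plan is to mirror the proof of Lemma~\ref{lem:Partition}, adapting the case analysis to account for the fact that now the cut edges $F$ are contracted together with a side of the cut, rather than left present. Let $\Fhat = E(W_1, W_2)$ be a cut of size at most $c$ inducing some split $S = S_1' \cupdot S_2'$ with $|\Mincut(G, S_1', S_2')| \le c$. As before, I would split into cases according to where the edges of $\Fhat$ sit relative to $F$, $G[V_1]$, and $G[V_2]$.

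First, if $\Fhat$ uses at least one edge of $F$, then $\Fhat$ has at most $c-1$ edges remaining in $G \bs F$, hence at most $c-1$ in any connected component of it, and in particular in any connected component of $G \bs (F_1 \cup F_2 \cup F)$ — so property~3 of Definition~\ref{def:Intersect} holds with $\Fhat$ itself as the witnessing cut. Second, suppose $\Fhat \subseteq E \bs F$ but $\Fhat$ touches both $G[V_1]$ and $G[V_2]$; since $G[V_2]$ is connected and $\Fhat$ cuts it nontrivially while removing no edge of $F$, \dots actually the cleaner statement: if $\Fhat$ has at most $c-1$ edges inside $G[V_1]$ and at most $c-1$ inside $G[V_2]$, then every connected component of $G[V_1] \bs F_1$ and of $G[V_2] \bs F_2$ meets $\Fhat$ in at most $c-1$ edges, and since removing $F$ already separates $V_1$ from $V_2$, the connected components of $G \bs (F_1 \cup F_2 \cup F)$ are exactly those of $G[V_1]\bs F_1$ together with those of $G[V_2] \bs F_2$; done. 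So the only remaining case is that $\Fhat$ is entirely contained in one side, say $\Fhat \subseteq E(G[V_1])$ (with $c$ or more of its edges there, forcing it to lie wholly on one side).

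In that last case, because no edge of $G[V_2]$ and no edge of $F$ is removed by $\Fhat$, the set $V_2 \cup (\text{endpoints of }F\text{ in }V_1)$ is connected in $G \bs \Fhat$, so all terminals of $S_2$ lie on one side of $\Fhat$; hence $\Fhat$ survives as a cut in $G / V_2 / F$ with $S_2$ replaced by the single vertex $v_2$, inducing a split of $(S_1 \cup \{v_2\})$ of size $|\Fhat| \le c$ whose minimum-cut value is still at most $c$ by Lemma~\ref{lem:CutMonotone} applied in reverse — more carefully, the induced split of $S_1 \cup \{v_2\}$ in $G/V_2/F$ has a cut of size $|\Fhat|$, namely $\Fhat$, so its mincut value is $\le |\Fhat| \le c$. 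By hypothesis on $F_1$ there is a witnessing cut $\Fhat'$ in $G/V_2/F$, of the same size, inducing the same split, with at most $c-1$ edges in each connected component of $(G/V_2/F) \bs F_1$. Since contracting the connected set $V_2 \cup F$ does not change the connected-component structure of the rest of the graph, $\Fhat'$ also has at most $c-1$ edges in each connected component of $G \bs (F_1 \cup F)$, hence of $G \bs (F_1 \cup F_2 \cup F)$; and $\Fhat'$, viewed back in $G$, induces the split $(S_1', S_2')$ of $S$ (its vertex partition refines to put $V_2$ and the $F$-endpoints on the $S_2'$ side). This gives the required witness.

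The main obstacle I anticipate is the bookkeeping in this last case: one must check that a cut $\Fhat'$ produced inside $G/V_2/F$ genuinely lifts back to a cut of $G$ inducing the correct partition of $S$ and respecting the component-size bound with respect to $G\bs(F_1\cup F_2\cup F)$ rather than merely $G\bs F_1$. The key fact making this work — and the only place that differs substantively from Lemma~\ref{lem:Partition} — is that contracting the \emph{connected} set $V_2 \cup F$ (as opposed to just $V_2$) leaves the connected components of the complement untouched, so the extra contraction of $F$ is harmless; everything else is a routine adaptation.
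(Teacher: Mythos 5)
Your proposal is correct and follows essentially the same three-case structure as the paper's own proof of this lemma (which in turn mirrors the proof of Lemma~\ref{lem:Partition}): the cases where $\Fhat$ meets $F$, where $\Fhat$ has fewer than $c$ edges on each side, and where $\Fhat$ lies wholly in one side, with the last case handled by contracting the connected set $V_2$ together with $F$ and invoking the hypothesis on $F_1$. If anything you are slightly more explicit than the paper about passing from $\Fhat$ to the witness $\Fhat'$ supplied by the hypothesis on $F_1$ and lifting it back to a cut of $G$ inducing the original split, which is a point the paper leaves implicit.
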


\begin{proof}
	Consider some cut $\Fhat$ of size at most $c$.
	
	If $\Fhat$ uses an edge from $F$,
	then it has at most $c - 1$ edges in $G \bs F$,
	and thus in any connected component as well.
	
	If $\Fhat$ has at most $c - 1$ edges in $G[V_1]$, then
	because removing $F$ already disconnected $V_1$ and $V_2$,
	and removing $F_1$ can only further disconnect things,
	no connected component in $V_1$ can have $c$ or more edges.
	
	The only remaining case is if $\Fhat$ is entirely
	contained on one of the sides.
	Without loss of generality assume $\Fhat$ is entirely
	contained in $V_1$,
	i.e. $\Fhat \subseteq E(G[V_1])$.
	Because no edges from $G[V_2]$ and $F$ are removed and $G[V_2]$ is connected,
	all edges in $G[V_2]$ and $F$ must not be cut and hence can be contracted into a single vertex $v_2$.
	
	So using the induction hypothesis on the cut $\Fhat$
	in $G/V_2/F$ with the terminal separation given by
	all of $S_2$ replaced by $v_2$ gives that $\Fhat$
	has at most $c-1$ edges in any connected component of
	\[
	\left(G/V_2/F \right) \bs F_1.
	\]
	Because connected components are unchanged under contracting
	connected subsets, we get that $\Fhat$ has at most $c-1$
	edges in any connected components of $G \bs F_1$ as well.
\end{proof}

Now, a natural way to handle the case where a minimum Steiner
cut has a single terminal $s$ on some side is to use Lemma \ref{lem:IsolateContract}
to contract across the cut to make progress.
However, it may be the case that for some $s \in S$,
there are are many minimum $s$-isolating cuts:
consider for example the length $n$ path with only the endpoints
as terminals. If we always pick the edge closest to $s$ as the minimum $s$-isolating
cut, we may have to continue $n$ rounds, and thus add all $n$ edges
to our set of intersecting edges.

To remedy this, we instead pick a ``maximal" $s$-isolating minimum cut.
One way to find a maximal $s$-isolating cut is to repeatedly contract
across an $s$-isolating minimum cut
using Lemma \ref{lem:IsolateContract} until its size increases.
At that point, we add the last set of edges found in the cut to
the set of intersecting edges.
We have made progress because the value of the minimum
$s$-isolating cut in the contracted graph must have increased by at
least $1$. While there are many ways to find a maximal $s$-isolating minimum cut,
the way described here extends to our analysis in Section \ref{sec:ExpanderLocalCut}.

Pseudocode of this algorithm is shown in 
Figure~\ref{fig:GetIntersectingEdgesSteiner}, and the procedure
for the repeated contractions to find a maximal $s$-isolating cut
described in the above paragraph is in Line~\ref{ln:IsolateContract}.

\begin{figure}[!h]
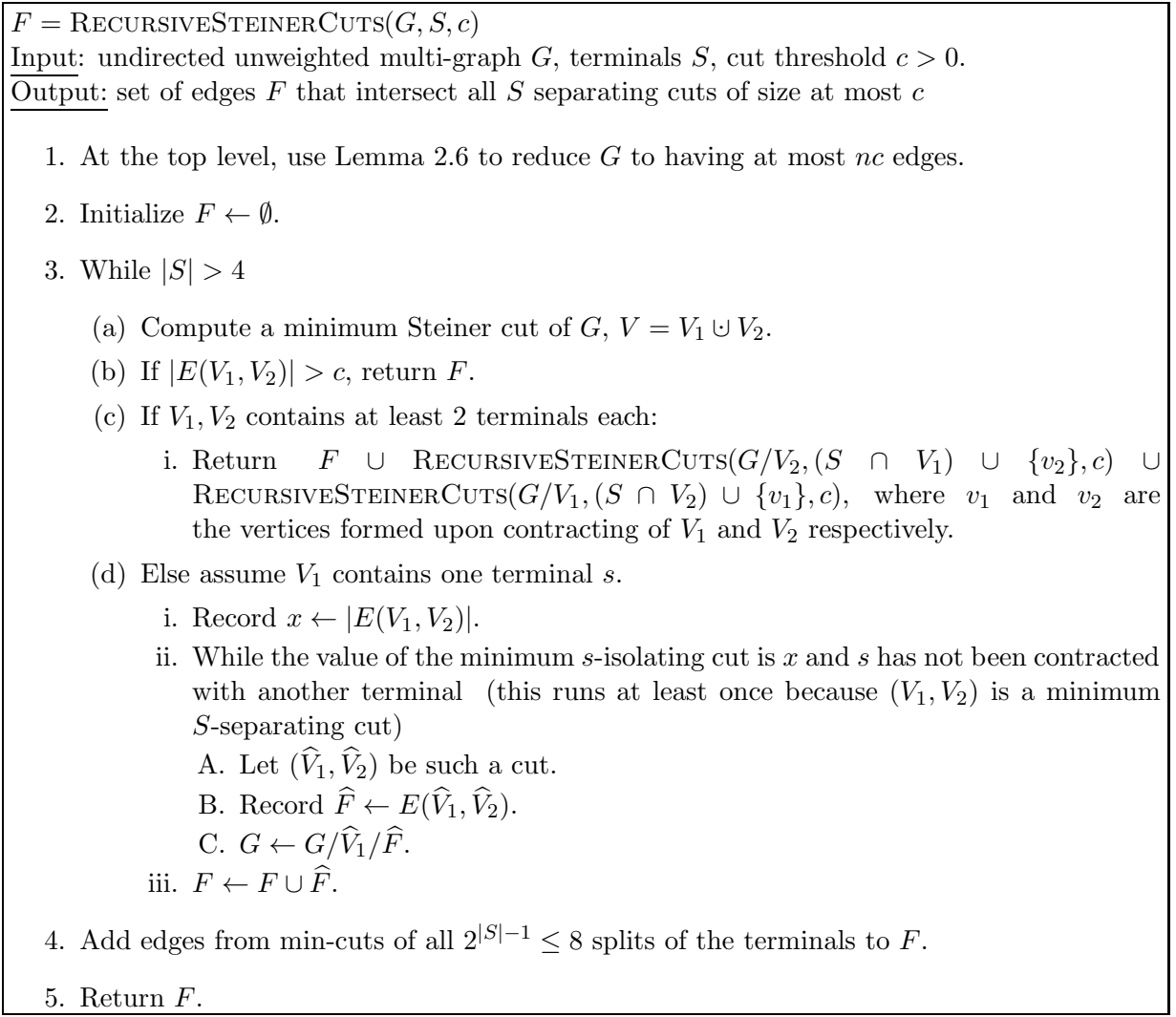

	\begin{algbox}
		$F = \textsc{RecursiveSteinerCuts}(G, S, c)$
		
		\underline{Input}: undirected unweighted multi-graph $G$,
		terminals $S$, cut threshold $c > 0$.
		
		\underline{Output:} set of edges $F$ that intersect all $S$
		separating cuts of size at most $c$
		
		\begin{enumerate}
			\item At the top level, use Lemma \ref{lem:EdgeReduction} to reduce $G$ to
			having at most $nc$ edges.
			\item Initialize $F \leftarrow \emptyset$.
			\item While $|S| > 4$
			\begin{enumerate}
				\item Compute a minimum Steiner cut of $G$,
					$V = V_1 \cupdot V_2$.
				\item If $|E(V_1, V_2)| > c$, return $F$.
				\item If $V_{1}, V_2$ contains at least $2$ terminals each:
				\begin{enumerate}
					\item Return $F \cup \textsc{RecursiveSteinerCuts}(G/V_2, (S \cap V_1) \cup \{v_2\}, c) \cup \textsc{RecursiveSteinerCuts}(G / V_1, (S \cap V_2) \cup \{v_1\}, c)$, where $v_1$ and $v_2$ are the vertices
					formed upon contracting of $V_1$ and $V_2$ respectively. \label{ln:Return}
				\end{enumerate}
				\item Else assume $V_1$ contains one terminal $s$. \label{ln:IsolateContract}
				\begin{enumerate}
					\item Record $x \leftarrow |E(V_1, V_2)|$.
					\item While the value of the minimum $s$-isolating cut is $x$ and $s$ has not been
						contracted with another terminal \label{ln:Condition}
					\label{ln:LoopToMaximal}
					(this runs at least once because $(V_1, V_2)$ is a minimum $S$-separating cut)
						\begin{enumerate}
							\item Let $(\Vhat_1, \Vhat_2)$ be such a cut.
							\item Record $\Fhat \leftarrow E(\Vhat_1, \Vhat_2)$.
							\item $G\leftarrow G/\Vhat_1/\Fhat.$ \label{ln:ContractedGraph}
						\end{enumerate}
					\item $F \leftarrow F \cup \Fhat$.
				\end{enumerate}
			\end{enumerate}
			\item Add edges from min-cuts of all $2^{|S|-1} \leq 8$ splits of the terminals to $F$.
			\item Return $F$.
		\end{enumerate}
	\end{algbox}
	\caption{Recursive algorithm using Steiner minimum cuts
		for finding a set of edges
		that intersect all terminal cuts of size $\leq c$.}
	\label{fig:GetIntersectingEdgesSteiner}
\end{figure}

\paragraph{Discussion of algorithm in Figure \ref{fig:GetIntersectingEdgesSteiner}.}
We clarify some lines in the algorithm of Figure \ref{fig:GetIntersectingEdgesSteiner}.
If the algorithm finds a nontrivial $S$-separating cut as the Steiner minimum cut, it returns the result of the recursion in Line~\ref{ln:Return},
and does not execute any of the later lines in the algorithm.
In Line~\ref{ln:Condition}, in addition to checking that the $s$-isolating minimum cut size is still $x$, we also must check that $s$ does not get contracted with another terminal.
Otherwise, contracting across that cut makes global progress by reducing the number of terminals by $1$.
In Line~\ref{ln:ContractedGraph}, note that we can still view $s$
as a terminal in $G\leftarrow G/\Vhat_1/\Fhat$,
as we have assumed that this contraction does not merge $s$
with any other terminals.

\begin{lemma}
\label{lem:WeakerVersion}
	For any graph $G$, terminals $S$, and cut value $c$,
	Algorithm \textsc{RecursiveSteinerCuts} as shown in
	Figure~\ref{fig:GetIntersectingEdgesSteiner} runs in $\O(n^2c^4)$ time
	and returns a set at most $O(|S|c^2)$ edges that
	intersect all $(S,c)$-cuts.
\end{lemma}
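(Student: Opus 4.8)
The plan is to analyze \textsc{RecursiveSteinerCuts} along two axes: correctness of the returned edge set, and bounds on its size and the running time. For correctness, I would argue by induction on $|S|$ (and, within a fixed $|S|$, on a suitable potential). The base case $|S| \le 4$ invokes Lemma~\ref{lem:OPLemma} implicitly — we add the min-cuts of all $\le 8$ terminal splits, which certainly intersects all $(S,c)$-cuts. For the inductive step, there are two situations. If the minimum Steiner cut $(V_1,V_2)$ has at least two terminals on each side, then it is a non-trivial $S$-separating cut with both sides connected (by the lemma just proved in the excerpt), so Lemma~\ref{lem:Partition} says that $E(V_1,V_2)$ together with the recursively-returned sets on $G/V_2$ and $G/V_1$ intersects all $(S,c)$-cuts — and each recursive call has strictly fewer terminals (since one side had $\ge 2$ terminals, contracting it to a single vertex strictly drops the count), so the induction applies. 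If instead the minimum Steiner cut isolates a single terminal $s$, we enter the inner loop at Line~\ref{ln:IsolateContract}: each iteration applies Lemma~\ref{lem:IsolateContract} with $V_1$ the single-terminal side, contracting $\Vhat_1$ and the cut edges $\Fhat$ into one vertex. Lemma~\ref{lem:IsolateContract} guarantees that it suffices to intersect the $(S,c)$-cuts of the contracted graph, so after the loop terminates (its guard at Line~\ref{ln:Condition} fails), we have reduced to a graph where either the minimum $s$-isolating cut has strictly larger value, or $s$ has merged with another terminal (strictly fewer terminals). Either way the outer \textbf{while} loop makes progress, and chaining the applications of Lemma~\ref{lem:IsolateContract} shows the accumulated $F$ is $(S,c)$-cut-intersecting.

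For the \emph{size} bound I would set up a charging/potential argument. The outer loop terminates in one of two ways; along the way the key invariant is that the minimum Steiner cut value is nondecreasing under the contractions, and it is an integer in $[1,c]$. Each single-terminal phase adds one cut $\Fhat$ of size at most $c$ and raises either the minimum $s$-isolating cut value or decreases the terminal count. Define a potential like $\Phi = \big(\text{number of active terminals}\big) \cdot c + \big(c - \text{current minimum Steiner cut value}\big)$; each single-terminal phase of the outer loop strictly decreases $\Phi$ and adds $\le c$ edges, and a non-trivial-cut branch adds $\le c$ edges and recurses on instances whose terminal counts sum to $|S|+2$ with each part $\ge 3$. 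As in the proof of Lemma~\ref{lem:RecursiveNontrivialCutsCorrectness}, the branching tree has $O(|S|)$ leaves, and along any root-to-leaf path the single-terminal phases contribute $O(c)$ of them (one per unit increase of the Steiner-cut value, across at most $c$ levels), each costing $\le c$ edges — giving $O(|S| c^2)$ total. I would double-check that the potential accounting survives contraction (contracting a connected side does not change connected components of the relevant subgraphs, which is exactly the statement used in Lemma~\ref{lem:IsolateContract}).

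For the \emph{running time}, the dominant cost per iteration is computing a minimum Steiner cut (equivalently, a minimum $s$-isolating cut), which by the Cole–Hariharan algorithm~\cite{ColeH03} takes $\O(m + nc^3)$ time; since we maintain $m \le nc$ via Lemma~\ref{lem:EdgeReduction} at the top level (and contractions only decrease $m$), this is $\O(nc^3)$ per cut computation. The number of such computations is bounded by the number of outer-loop iterations plus recursion nodes, which by the potential/branching analysis above is $O(|S| \cdot c) = O(nc)$, yielding $\O(n^2 c^4)$ overall. I expect the main obstacle to be the bookkeeping in the single-terminal phase: specifically, verifying that "repeatedly contract across an $s$-isolating minimum cut until its value strictly increases or $s$ merges with a terminal" is both well-defined (the inner loop runs at least once because $(V_1,V_2)$ itself is such a cut) and genuinely monotone — i.e., that contracting $\Vhat_1$ and $\Fhat$ cannot decrease the minimum $s$-isolating cut value, which follows from Lemma~\ref{lem:CutMonotone}, and that the "maximal" cut we extract correctly intersects the relevant cuts via Lemma~\ref{lem:IsolateContract} even though we threw away the intermediate contracted structure. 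Getting the interaction between the inner contraction loop and the potential function exactly right is the delicate part; the rest is routine.
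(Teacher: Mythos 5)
Your correctness argument and $O(|S|c^2)$ edge bound are essentially the paper's: recursion via Lemma~\ref{lem:Partition} for nontrivial cuts, contraction via Lemma~\ref{lem:IsolateContract} for single-terminal cuts, $O(|S|)$ distinct terminals ever created (each branch or merge consumes one), and at most $c$ single-terminal phases per terminal by monotonicity of its isolating cut (Lemma~\ref{lem:CutMonotone}). One caveat: the potential $\Phi = kc + (c - \text{min Steiner cut value})$ need not strictly decrease across a single-terminal phase, since raising the minimum $s$-isolating cut value does not force the \emph{global} minimum Steiner cut value to rise (some other terminal $t$ may still have a small isolating cut), and for the same reason ``$O(c)$ single-terminal phases per root-to-leaf path'' is false in general --- there can be $\Theta(|S|)$ of them at a single Steiner-cut value. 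The paper's per-terminal charging is what gives the $O(|S|c)$ phase count cleanly.

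The real gap is in the running time. You bound the number of cut computations by ``outer-loop iterations plus recursion nodes,'' which is $O(|S|c)$; but the inner contraction loop at Line~\ref{ln:Condition} also invokes a min-cut routine on every iteration, and its iteration count is not bounded by $O(|S|c)$. On a path with a terminal $s$ separated from the nearest other terminal by $\Theta(n)$ non-terminal vertices, the inner loop can contract one edge at a time and run $\Theta(n)$ iterations within a single phase even with $|S|=O(1)$. The paper's potential is different: the total number of edges across all graphs in the recursion, which starts at $m\le nc$, grows by at most $c$ at each of the $O(|S|)$ branches in Line~\ref{ln:Return}, and drops by at least one per inner iteration, so the inner loop runs $O(nc)$ times total. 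Multiplying by the $\O(nc^3)$ per-iteration cost gives $\O(n^2c^4)$. You reach the same final bound only because you loosen $O(|S|c)$ to $O(nc)$ in the last step; the edge-count potential is what actually makes the $O(nc)$ bound on cut computations true.
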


\begin{proof}
We assume $m \le nc$ throughout, as we can reduce to this case in $O(mc)$ time
by Lemma \ref{lem:EdgeReduction}.

Note that the recursion in Line~\ref{ln:Return} can only branch $O(|S|)$ times, 
by the analysis in Lemma \ref{lem:RecursiveNontrivialCutsCorrectness}. Similarly,
the case where $s$ gets contracted with another terminal in Line~\ref{ln:Condition}
can only occur $O(|S|)$ times.

Therefore, we only create $O(|S|)$ distinct terminals throughout the algorithm. Let $s$ be a terminal created
at some point during the algorithm. By monotonicity of cuts in Lemma \ref{lem:CutMonotone},
the minimum $s$-isolating cut can only increase in size $c$ times, hence $F$ is the union of $O(|S|c)$ cuts
of size at most $c$. Therefore, $F$ has at most $O(|S|c^2)$ edges.

To bound the runtime, we use the total number of edges in the graphs in our recursive algorithm
as a potential function. This potential function starts at $m$.
Note that the recursion of Line~\ref{ln:Return}
can increase the potential function by $c$,
hence the total potential function increase
throughout the algorithm is bounded by $m + O(c|S|) = O(nc)$.

Each loop of Line~\ref{ln:Condition} decreases our potential function
by at least $1$ from contractions.
Thus, the total runtime of the loop involving Line~\ref{ln:Condition}
can be bounded by
\[
O\left(mc\right) + \O\left(m+nc^3\right)
=
\O\left(nc^3\right),
\]
where the former term is from running a maxflow algorithm up to flow $c$,
and the latter is from an execution of the Steiner minimum cut algorithm
in~\cite{ColeH03}.
As the total potential function increase is at most $O(nc)$,
the loop in Line~\ref{ln:Condition} can only execute $O(nc)$ times,
for a total runtime of $\O(n^2c^4)$ as desired.
\end{proof}

Our further speedup of this routine in Section~\ref{sec:Expander}
also uses a faster variant of \textsc{RecursiveSteinerCuts} as
base case, which happens when $|S|$ is too small.
Here the main observation is that

A modification to Algorithm  as shown in
	Figure~\ref{fig:GetIntersectingEdgesSteiner} can reduce the runtime.

\begin{lemma}
\label{lem:StrongerVersion}
	For any graph $G$, terminals $S$, and cut value $c$,
	there is an algorithm that runs in $\O(mc + n|S|c^4)$ time
	and returns a set at most $O(|S|c^2)$ edges that
	intersect all $(S,c)$-cuts.
\end{lemma}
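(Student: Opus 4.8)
The plan is to modify \textsc{RecursiveSteinerCuts} from Figure~\ref{fig:GetIntersectingEdgesSteiner} so that the inner loop at Line~\ref{ln:Condition} --- which in the worst case iterates $\Theta(nc)$ times, once for each minimum $s$-isolating cut along a long chain --- is collapsed into a single step. The key observation I would use is that one can find a \emph{maximal} minimum $s$-isolating cut directly from one max-flow computation of value at most $c$: form a super-sink $t$ joined by infinite-capacity edges to all of $S \bs \{s\}$, route a maximum $s$--$t$ flow (in $O(mc)$ time, which is $O(nc^2)$ once the graph is reduced to $O(nc)$ edges at the top level via Lemma~\ref{lem:EdgeReduction}), and take $\Vhat_1$ to be the set of vertices that cannot reach $t$ in the residual graph, $\Vhat_2 = V \bs \Vhat_1$. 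I would then replace the repeated-contraction loop by: compute this one cut, set $\Fhat = E(\Vhat_1, \Vhat_2)$, and contract $G \leftarrow G/\Vhat_1/\Fhat$; if this contraction merges $s$ with another terminal I simply let it happen, reducing $|S|$ by one, an event that can occur at most $|S|$ times overall.

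Correctness would be inherited almost entirely from the analysis of Lemma~\ref{lem:WeakerVersion}: the contraction is still an instance of Lemma~\ref{lem:IsolateContract} (we contract one side of a cut together with its cut edges), and the $|S|\le 4$ base case is still covered by Lemma~\ref{lem:OPLemma}. The one new point to check is that a \emph{single} contraction now suffices, i.e. that the minimum $s$-isolating cut value strictly increases afterwards. This follows from maximality of $(\Vhat_1, \Vhat_2)$: any $s$-isolating cut in $G/\Vhat_1/\Fhat$ corresponds to an $s$-isolating cut in $G$ whose source side strictly contains $\Vhat_1$ (the contracted vertex has absorbed the $\Vhat_2$-endpoints of $\Fhat$) and which cuts no edge of $\Fhat$, so by maximality it has size at least $x+1$, where $x = |\Fhat| \le c$ was the old value. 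I would also verify that $G[\Vhat_1]$ and $G[\Vhat_2]$ are connected so that Lemma~\ref{lem:IsolateContract} applies verbatim: connectivity of $G$ forces $\Vhat_1$ to be connected (a piece of $\Vhat_1$ not reachable from $s$ within $\Vhat_1$ would, by minimality of the isolating cut, have no edges to $\Vhat_2$ either, hence be an isolated piece of $G$), and after absorbing any terminal-free components of the sink side into $\Vhat_1$ (which only enlarges $\Vhat_1$ and cannot raise the cut size) the sink side is connected as well.

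For the size bound I would argue exactly as in Lemma~\ref{lem:WeakerVersion}: the recursion creates only $O(|S|)$ distinct terminals, and by cut monotonicity (Lemma~\ref{lem:CutMonotone}) the minimum $s$-isolating cut of any fixed terminal $s$ can increase at most $c$ times before exceeding $c$, so $F$ is a union of $O(|S|c)$ cuts of size at most $c$ and hence has $O(|S|c^2)$ edges. For the running time, each ``phase'' of the outer loop now does one minimum Steiner cut computation, in $\O(m + nc^3) = \O(nc^3)$ time via Cole--Hariharan~\cite{ColeH03}, plus at most one max-flow of value $\le c$ in $O(nc^2)$ time. The number of phases is $O(|S|c)$: the recursion branches $O(|S|)$ times, the event that $s$ merges with another terminal happens $O(|S|)$ times, and for each of the $O(|S|)$ distinct terminals its isolating-cut value increases at most $c$ times. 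Multiplying gives total time $O(mc) + \O(|S|c)\cdot\O(nc^3) = \O(mc + n|S|c^4)$, the $O(mc)$ coming from the initial edge reduction.

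The step I expect to be the main obstacle is precisely this correctness argument for the single-contraction move: showing that contracting the maximal minimum $s$-isolating cut together with its cut edges strictly increases the isolating-cut value --- so that no inner loop is needed --- and that both sides of the cut read off from the residual graph can be taken connected, so that Lemma~\ref{lem:IsolateContract} applies without modification. Once this is in place, everything downstream (the size bound and the runtime, obtained by replacing the $\Theta(nc)$ term in the analysis of Lemma~\ref{lem:WeakerVersion} with $O(|S|c)$) is routine.
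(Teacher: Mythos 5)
Your proposal matches the paper's proof in all essentials: the paper also replaces the contraction loop with a single Ford--Fulkerson max-flow computation of value at most~$c$ (the paper phrases it with $S\setminus\{s\}$ as source and $\{s\}$ as sink and reads off the maximal $s$-side from the residual graph, citing~\cite{FulkersonH75}; you run the flow in the opposite direction, which gives the same set), and the size and runtime bookkeeping you do ($O(|S|c)$ phases, each costing one Cole--Hariharan call at $\O(nc^3)$ plus one $O(mc)$ flow) is identical to the paper's. Your extra paragraph verifying that the maximal cut strictly increases the isolating-cut value and that both sides can be taken connected spells out a correctness point the paper leaves implicit; it is sound, and in fact a cleaner way to see the connectivity of the sink side $\Vhat_2$ is to note that in this branch of the algorithm the minimum $s$-isolating cut value equals the minimum Steiner cut value, so if $\Vhat_2$ split into two terminal-containing components $C,D$ with $E(C,D)=\emptyset$, then $(\Vhat_1\cup C, D)$ and $(\Vhat_1\cup D, C)$ would each be Steiner cuts of size at least $x$ whose sizes sum to only $x$, a contradiction --- while terminal-free components cannot exist by minimality as you argue.
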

\begin{proof}
We modify \textsc{RecursiveSteinerCuts} as shown in
Figure~\ref{fig:GetIntersectingEdgesSteiner}
and its analysis as given in Lemma \ref{lem:WeakerVersion} above.
Specifically, we modify how we compute a maximal $s$-isolating minimum cut in Line~\ref{ln:Condition}.
For any partition $S = S_1 \cupdot S_2$, by submodularity of cuts
it is known that there is a unique maximal subset $V_1 \subseteq V$ such that
\begin{align*}
S_1 & \subseteq V_1,\\
S_2 & \subseteq V\bs V_1,\\
\left|E\left(V_1, V_2\right)\right|
&=
\left|\Mincut\left(G,S_1,S_2\right)\right|.
\end{align*}
Also, this maximal set can be computed in $O(mc)$ time by running
the Ford-Fulkerson augmenting path algorithm,
with $S_2$ as source, and $S_1$ as sink.
The connectivity value of $c$ means at most $c$ augmenting paths
need to be found, and the set $V_2$ can be set to the vertices that
can still reach the sink set $S_2$ in the residual graph~\cite{FulkersonH75}. Now set $V_1 = V\bs V_2.$
Thus by setting $S_1 \leftarrow \{s\}$,
we can use the corresponding computed
set $V_1$ as the representative of the maximal $s$-isolating
Steiner minimum cut.

Now we analyze the runtime of this procedure.
First, in we reduce the number of edges to at most $nc$
in $O(mc)$ time.
As in the proof of Lemma \ref{lem:WeakerVersion}, all graphs in the recursion have at most $O(nc)$ edges.
The recursion in Line~\ref{ln:Return} can only branch $|S|$ times,
and we only need to compute $O(c|S|)$ maximal
$s$-isolating Steiner minimum cuts throughout the algorithm.
Each call to the Cole-Hariharan algorithm~\cite{ColeH03} requires
$\O(m+nc^3) = \O(nc^3)$ time,
for a total runtime of $\O(nc^3 \cdot c|S|) = \O(n|S|c^4)$ as desired.
\end{proof}

\section{Nearly-Linear Time Constructions Using Expanders}
\label{sec:Expander}

In this section 
We now turn our attention to efficiently finding these vertex sparsifiers.
Here we utilize insights from recent results on finding $c$-vertex
cuts~\cite{NanongkaiSY19a,NanongkaiSY19b:arxiv,ForsterY19:arxiv},
namely that in a well connected graph, any cut of size at most $c$
must have a very small side.
This notion of connectivity is
formalized through the notion of graph conductance.

\begin{definition}
	In an undirected unweighted graph $G = (V, E)$, denote
	the volume of a subset of vertices, $\vol(S)$, as the
	total degrees of its vertices.
	The conductance of a cut $S$ is then
	\[
	\Phi_G\left( S \right)
	=
	\frac{\abs{\partial\left( S \right)}}
		{\min\left\{ \vol\left( S \right), \vol\left(V \bs S \right) \right\}},
	\]
	and the conductance of a graph $G = (V, E)$ is the minimum
	conductance of a subset of vertices:
	\[
	\Phi\left( G \right)
	= \min_{S \subseteq V} \Phi_{G}\left( S \right).
	\]
\end{definition}

The ability to remove edges and add terminals means we
can use expander decomposition to reduce to the case where
the graph has high conductance.
Here we utilize expander decompositions, as stated by
Saranurak and Wang~\cite{SaranurakW19}:

\begin{lemma}
\label{lem:ExpanderDecompose}
	(Theorem 1.2. of~\cite{SaranurakW19}, Version 2 \url{https://arxiv.org/pdf/1812.08958v2.pdf})
	There exists an algorithm \textsc{ExpanderDecompose}
	that for any undirected unweighted graph $G$ and
	any parameter $\phi$, decomposes in $O(m \log^{4}{n} \phi^{-1})$
	time $G$ into pieces of conductance at least $\phi$
	so that at most $O(m \phi \log^{3}n)$ edges are between
	the pieces.
\end{lemma}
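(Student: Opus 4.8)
This is Theorem~1.2 of~\cite{SaranurakW19}, and my plan would be to invoke that result; for completeness I sketch the structure of its proof. The decomposition is recursive, driven by a single ``cut-or-expander'' primitive: given a (sub)graph it either certifies conductance $\Omega(\phi/\log^2 n)$, or returns a cut $(A,\overline{A})$ with $\vol(A)\le\vol(\overline{A})$ and conductance $\O(\phi)$, together with the extra guarantee that the larger side $\overline{A}$ is a \emph{near-expander} inside $G$ --- every internal cut of $\overline{A}$ has few crossing edges once the edges leaving $\overline{A}$ are counted as well. The natural implementation is the cut-matching game of Khandekar--Rao--Vazirani: run $O(\log^2 n)$ rounds, each embedding into $G$ a perfect matching selected by the game, using one (approximate) max-flow / blocking-flow computation in $\O(m)$ time; if every matching embeds with low congestion then the union of the embedded matchings certifies expansion, and otherwise some obstructed flow exposes a sparse cut. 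Summing $O(\log^2 n)$ flow rounds of cost $\O(m)$ over the recursion, plus the $\phi^{-1}$ coming from the flow value used, is what yields the $O(m\log^4 n\,\phi^{-1})$ running time.

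The second ingredient is \emph{trimming}, needed because the primitive hands back a near-expander $\overline{A}$ rather than an induced expander. Starting from $\overline{A}$, repeatedly delete any vertex whose number of edges to the outside of the current set exceeds roughly $\phi$ times its internal degree. A potential-function argument shows this halts after deleting total volume $O(\abs{\partial(A)}/\phi)$, and the surviving set $A'$ then satisfies $\Phi(G[A'])=\Omega(\phi)$; crucially, the extra edges cut by trimming total only $O(\abs{\partial(A)})$, a constant factor times the sparse cut already charged. One outputs $A'$ as an expander cluster and recurses on $G[V\bs A']$, that is, on the small side $A$ together with the trimmed-off part.

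For the final accounting: when the primitive returns a reasonably balanced cut one recurses on both sides, and when it returns an unbalanced cut the trimming bound guarantees that $V\bs A'$ has volume a constant fraction of the current piece, so in either case the recursion has depth $O(\log n)$. At each level the cuts made have total boundary $O(\phi\log^2 n)$ times the volume present at that level, and summing over the $O(\log n)$ levels gives $O(\phi m\log^3 n)$ inter-cluster edges, every returned piece having conductance $\Omega(\phi/\log^2 n)$ (rescaling $\phi$ by a $\log^2 n$ factor absorbs this). I expect the crux to be exactly the efficiency of the unbalanced case: peeling off one small expander per primitive call could a priori require $\Omega(n)$ calls, and it is only the near-expander byproduct of the cut-matching game together with the $O(\abs{\partial(A)}/\phi)$ trimming bound --- which force each call to delete a \emph{constant} fraction of the remaining volume rather than a $1/\poly(n)$ fraction --- that keeps the number of calls $O(\log n)$ and the whole algorithm near-linear. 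Reproving those two facts carefully is the real work; the rest is routine charging.
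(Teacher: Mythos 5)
The paper does not prove this lemma at all — it is imported verbatim as a black box from Saranurak and Wang (Theorem 1.2 of \cite{SaranurakW19}), which is exactly what you say you would do, so your approach matches the paper's. Your accompanying sketch of the cut-matching game plus near-expander trimming and the $O(\log n)$-depth charging is a fair high-level account of the cited proof, but it is supplementary; for this paper's purposes the citation alone suffices.
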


Note that if a graph has conductance $\phi$,
any cut of size at most $c$ must have
\begin{equation}
\min\left\{ \vol\left( S \right), \vol\left(V \bs S \right) \right\}
\leq
c\phi^{-1}.
\label{eq:ExpanderImba}
\end{equation}

Algorithmically, we can further leverage it in two ways,
both of which are directly motivated by recent works
on vertex connectivity~\cite{NanongkaiSY19a,ForsterY19:arxiv,NanongkaiSY19b:arxiv}.

\subsection{Enumeration of All Small Cuts by their Smaller Sides}

In a graph with expansion $\phi$, we can enumerate all cuts of size
at most $c$ in time exponential in $c$ and $\phi$.
\begin{lemma}
\label{lem:EnumerateCuts}
	In a graph $G$ with conductance $\phi$ we can enumerate all cuts of size at most $c$
	with connected smaller side in time $O(n \cdot (c\phi^{-1})^{2c})$.
\end{lemma}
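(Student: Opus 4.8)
The plan is to enumerate cuts of size at most $c$ by growing their smaller side one vertex at a time, using the fact (Equation~\eqref{eq:ExpanderImba}) that in a graph of conductance $\phi$ the smaller side $S$ of any such cut satisfies $\vol(S) \le c\phi^{-1}$, and in particular $|S| \le c\phi^{-1}$. First I would fix one vertex $v$ on the smaller side; there are $n$ choices. Since that side is connected (by hypothesis) and has at most $c\phi^{-1}$ vertices, I would build it up by a bounded branching process: maintain the current connected vertex set $S$ together with the set of boundary edges $\partial(S)$, and at each step either declare $S$ complete (outputting the cut $\partial(S)$ if $|\partial(S)| \le c$) or pick a boundary edge and add its other endpoint to $S$. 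Because every vertex of the smaller side is reached through one of the at most $c$ boundary edges of the final cut, each branching step has at most $c$ choices of which boundary edge to cross, and the depth of the recursion is at most $c\phi^{-1}$ since that bounds the number of vertices we add. This gives roughly $c^{c\phi^{-1}}$ leaves per starting vertex; being a bit more careful with the bookkeeping (each of the $\le c$ "slots" corresponding to a final boundary edge gets used at most... ) one targets the stated $O(n \cdot (c\phi^{-1})^{2c})$ bound.

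Concretely I would phrase the search as follows. For each $v \in V$, call a recursive procedure $\mathrm{Grow}(S, B)$ where $S$ is the current connected set (initially $\{v\}$) and $B = \partial_G(S)$ is its current edge boundary. If $|B| > c$, prune this branch. Otherwise, first record the cut $(S, V \setminus S)$ as a candidate output. Then, for each edge $e = (x,y) \in B$ with $x \in S$, recurse on $\mathrm{Grow}(S \cup \{y\}, \partial_G(S \cup \{y\}))$. Since every cut of size $\le c$ whose smaller side is connected and contains $v$ arises as some leaf/internal node of this tree (add the vertices of the smaller side in any order consistent with connectivity, always crossing an edge of the final boundary), correctness is immediate. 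For the count: along any root-to-node path the set $S$ only grows, and at termination $|S| \le c\phi^{-1}$, so each path has length $\le c\phi^{-1}$; at each node the branching factor is $|B| \le c$. A naive bound is then $c^{c\phi^{-1}}$ nodes per starting vertex, which is too weak, so the refinement is to charge branching to the $\le c$ edges of the \emph{final} boundary: once an edge of $B$ is "resolved" (its outside endpoint pulled in) it leaves the boundary, and one shows each of the at most $2c$ relevant edges is branched on at most a constant number of levels, yielding the $(c\phi^{-1})^{2c}$ factor. Multiplying by $n$ starting vertices and the $\poly$ per-node cost (updating $\partial(S)$, which in the bounded-volume regime is cheap) gives the claimed $O(n \cdot (c\phi^{-1})^{2c})$ running time.

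The main obstacle I anticipate is not correctness but getting the exponent in the running time bound right: the crude connected-subgraph enumeration gives something like $(c\phi^{-1})^{O(c\phi^{-1})}$ or $c^{c\phi^{-1}}$ rather than $(c\phi^{-1})^{2c}$, so the key trick must be that the smaller \emph{side} $S$ can be \emph{any} connected set with $\vol(S)\le c\phi^{-1}$, but the number of \emph{distinct cuts} of size $\le c$ is governed by the $\le c$ boundary edges, not by $|S|$. I would therefore reorganize the search to branch on boundary edges rather than on all connected supersets — i.e. think of it as choosing, for a candidate cut, which $\le c$ edges form its boundary — and argue that fixing the boundary edges (together with the starting vertex $v$, to pick out a side) determines $S$ as the connected component of $v$ in $G$ minus those edges, on the smaller side. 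That reduces the count to roughly $n \cdot m^{c}$ in the worst case, but inside an expander piece of bounded volume the relevant edges all lie within distance depending on $c\phi^{-1}$ of $v$, trimming $m$ down to $(c\phi^{-1})^{O(1)}$ and yielding the stated bound after a careful accounting. I would present the BFS-style growth procedure as the algorithm and then do this boundary-edge charging argument for the time analysis.
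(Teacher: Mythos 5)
You correctly diagnose the obstacle: growing the small side vertex-by-vertex has branching factor $\le c$ but depth up to $c\phi^{-1}$, giving a $c^{c\phi^{-1}}$-type bound, and the fix must be to recurse on the $\le c$ \emph{cut edges} rather than the $\le c\phi^{-1}$ vertices. That is indeed the paper's idea. However, both of your attempts to make this precise have gaps.

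Your ``charging'' refinement — that each boundary edge is branched on only a constant number of levels because once its outside endpoint is pulled in it leaves $B$ — does not hold. An edge of the \emph{final} boundary $\partial(S^*)$ has an outside endpoint that should \emph{never} be pulled in; such an edge can sit in $B$ and remain a live branching option at every level of the growth process, i.e.\ up to $c\phi^{-1}$ times, not $O(1)$. So this argument does not reduce the exponent from $c\phi^{-1}$ to $c$.

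Your second reorganization — pick the $\le c$ cut edges from a precomputed local candidate set ``within distance depending on $c\phi^{-1}$ of $v$'' — is closer, but the claimed bound on the candidate set is unjustified. A ball of radius $O(c\phi^{-1})$ around $v$ in $G$ can contain essentially the whole graph, so distance alone does not trim $m$ down to $(c\phi^{-1})^{O(1)}$. (The phrase ``inside an expander piece of bounded volume'' conflates the ambient graph $G$, which has conductance $\phi$ but unbounded volume, with the unknown small side $S$, which has $\vol(S) \le c\phi^{-1}$.) You cannot identify the $O(c\phi^{-1})$ edges incident to $S$ without already knowing $S$.

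The missing ingredient is an adaptive, size-capped DFS, which is how the paper obtains a bounded candidate set at each recursion level. From the starting vertex $u$, run DFS but stop as soon as more than $c\phi^{-1}$ vertices have been reached. If $u$ lies on the small side $S$ of a target cut $F$, then since $|S| \le c\phi^{-1}$ the DFS has necessarily escaped $S$, so some edge among the $O\left((c\phi^{-1})^2\right)$ edges induced on the explored vertices belongs to $F$. Branch on which one, delete it, and restart the DFS from $u$ in the pruned graph; once all $\le c$ edges of $F$ have been deleted, the DFS halts before exceeding $c\phi^{-1}$ vertices and the accumulated edges are recorded as a candidate cut. This gives recursion depth $c$ and branching factor $O\left((c\phi^{-1})^2\right)$, hence $O\left((c\phi^{-1})^{2c}\right)$ per starting vertex and $O\left(n\cdot(c\phi^{-1})^{2c}\right)$ overall. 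Your proposal gestures at this but never produces the concrete mechanism that bounds the per-level candidate set, which is the crux of the lemma.
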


\begin{proof}
	We first enumerate over all starting vertices.
	For a starting vertex $u$,
	we repeatedly perform the following process.
	\begin{enumerate}
		\item perform a DFS from $u$ until it reaches more than
		$c \phi^{-1}$ vertices.
		\item Pick one of the edges among the reached vertices as
		a cut edge.
		\item Remove that edge, and recursively start another DFS
		starting at $u$.
	\end{enumerate}
	After we have done this process at most $c$ times, we check whether
	the edges form a valid cut, and store it if so.

	By Equation~\ref{eq:ExpanderImba}, the smaller side of the
	can involve at most $c \phi^{-1}$ vertices.
	Consider such a cut with $S$ as the smaller side,
	$F = E(S, V \setminus S)$, and $|S| \leq c \phi^{-1}$.
	Then if we picked some vertex $u \in S$ as the starting point,
	the DFS tree rooted at $u$ must contain some edge in $F$
	at some point.
	Performing an induction with this edge removed then gives
	that the DFS starting from $u$ will find this cut.
	
	Because there can be at most $O((c \phi^{-1})^2)$ different
	edges picked among the vertices reached, the total work performed
	in the $c$ layers of recursion is $O((c \phi^{-1})^{2c})$.
\end{proof} 

Furthermore, it suffices to enumerate all such cuts once at the start,
and reuse them as we perform contractions.
\begin{lemma}
	\label{lem:ContractionCut}
	If $\Fhat$ is a set of edges that form a cut in $G / \Ehat$,
	that is, $G$ with a subset of edges $\Ehat$ contracted,
	then $\Fhat$ is also a cut in $G$.
\end{lemma}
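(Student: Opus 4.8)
The plan is to lift the vertex bipartition that witnesses $\Fhat$ as a cut in $G/\Ehat$ back up to a bipartition of $V(G)$ witnessing $\Fhat$ as a cut in $G$. Recall that contracting the edge set $\Ehat$ partitions $V(G)$ into equivalence classes under the relation generated by $\Ehat$ (two vertices are equivalent iff they are joined by a path using only edges of $\Ehat$), and the vertices of $G/\Ehat$ are exactly these classes; the edges of $G/\Ehat$ are the edges of $G\bs\Ehat$, each reattached to the classes containing its original endpoints (possibly as a self-loop if both endpoints lie in the same class). So there is a natural surjection $\pi : V(G)\to V(G/\Ehat)$ and a natural bijection on the edge set $E(G)\bs\Ehat \to E(G/\Ehat)$.

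Given that $\Fhat = \partial_{G/\Ehat}(\Vhat)$ for some $\Vhat \subseteq V(G/\Ehat)$, I would set $V_1 \defeq \pi^{-1}(\Vhat)$ and $V_2 \defeq V(G)\bs V_1 = \pi^{-1}(V(G/\Ehat)\bs\Vhat)$, which is a genuine bipartition of $V(G)$. The two steps to check are: (i) no edge of $\Ehat$ crosses $(V_1,V_2)$, because the two endpoints of such an edge lie in the same class and hence both map under $\pi$ into $\Vhat$ or both outside it; and (ii) for an edge $e\in E(G)\bs\Ehat$ with image $\bar e \in E(G/\Ehat)$, $e$ crosses $(V_1,V_2)$ in $G$ if and only if $\bar e$ crosses $(\Vhat, V(G/\Ehat)\bs\Vhat)$ in $G/\Ehat$, since $\pi$ sends the endpoints of $e$ to the endpoints of $\bar e$. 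Combining (i) and (ii) under the edge bijection gives $\partial_G(V_1) = \Fhat$, so $\Fhat$ is a cut in $G$.

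There is essentially no hard part here; the only point requiring care is bookkeeping around multi-edges and self-loops produced by the contraction. A self-loop of $G/\Ehat$ (an edge of $G$ whose endpoints got identified) can never belong to $\Fhat$ since $\Fhat = \partial_{G/\Ehat}(\Vhat)$ contains no self-loops, so it is automatically handled by case (i), and parallel edges are handled individually by the edge bijection with no extra work. I would state the argument at this level of detail and not belabor the equivalence-class formalities further.
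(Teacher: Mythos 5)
Your proof is correct. The paper states Lemma~\ref{lem:ContractionCut} without proof (it is treated as an elementary observation), so there is no paper argument to diverge from; your lifting of the bipartition via $\pi^{-1}$ is the standard and natural way to fill in the omitted verification, and your two checks (i) and (ii) together with the edge bijection are exactly what is needed.
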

Note that this lemma also implies that an expander stays
so under contractions.
So we do not even need to re-partition the graph as we recurse.

\begin{proof}[Proof of Theorem~\ref{thm:IntersectingEdges}
	Part~\ref{part:Slower}]
	First, we perform expander decomposition, remove the inter-cluster edges, and add their endpoints as terminals.

	Now, we describe the modifications to
	\textsc{GetIntersectingEdgesSlow} that makes it efficient.
	
	Lemma~\ref{lem:AddEdges}, and Lemma~\ref{lem:Independent}
	allows us to consider the pieces separately.

	Now at the start of each recursive call, enumerate
	all cuts of size at most $c$, and store the vertices
	on the smaller side, which by
	Equation~\ref{eq:ExpanderImba} above has size at most
	$O(c \phi^{-1})$.
	When such a cut is found, we only invoke recursion on the
	smaller	side (in terms of volume).
	For the larger piece, we can continue using the original set
	of cuts found during the search.
	
	To use a cut from a pre-contracted state, we need to:
	\begin{enumerate}
		\item check if all of its edges remain (using a union-find data structure).
		\item check if both portions of the graph remain
		connected upon removal of this cut -- this can be done
		by explicitly checking the smaller side, and certifying
		the bigger side using a dynamic connectivity data structure by
		removing all edges from the smaller side.
	\end{enumerate}
	Since we contract each edge at most once, the total work
	done over all the larger side is at most
	\[ \O\left( m \left( c\phi \right)^{-2c} \right), \] where we
	have included the logarithmic factors from using the dynamic connectivity
	data structure.
	Furthermore, the fact that we only recurse on things with half
	as many edges ensures that each edge participates in the cut
	enumeration process at most $O(\log n)$ times.
	Combining these then gives the overall running time.
\end{proof}

\subsection{Using Local Cut Algorithms}
\label{sec:ExpanderLocalCut}
A more recent development are local cut algorithms, which
for a vertex $v$ can whether there is a cut of size at most $c$
such that the side with $v$ has volume at most $\nu$. The runtime
is linear in $c$ and $\nu$.
\begin{theorem}[Theorem 3.1 of \cite{NanongkaiSY19b:arxiv}]
\label{thm:LocalFlow}
Let $G$ be a graph and let $v \in V(G)$ be a vertex. For a connectivity parameter $c$
and volume parameter $\nu$, there is an algorithm that with high probability
either
\begin{enumerate}
\item Certifies that there is no cut of size at most $c$ such that the side with $v$
has volume at most $\nu$.
\item Returns a cut of size at most $c$ such that the side with $v$ has volume
at most $130c\nu.$ It runs in time $\O(c^2\nu).$
\end{enumerate}
\end{theorem}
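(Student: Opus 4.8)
The statement is the \emph{local edge-connectivity} primitive (\textsc{LocalEC}), and the cleanest route is to import it verbatim from~\cite{NanongkaiSY19b:arxiv}; here I sketch how one would prove it, adapting the bounded/local Ford--Fulkerson framework of~\cite{ForsterY19:arxiv,NanongkaiSY19b:arxiv}. The plan is to recast the question as a \emph{local} max-flow computation out of $v$. Attach a virtual super-source to $v$ by $c+1$ unit-capacity arcs, give every edge of $G$ unit capacity, and attempt to route $c+1$ units of flow out of $v$ into the rest of the graph by repeated augmenting-path searches. If the smallest $\abs{\partial(A)}$ over connected sets $A \ni v$ exceeds $c$, then such a flow exists; if instead some cut of size at most $c$ has a low-volume $v$-side, the flow must get ``stuck'' and the set of vertices still reachable from $v$ in the residual graph is exactly such a cut. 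The point of the construction is that all of this can be done \emph{locally}: each augmenting-path search is a DFS from $v$ that is truncated once it has traversed $\Theta(\nu)$ edges, so the whole procedure touches only $\O(c \cdot \nu)$ edges rather than all of $G$.

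Concretely I would run at most $c+1$ rounds of this truncated augmenting-path DFS in the residual graph, maintaining the current flow. There are two outcomes to analyze. (i) If some round fails to find an augmenting path before exhausting its exploration budget, then the flow value is at most $c$, and the residual-reachable set $A$ from $v$ is a cut with $\abs{\partial(A)} \le c$; the DFS truncation bounds $\vol(A)$, and one returns $(A, V \bs A)$. (ii) If all $c+1$ rounds succeed within budget, then I claim there is no cut of size at most $c$ with $v$-side volume at most $\nu$: such a cut could carry at most $c < c+1$ units of flow across it, so if the whole flow were confined to a volume-$\le \nu$ region around $v$ we would contradict flow conservation once the truncation length is chosen larger than that region; hence we may safely certify case~1. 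Choosing the truncation length $\Theta(\nu)$ and the number of rounds $c+1$ makes the two arguments meet, and the total running time is dominated by the $O(c)$ DFS calls, each of cost $\O(c\nu)$ after incorporating the residual-graph bookkeeping, giving $\O(c^2\nu)$.

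The main obstacle, and the reason the theorem is stated with a $130c$ multiplicative slack in the volume and only holds \emph{with high probability}, is controlling the volume of the returned cut. A deterministic truncation at a fixed depth can simultaneously (a) fail to discover a genuine small cut because the ``right'' augmenting path is longer than the budget, and (b) when it does stop, certify a set whose volume is a large multiple of $\nu$. The standard fix is to randomize the truncation --- e.g.\ draw each DFS's exploration budget from a geometric-type distribution, or randomize which frontier edge is used to continue the search --- and then run a union bound over the (at most $c$) cut edges showing that, with high probability, every DFS that ``should'' cross the target cut does so before being truncated, while the returned set has volume $O(c\nu)$. Pinning down the constants in this probabilistic argument (the $130c$), and verifying that the $\O(c^2\nu)$ time bound survives the randomized restarts, is the technical heart of the proof; the flow-conservation / residual-cut skeleton above is routine.
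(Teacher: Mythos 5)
This statement is not proved in the paper at all: it is imported verbatim as Theorem~3.1 of~\cite{NanongkaiSY19b:arxiv} and then used as a black box in Section~\ref{sec:ExpanderLocalCut} (via Lemma~\ref{lem:ExpanderCutFaster}). So there is no in-paper argument to compare your sketch against; the ``proof'' in this paper is the citation itself, and the right move when reading the paper is simply to accept the primitive with the stated guarantees ($130c\nu$ volume slack, $\O(c^2\nu)$ time, high-probability correctness) and check that the downstream uses respect those exact parameters, which they do.

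That said, your reconstruction of how the cited result is established is a fair high-level account of the local Ford--Fulkerson / truncated-DFS framework of~\cite{ForsterY19:arxiv,NanongkaiSY19b:arxiv}: route up to $c+1$ units of flow out of $v$ via depth-bounded augmenting searches, read a small cut off the residual-reachable set when a search gets stuck, and use randomization in the truncation to simultaneously avoid missing a genuine low-volume cut and avoid certifying an over-large side. You correctly flag the one nontrivial issue --- volume control of the returned set --- as the place where the randomization and the $130c$ factor enter, and your budget accounting ($\Theta(\nu)$ per search, $O(c)$ searches, $\O(c\nu)$ volume touched, $\O(c^2\nu)$ time) is consistent with the stated bounds. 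Two small caveats if you were to flesh this out: the cited algorithm does not fix a virtual super-sink in advance (the ``far side'' has no canonical sink, which is precisely why the stopping rule must be randomized rather than a plain distance cutoff), and the certification in the all-searches-succeed case needs a slightly more careful argument than plain flow conservation, since the $c+1$ units must provably escape \emph{every} candidate low-volume region around $v$, not just one. But these are details of~\cite{NanongkaiSY19b:arxiv}, not of this paper.
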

Let $v$ be a vertex. We now formalize the notion of the smallest cut
that is \emph{local} around $v$.
\begin{definition}[Local cuts]
For a vertex $v \in G$ define $\localcut(v)$ to be
\[ \min_{\substack{V = V_1 \cupdot V_2 \\ v \in V_1 \\ \vol(V_1) \le \vol(V_2)}} |E(V_1,V_2)|.\]
\end{definition}
We now combine Theorem \ref{thm:LocalFlow} with the observation from Equation~\ref{eq:ExpanderImba}
in order to control the volume of the smaller side of the cut in an expander.
\begin{lemma}
	\label{lem:ExpanderCutFaster}
Let $G$ be a graph with conductance at most $\phi$, and let $S$ be a set of terminals.
If $|S| \ge 500c^2\phi^{-1}$ then for any vertex $s \in S$ we can with high probability
in $\O(c^3\phi^{-1})$ time either compute $\localcut(s)$ or certify that $\localcut(s) > c.$
\end{lemma}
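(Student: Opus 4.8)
The plan is to apply the local cut algorithm of Theorem~\ref{thm:LocalFlow} with the volume parameter set so that Equation~\eqref{eq:ExpanderImba} guarantees it sees every relevant cut, while keeping the runtime small. Recall that in a graph of conductance $\phi$, any cut of size at most $c$ has a side of volume at most $c\phi^{-1}$. So first I would note that if $\localcut(s) \le c$, then the cut achieving it has its $s$-side of volume at most $c\phi^{-1}$ (either $s$ is on the small-volume side automatically, or $s$'s side is the larger side but then the other side has volume $\le c\phi^{-1}$, and since $|S| \ge 500 c^2 \phi^{-1}$ is large there is slack — actually the cleanest route is: by definition of $\localcut(s)$ we take the side containing $s$ to be the one of smaller volume, so $\vol(V_1) \le c\phi^{-1}$ directly). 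Thus I would run the algorithm of Theorem~\ref{thm:LocalFlow} with $v = s$, connectivity parameter $c$, and volume parameter $\nu = c\phi^{-1}$.

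Next I would read off the two possible outcomes. If it certifies there is no cut of size $\le c$ with the $s$-side of volume $\le \nu = c\phi^{-1}$, then since any size-$\le c$ cut does have such a side, we conclude $\localcut(s) > c$ and we are done. Otherwise it returns a cut $(V_1, V_2)$ of size at most $c$ with $\vol(V_1) \le 130 c \nu = 130 c^2 \phi^{-1}$; this is a valid candidate but we still need its value to equal $\localcut(s)$, not just to be $\le c$. To pin down the exact value, I would combine this with an exact min-cut / max-flow computation: knowing $\localcut(s) \le c$, compute the minimum $s$-isolating-type cut exactly using at most $c$ rounds of augmenting paths (Ford--Fulkerson), restricted to the region of volume $O(c^2\phi^{-1})$ that the local algorithm has localized us to, which runs in $\O(c \cdot c^2\phi^{-1}) = \O(c^3\phi^{-1})$ time. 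The condition $|S| \ge 500 c^2 \phi^{-1}$ enters to guarantee that the large side $V_2$ genuinely dominates in volume (so that the ``$\vol(V_1)\le\vol(V_2)$'' constraint in the definition of $\localcut$ is not binding in a way that hides the true minimum), and also that $V_2$ contains terminals — i.e. $V_1$ being small in volume it cannot swallow all of $S$.

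The runtime is then $\O(c^2\nu) = \O(c^2 \cdot c\phi^{-1}) = \O(c^3\phi^{-1})$ for the local cut call, plus $\O(c^3\phi^{-1})$ for the exact flow refinement, giving the claimed $\O(c^3\phi^{-1})$ bound; the high-probability qualifier is inherited from Theorem~\ref{thm:LocalFlow}. The main obstacle I anticipate is the second bullet's gap between ``a cut of size $\le c$'' and ``$\localcut(s)$'' — the local algorithm only promises approximate volume and does not promise minimality of the cut size — so the argument really has to do the extra exact-flow step inside the localized region, and one must argue carefully that restricting attention to that region does not miss a smaller cut, which is exactly where the conductance bound and the lower bound on $|S|$ have to be used in tandem.
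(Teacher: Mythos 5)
Your proposal takes a genuinely different route from the paper, and it has a gap at the step you yourself flag. The paper's proof is simpler: it binary searches over the candidate cut value $c' \in \{1,\dots,c\}$, calling the local flow algorithm of Theorem~\ref{thm:LocalFlow} once per candidate with a \emph{fixed} volume parameter $\nu = c\phi^{-1}$ and connectivity parameter $c'$. By Equation~\eqref{eq:ExpanderImba}, any cut of size $\le c'$ has its smaller side of volume $\le c'\phi^{-1} \le \nu$, so a certification that no such cut exists with $s$-side of volume $\le \nu$ implies $\localcut(s) > c'$; conversely, a returned cut of size $\le c'$ has $s$-side volume $\le 130 c' \nu \le 130 c^2\phi^{-1} < |S|/2$, hence $s$ really is on the smaller-volume side and $\localcut(s) \le c'$. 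The binary search pins down $\localcut(s)$ exactly in $O(\log c)$ calls, each of cost $\O(c^2\nu) = \O(c^3\phi^{-1})$, so no exact max-flow refinement is needed at all.

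The concrete gap in your version: you propose to run one local flow call, obtain a cut $(V_1, V_2)$ with $\vol(V_1) \le 130 c^2\phi^{-1}$, and then ``compute the minimum $s$-isolating-type cut exactly\dots restricted to the region of volume $O(c^2\phi^{-1})$ that the local algorithm has localized us to.'' But $V_1$ is just \emph{some} cut of size $\le c$; there is no guarantee that the $s$-side of the true minimum is contained in $V_1$, so restricting to it could miss the optimum. You anticipate this (``one must argue carefully that restricting attention to that region does not miss a smaller cut'') but do not resolve it. The natural repair is not to use $V_1$: instead, take a BFS ball $B$ around $s$ of volume slightly larger than $c\phi^{-1}$, observe that the $s$-side of any minimum cut of size $\le c$ can be taken connected (moving components not containing $s$ to the other side only reduces the cut) and hence lies inside $B$ by the conductance bound, contract $V \setminus B$ to a super-sink, and run at most $c$ rounds of Ford--Fulkerson there. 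That would make your refinement correct, but it is strictly more work than the paper's binary search, which avoids the issue entirely by letting Theorem~\ref{thm:LocalFlow}'s yes/no certification do the job.
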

\begin{proof}
We binary search on the size of the minimum Steiner cut with $s$
on the smaller side, and apply Theorem \ref{thm:LocalFlow}.
The smaller side of a Steiner cut has volume at most $c\phi^{-1}.$ Therefore,
if $|S| \ge 500c^2\phi^{-1}$ then the cut returned by Theorem \ref{thm:LocalFlow}
for $\nu = c\phi^{-1}$ will always be a Steiner cut, as $130\nu c \le |S|/2.$
The runtime is $\O(\nu c^2) = \O(c^3 \phi^{-1})$ as desired.
\end{proof}
We can substitute this faster cut-finding procedure
into \textsc{RecursiveSteinerCuts} to get the faster
running time stated in Theorem~\ref{thm:IntersectingEdges}
Part~\ref{part:Faster}.
\begin{proof}[Proof of Theorem~\ref{thm:IntersectingEdges}
	Part~\ref{part:Faster}]
First, we perform expander decomposition, remove the inter-cluster edges, and add their endpoints as terminals.

Now, we describe the modifications we need to make to Algorithm \textsc{RecursiveSteinerCuts} as shown in
	Figure~\ref{fig:GetIntersectingEdgesSteiner}.
First, we terminate if $|S| \le 500c^2\phi^{-1}$ and use the result of Lemma \ref{lem:StrongerVersion}.
Otherwise, instead of using the Cole-Hariharan algorithm, we compute the terminal $s \in S$ with minimal value of $\localcut(s).$
This gives us a Steiner minimum cut. If the corresponding cut
is a nontrivial $S$-separating cut then we recurse as in Line~\ref{ln:Return}. Otherwise, we perform
the loop in Line~\ref{ln:Condition}.

We now give implementation details for computing the terminal $s \in S$ with minimal value of $\localcut(s)$.
By Lemma \ref{lem:CutMonotone} we can see that for a terminal $s$, $\localcut(s)$ is monotone
throughout the algorithm. For each terminal $s$, our algorithm records the previous value of $\localcut(s)$ computed.
Because this value is monotone, we need only check vertices $s$ whose value of $\localcut(s)$ could still possibly be minimal.
Now, either $\localcut(s)$ is certified to be minimal among all $s$, or the value of $\localcut(s)$ is higher than
the previously recorded value. Note that this can only occur $O(c|S|)$ times, as we stop processing a vertex $s$ if $\localcut(s) > c$.

We now analyze the runtime. We first bound the runtime from the cases $|S| \le 500c^2\phi^{-1}$. The total number of vertices and edges
in the leaves of the recursion tree is at most $O(mc).$ Therefore, by Lemma \ref{lem:StrongerVersion}, the total runtime from these is
at most \[ \O(500c^2\phi^{-1} \cdot mc \cdot c^4) = \O(m\phi^{-1}c^7). \]

Now, the loop of Line~\ref{ln:Condition} can only execute $c\phi^{-1}$ times, because the volume of any $s$-isolating cut has size at most
$c\phi^{-1}.$ Each iteration of the loop requires $\O(c^3\phi^{-1})$ time by Lemma \ref{lem:ExpanderCutFaster}. Therefore, the total runtime of executing the loop and calls to it is bounded by
\[ \O\left(c|S| \cdot c\phi^{-1} \cdot c^3\phi^{-1} \right) = \O(|S|\phi^{-2}c^6). \] Combining these shows Theorem~\ref{thm:IntersectingEdges}
	Part~\ref{part:Faster}.
\end{proof}

\section*{Acknowledgements}

We thank Gramoz Goranci, Jakub Lacki, Thatchaphol Saranurak,
and Xiaorui Sun for multiple enlightening discussions on this topic.
In particular, the graph partitioning based efficient constructions in
Section~\ref{sec:Expander} are directly motivated by conversations
with Xiaorui.
We also thank Paraniya Chamelsook, Syamantak Das, Bundit Laekhanukit,
Antonio Molina,  Bryce Sandlund, and Daniel Vaz
for communicating with us about unpublished related results.

\bibliographystyle{alpha}

\newcommand{\etalchar}[1]{$^{#1}$}

%\bibliography{refs.bib}

\begin{thebibliography}{ADK{\etalchar{+}}16}

\bibitem[ADK{\etalchar{+}}16]{AbrahamDKKP16}
I.~{Abraham}, D.~{Durfee}, I.~{Koutis}, S.~{Krinninger}, and R.~{Peng}.
\newblock On fully dynamic graph sparsifiers.
\newblock In {\em 2016 IEEE 57th Annual Symposium on Foundations of Computer
  Science (FOCS)}, pages 335--344, Oct 2016.

\bibitem[AKLT15]{AssadiKYV15}
Sepehr Assadi, Sanjeev Khanna, Yang Li, and Val Tannen.
\newblock Dynamic sketching for graph optimization problems with applications
  to cut-preserving sketches.
\newblock In {\em FSTTCS}, 2015.
\newblock Available at \url{https://arxiv.org/abs/1510.03252}.

\bibitem[CH03]{ColeH03}
Richard Cole and Ramesh Hariharan.
\newblock A fast algorithm for computing steiner edge connectivity.
\newblock In {\em Proceedings of the 35th Annual {ACM} Symposium on Theory of
  Computing, June 9-11, 2003, San Diego, CA, {USA}}, pages 167--176, 2003.

\bibitem[CLLM10]{CharikarLLM10}
Moses Charikar, Tom Leighton, Shi Li, and Ankur Moitra.
\newblock Vertex sparsifiers and abstract rounding algorithms.
\newblock In {\em 51th Annual {IEEE} Symposium on Foundations of Computer
  Science, {FOCS} 2010, October 23-26, 2010, Las Vegas, Nevada, {USA}}, pages
  265--274, 2010.

\bibitem[DGGP19]{DurfeeGGP19}
David Durfee, Yu~Gao, Gramoz Goranci, and Richard Peng.
\newblock Fully dynamic spectral vertex sparsifiers and applications.
\newblock In {\em Proceedings of the 51st Annual {ACM} {SIGACT} Symposium on
  Theory of Computing, {STOC} 2019, Phoenix, AZ, USA, June 23-26, 2019.}, pages
  914--925, 2019.
\newblock Available at: \url{https://arxiv.org/abs/1906.10530}.

\bibitem[DKP{\etalchar{+}}17]{DurfeeKPRS17}
David Durfee, Rasmus Kyng, John Peebles, Anup~B Rao, and Sushant Sachdeva.
\newblock Sampling random spanning trees faster than matrix multiplication.
\newblock In {\em Proceedings of the 49th Annual ACM SIGACT Symposium on Theory
  of Computing}, pages 730--742. ACM, 2017.
\newblock Available at: \url{https://arxiv.org/abs/1611.07451}.

\bibitem[DV94]{DinitzV94}
Yefim Dinitz and Alek Vainshtein.
\newblock The connectivity carcass of a vertex subset in a graph and its
  incremental maintenance.
\newblock In {\em Proceedings of the twenty-sixth annual ACM symposium on
  Theory of computing}, pages 716--725. ACM, 1994.

\bibitem[DV95]{DinitzV95}
Ye. Dinitz and A.~Vainshtein.
\newblock Locally orientable graphs, cell structures, and a new algorithm for
  the incremental maintenance of connectivity carcasses.
\newblock In {\em Proceedings of the Sixth Annual ACM-SIAM Symposium on
  Discrete Algorithms}, SODA '95, pages 302--311, 1995.

\bibitem[DW98]{DinitzW98}
Yefim Dinitz and Jeffery~R. Westbrook.
\newblock Maintaining the classes of 4-edge-connectivity in a graph on-line.
\newblock {\em Algorithmica}, 20(3):242--276, 1998.

\bibitem[EGK{\etalchar{+}}14]{EnglertGKRTT14}
Matthias Englert, Anupam Gupta, Robert Krauthgamer, Harald Racke, Inbal
  Talgam{-}Cohen, and Kunal Talwar.
\newblock Vertex sparsifiers: New results from old techniques.
\newblock {\em {SIAM} J. Comput.}, 43(4):1239--1262, 2014.

\bibitem[Epp94]{Eppstein94}
David Eppstein.
\newblock Offline algorithms for dynamic minimum spanning tree problems.
\newblock {\em J. Algorithms}, 17(2):237--250, September 1994.

\bibitem[FH75]{FulkersonH75}
Delbert~Ray Fulkerson and Gary Harding.
\newblock On edge-disjoint branchings.
\newblock Technical report, CORNELL UNIV ITHACA NY DEPT OF OPERATIONS RESEARCH,
  1975.

\bibitem[FHKQ16]{FafianieHKQ16}
Stefan Fafianie, Eva{-}Maria~C. Hols, Stefan Kratsch, and Vuong~Anh Quyen.
\newblock Preprocessing under uncertainty: Matroid intersection.
\newblock In {\em 41st International Symposium on Mathematical Foundations of
  Computer Science, {MFCS} 2016, August 22-26, 2016 - Krak{\'{o}}w, Poland},
  pages 35:1--35:14, 2016.
\newblock Available at \url{https://core.ac.uk/download/pdf/62922404.pdf}.

\bibitem[FKQ16]{FafianieKQ16}
Stefan Fafianie, Stefan Kratsch, and Vuong~Anh Quyen.
\newblock Preprocessing under uncertainty.
\newblock In {\em 33rd Symposium on Theoretical Aspects of Computer Science
  (STACS 2016)}, volume~47, pages 33:1--33:13, 2016.

\bibitem[FY19]{ForsterY19:arxiv}
Sebastian Forster and Liu Yang.
\newblock A faster local algorithm for detecting bounded-size cuts with
  applications to higher-connectivity problems.
\newblock {\em CoRR}, abs/1904.08382, 2019.

\bibitem[GHP17a]{GoranciHP17b}
Gramoz Goranci, Monika Henzinger, and Pan Peng.
\newblock Improved guarantees for vertex sparsification in planar graphs.
\newblock In {\em 25th Annual European Symposium on Algorithms, {ESA} 2017,
  September 4-6, 2017, Vienna, Austria}, pages 44:1--44:14, 2017.
\newblock Available at: \url{https://arxiv.org/abs/1702.01136}.

\bibitem[GHP17b]{GoranciHP17a}
Gramoz Goranci, Monika Henzinger, and Pan Peng.
\newblock The power of vertex sparsifiers in dynamic graph algorithms.
\newblock In {\em 25th Annual European Symposium on Algorithms, {ESA} 2017,
  September 4-6, 2017, Vienna, Austria}, pages 45:1--45:14, 2017.
\newblock Available at: \url{https://arxiv.org/abs/1712.06473}.

\bibitem[GHP18]{GoranciHP18}
Gramoz Goranci, Monika Henzinger, and Pan Peng.
\newblock Dynamic effective resistances and approximate schur complement on
  separable graphs.
\newblock In {\em 26th Annual European Symposium on Algorithms, {ESA} 2018,
  August 20-22, 2018, Helsinki, Finland}, pages 40:1--40:15, 2018.
\newblock Available at: \url{https://arxiv.org/abs/1802.09111}.

\bibitem[GI91]{GalilI91}
Zvi Galil and Giuseppe~F Italiano.
\newblock Fully dynamic algorithms for edge connectivity problems.
\newblock In {\em Proceedings of the twenty-third annual ACM symposium on
  Theory of computing}, pages 317--327. ACM, 1991.

\bibitem[GR16]{GoranciR16}
Gramoz Goranci and Harald R{\"{a}}cke.
\newblock Vertex sparsification in trees.
\newblock In {\em Approximation and Online Algorithms - 14th International
  Workshop, {WAOA} 2016, Aarhus, Denmark, August 25-26, 2016, Revised Selected
  Papers}, pages 103--115, 2016.
\newblock Available at: \url{https://arxiv.org/abs/1612.03017}.

\bibitem[HdLT98]{HolmLT98}
Jacob Holm, Kristian de~Lichtenberg, and Mikkel Thorup.
\newblock Poly-logarithmic deterministic fully-dynamic algorithms for
  connectivity, minimum spanning tree, 2-edge, and biconnectivity.
\newblock In {\em Proceedings of the thirtieth annual ACM symposium on Theory
  of computing}, STOC '98, pages 79--89, New York, NY, USA, 1998. ACM.

\bibitem[KL15]{KaczmarzL15}
Adam Karczmarz and Jakub Lacki.
\newblock Fast and simple connectivity in graph timelines.
\newblock In Frank Dehne, Jorg-Rudiger Sack, and Ulrike Stege, editors, {\em
  Algorithms and Data Structures: 14th International Symposium, WADS 2015},
  pages 458--469, 2015.

\bibitem[KLP{\etalchar{+}}16]{KyngLPSS16}
Rasmus Kyng, Yin~Tat Lee, Richard Peng, Sushant Sachdeva, and Daniel~A
  Spielman.
\newblock Sparsified cholesky and multigrid solvers for connection laplacians.
\newblock In {\em Proceedings of the 48th Annual ACM SIGACT Symposium on Theory
  of Computing}, pages 842--850. ACM, 2016.
\newblock Available at http://arxiv.org/abs/1512.01892.

\bibitem[Kop12]{Kopeliovich12:thesis}
Sergey Kopeliovich.
\newblock Offline solution of connectivity and $2$-edge-connectivity problems
  for fully dynamic graphs, 2012.
\newblock Available at
  \url{http://se.math.spbu.ru/SE/diploma/2012/s/Kopeliovich\_diploma.pdf}.

\bibitem[KR17]{KrauthgamerR17}
Robert Krauthgamer and Inbal Rika.
\newblock Refined vertex sparsifiers of planar graphs.
\newblock {\em CoRR}, abs/1702.05951, 2017.

\bibitem[KS16]{KyngS16}
Rasmus Kyng and Sushant Sachdeva.
\newblock Approximate gaussian elimination for laplacians - fast, sparse, and
  simple.
\newblock In {\em {IEEE} 57th Annual Symposium on Foundations of Computer
  Science, {FOCS} 2016, 9-11 October 2016, Hyatt Regency, New Brunswick, New
  Jersey, {USA}}, pages 573--582, 2016.
\newblock Available at http://arxiv.org/abs/1605.02353.

\bibitem[KW12]{Kratsch12}
Stefan Kratsch and Magnus Wahlstrom.
\newblock Representative sets and irrelevant vertices: New tools for
  kernelization.
\newblock In {\em Proceedings of the 2012 IEEE 53rd Annual Symposium on
  Foundations of Computer Science}, FOCS '12, pages 450--459, 2012.
\newblock Available at \url{https://arxiv.org/abs/1111.2195}.

\bibitem[LS13]{LackiS13}
Jackub Lacki and Piotr Sankowski.
\newblock Reachability in graph timelines.
\newblock {\em ITCS}, 2013.

\bibitem[MM10]{MakarychevM10}
Konstantin Makarychev and Yury Makarychev.
\newblock Metric extension operators, vertex sparsifiers and lipschitz
  extendability.
\newblock In {\em 51th Annual {IEEE} Symposium on Foundations of Computer
  Science, {FOCS} 2010, October 23-26, 2010, Las Vegas, Nevada, {USA}}, pages
  255--264, 2010.

\bibitem[MS18]{MolinaS18:unpublished}
Antonio Molina and Bryce Sandlund.
\newblock Historical optimization with applications to dynamic higher edge
  connectivity.
\newblock 2018.

\bibitem[NI92]{NagamochiI92}
Hiroshi Nagamochi and Toshihide Ibaraki.
\newblock A linear-time algorithm for finding a sparsek-connected spanning
  subgraph of ak-connected graph.
\newblock {\em Algorithmica}, 7(1-6):583--596, 1992.

\bibitem[NSY19a]{NanongkaiSY19a}
Danupon Nanongkai, Thatchaphol Saranurak, and Sorrachai Yingchareonthawornchai.
\newblock Breaking quadratic time for small vertex connectivity and an
  approximation scheme.
\newblock In {\em Proceedings of the 51st Annual {ACM} {SIGACT} Symposium on
  Theory of Computing, {STOC} 2019, Phoenix, AZ, USA, June 23-26, 2019.}, pages
  241--252, 2019.
\newblock Available at: \url{https://arxiv.org/abs/1904.04453}.

\bibitem[NSY19b]{NanongkaiSY19b:arxiv}
Danupon Nanongkai, Thatchaphol Saranurak, and Sorrachai Yingchareonthawornchai.
\newblock Computing and testing small vertex connectivity in near-linear time
  and queries.
\newblock {\em CoRR}, abs/1905.05329, 2019.
\newblock Available at : \url{http://arxiv.org/abs/1905.05329}.

\bibitem[PSS19]{PengSS19}
Richard Peng, Bryce Sandlund, and Daniel~Dominic Sleator.
\newblock Optimal offline dynamic 2, 3-edge/vertex connectivity.
\newblock In {\em Algorithms and Data Structures - 16th International
  Symposium, {WADS} 2019, Edmonton, AB, Canada, August 5-7, 2019, Proceedings},
  pages 553--565, 2019.
\newblock Available at: \url{https://arxiv.org/abs/1708.03812}.

\bibitem[SW19]{SaranurakW19}
Thatchaphol Saranurak and Di~Wang.
\newblock Expander decomposition and pruning: Faster, stronger, and simpler.
\newblock In {\em {SODA}}, pages 2616--2635. {SIAM}, 2019.
\newblock Available at: \url{https://arxiv.org/abs/1812.08958}.

\end{thebibliography}

\end{document}